\documentclass[a4paper]{article}

\usepackage{relsize}
\usepackage[utf8]{inputenc}
\usepackage{amsfonts}
\usepackage{amsmath}
    \usepackage{amssymb}
    \usepackage{bm}
\usepackage{amsthm}
\usepackage{amsbsy}
\usepackage{mathrsfs}
\usepackage{enumerate}
\usepackage{amsmath}
\usepackage{url}
\usepackage{ebproof}
\usepackage{proof}
\usepackage{color}
\usepackage{graphicx}
\usepackage{ebproof}
\usepackage{cmll}
\usepackage{stmaryrd}
\usepackage{proof}
\usepackage{wrapfig}
\usepackage{bm} 
\usepackage{diagrams}
\usepackage[hidelinks]{hyperref}
\usepackage[capitalize]{cleveref}
\Crefname{enumi}{Point}{Points}
\crefname{enumi}{Point}{Points}
\Crefname{lem}{Lemma}{Lemmas}
\crefname{lem}{Lemma}{Lemmas}
\Crefname{thmm}{Theorem}{Theorems}
\crefname{thmm}{Theorem}{Theorems}

\usepackage{thmtools,thm-restate}
\usepackage{mathtools}
\usepackage{xspace}
\usepackage{cancel}
\usepackage[normalem]{ulem}


\newcommand{\CF}[1]{{\color{violet}{**CF: #1 **}}}

\newtheorem{defn}{Definition}[section]
\newtheorem{fact}[defn]{Fact}
\newtheorem{thmm}[defn]{Theorem}
\newtheorem{coroll}[defn]{Corollary}
\newtheorem{lem}[defn]{Lemma}
\newtheorem{prop}[defn]{Proposition}
\newtheorem{exmpl}[defn]{Example}
\newtheorem{property}[defn]{Fact}
\newtheorem{rem}[defn]{Remark}
\newtheorem*{notation*}{Notation}

\newtheorem*{theorem*}{Theorem}
\newcommand{\deff}{\,\coloneqq\,}
\newcommand{\eq}{\,=\,}

\newcommand{\ssubseteq}{\, \subseteq \,}
\newcommand{\defeq}{\coloneqq}
\newcommand{\eqdef}{\eqqcolon}

\renewcommand{\l}{\lambda}
\newcommand{\la}[1]{\lambda #1.}

\newcommand{\TFlambdaComp}{\lambda_{\normalfont\scalebox{.6}{\copyright}}}
\newcommand{\FV}{\mathsf{fv}}

\newcommand{\Unit}{\mbox{\it unit}\;}

\newcommand{\Bind}{\star}

\newcommand{\Betac}{\beta_c}
\newcommand{\BindRight}{\textit{id}}
\newcommand{\BindLeft}{\sigma}
\newcommand{\Subst}[2]{[ #1 / #2]}

\newcommand{\ValTerm}{\textit{Val}}
\newcommand{\ComTerm}{\textit{Com}}
\newcommand{\Term}{\textit{Term}}

\newcommand{\Red}{\red}
\newcommand{\RedStar}{\red^*}


\newcommand{\mlsym}{\,ml^*}
\newcommand{\ml}{\lambda_{ml^*}}

\newcommand{\redlc}{\red_{\normalfont\scalebox{.6}{\copyright}}}
\newcommand{\redlcStar}{\red^*_{\normalfont\scalebox{.6}{\copyright}}}
\newcommand{\redlceta}{\red_{\normalfont\scalebox{.6}{\copyright}\eta}}
\newcommand{\redlcetaStar}{\red^*_{\normalfont\scalebox{.6}{\copyright}\eta}}

\newcommand{\eqlceta}{=_{\normalfont\scalebox{.6}{\copyright}\eta}}
\newcommand{\eqlc}{=_{\normalfont\scalebox{.6}{\copyright}}}
\newcommand{\eqlm}{=_{\textit{ml*}}}

\newcommand{\CCsym}{{\mathsmaller{\copyright}}}

\newcommand{\lc}{\lambda_{\normalfont\scalebox{.6}{\copyright}}}
\newcommand{\lcc}{\normalfont\scalebox{.6}{\copyright}}

\newcommand{\Bang}{\Lambda^!}

\newcommand{\Com}{\ComTerm}

\newcommand{\abs}{\mathtt{Abs}}

\newcommand{\II}{\mathbf{I}}

\newcommand{\bI}{\pmb{I}}
\newcommand{\bDelta}{\pmb{\Delta}}
\newcommand{\Rule}{\rho}

\newcommand{\id}{\mathsf{id}}
\newcommand{\ii}{\iota}

\newcommand{\betac}{\beta_c}
\newcommand{\redbc}{\red_{\betac}}
\newcommand{\sredbc}  {\mathrel{\sredx{\betac}}}
\newcommand{\nsredbc}{\mathrel{\nsredx{\betac}}}
\newcommand{\betab}{\beta_\oc}
\newcommand{\bbeta}{\betab}
\newcommand{\betav}{\beta_v}
\newcommand{\bs}{\betac\sigma}

\newcommand{\xrevredx}[2] {\mathrel{{\uset{#1}{\leftarrow}}{}_{\mkern-3mu#2}}}
\newcommand{\revred}{\leftarrow}

\newcommand{\rredbb}{\mapsto_{\bbeta}}
\newcommand{\rredc}{\mapsto_{\gamma}}
\newcommand{\xredx}[2]{{\mathrel{\uset{#1~}{\red}}{}_{\mkern-8mu#2}}}
\newcommand{\nxredx}[2]{{\mathrel{\uset{#1~}{\nrightarrow}}{}_{\mkern-8mu#2}}}
\newcommand{\redi}{\red_{\ii}}

\newcommand{\redid}{\redx{\id}}
\newcommand{\sredid}{\sredx{\id}}

\newcommand{\wredbc}{\wredx{\betac}}

\newcommand{\sreds}{\sredx{\sigma}}
\newcommand{\wreds}{\wredx{\sigma}}
\newcommand{\rredbc}{\Root{\betac}}


\newcommand{\cc}{\textsf {C}}  
\newcommand{\vc}{\textsf {V}}  
\newcommand{\gc}{\textsf {G}}
\newcommand{\ww}{\textsf {W}}  
\renewcommand{\ss}{\textsf {S}} 
\newcommand{\leftc}{\textsf {L}} 
\newcommand{\rightc}{\textsf {R}} 
\newcommand{\ee}{\textsf {E}}  

\newcommand{\letsf}{\mathsf{let}}
\newcommand{\insf}{\mathsf{in}}
\newcommand{\Let}[3]{\letsf \, #1 \!\defeq\! #2 \, \insf \, #3}

\newcommand{\LetSym}{\textsf{let}}
\newcommand{\tlet}{\LetSym}


\newcommand{\Var}{\textit{Var}}


\newcommand{\traa}[1]{\traap{(#1)}}
\newcommand{\trb}[1]{\trbp{(#1)}}
\newcommand{\trc}[1]{\trcp{(#1)}}
\newcommand{\trap}[1]{#1^\dagger}
\newcommand{\traap}[1]{#1^\ddagger}
\newcommand{\trbp}[1]{#1^\bullet}
\newcommand{\trcp}[1]{#1^\circ}


\makeatletter
\newcommand{\circlearrow}{}
\DeclareRobustCommand{\circlearrow}{%
	\mathrel{\vphantom{\rightarrow}\mathpalette\circle@arrow\relax}%
}
\newcommand{\circle@arrow}[2]{%
	\m@th
	\ooalign{%
		\hidewidth$#1\circ\mkern1mu$\hidewidth\cr
		$#1\longrightarrow$\cr}%
}
\makeatother


\newcommand{\LP}[2]{\mathtt{SP(#1,#2)}}
\newcommand{\F}[2]{\mathtt{Fact(#1,#2)}}
\newcommand{\PP}[2]{\mathtt{PP(#1,#2)}}


\newcommand{\ex}{\mathsf {e}}
\renewcommand{\int}{\mathsf{i}}

\newcommand{\weak}{\mathsf{w} }
\newcommand{\surf}{\mathsf{s}}
\newcommand{\lsym}{\mathsf{l}}
\newcommand{\rsym}{\mathsf{r}}

\newcommand{\lex}{\mathsf{le}}









\newcommand{\Nat}{\mathbb{N}}







\usepackage{xcolor}

\newcommand{\RED}[1]{{\color{red}{#1}}}

\newcommand{\violet}[1]{{\color{violet}{#1}}}



\newcommand{\fv}[1]{\FV(#1)}
\newcommand{\size}[1]{\mathsf{s}(#1)}
\newcommand{\sizeaux}[1]{\mathsf{s}_\sigma(#1)}


\newcommand{\hole}[1]{\langle #1\rangle}


\newcommand{\tmtwo}{u}

\newcommand{\tm}{t}
\newcommand{\tms}{s}
\newcommand{\tmu}{u}

\newcommand{\normal}{\mathsf{N}}
\newcommand{\normalweak}{\normal_\ww}
\newcommand{\normalsurface}{\normal_\ss}
\newcommand{\normalfull}{\normal}
\newcommand{\appnormal}{\mathsf{A}}
\newcommand{\appnormalweak}{\appnormal_\ww}
\newcommand{\appnormalsurface}{\appnormal_\ss}
\newcommand{\appnormalfull}{\appnormal}

\renewcommand{\to}{\xrightarrow{}}

\newcommand{\Root}[1]{\mapsto_{#1}}

\newcommand{\redx}[1]{\mathrel{\red{}_{\mkern-8mu#1}}}

\newcommand{\red}{\rightarrow}

\newcommand{\ered}{{\xredx{\ex}{}}}
\newcommand{\ired}{{\xredx{\int}{}}}

\newcommand{\nered}{\uset{\neg \ex~}{\red}}
\newcommand{\neredx}[1]{\mathrel{\nered{}_{\mkern-8mu#1}}}

\newcommand{\wred}{{\xredx{\weak}{}}}
\newcommand{\nwred}{\xredx{\neg \weak}{}}
\newcommand{\wredx}[1]  {\xredx{\weak}{#1}}
\newcommand{\nwredx}[1]  {\xredx{\neg\weak}{#1}}
\newcommand{\wredbv}{\wredx{\betav}}

\newcommand{\lredbv}{\mathrel{\xredx{\lsym}{\betav}}}
\newcommand{\nlredbv}{\mathrel{\xredx{\neg\lsym~}{\betav}}}

\newcommand{\sred}{\xredx{\surf}{}}

\newcommand{\sredx}[1]{\xredx{\surf}{#1}}
\newcommand{\nsredx}[1]{\xredx{\neg\surf}{#1}}

\newcommand{\xredxStar}[2]{{\mathrel{\uset{#1~}{\red}}{}^{{\mkern-8mu{*}}}_{\mkern-8mu#2}}}
\newcommand{\xrevredxStar}[2]{{\mathrel{\uset{#1~}{\leftarrow}}{}^{{\mkern-8mu{*}}}_{\mkern-8mu#2}}}

\newcommand{\wredxStar}[1]  {\xredxStar{\weak}{#1}\,}
\newcommand{\nwredxStar}[1]  {\xredxStar{\neg\weak}{#1}\,}

\newcommand{\sredxStar}[1]{\xredxStar{\surf}{#1}\, }
\newcommand{\nsredxStar}[1]{\xredxStar{\neg\surf}{#1}\, }

\newcommand{\sredbb}  {\mathrel{\sredx{\bbeta}}}
\newcommand{\nsredbb}{\mathrel{\nsredx{\bbeta}}}

\newcommand{\redbv}{\red_{\beta_v}}

\newcommand{\redb}{\rightarrow_{\beta}}
\newcommand{\eredx}[1]  {\mathrel{\ered{}_{\,\mkern-8mu#1}}}
\newcommand{\iredx}[1]  {\mathrel{\ired{}_{\mkern-8mu#1}}}

\newcommand{\redbb}{\rightarrow_{\bbeta}}

\newcommand{\redbcs}{\rightarrow_{\betac\sigma}}
\newcommand{\reda}{\red_{\xi}}
\newcommand{\nsreda}{\nsredx{\xi}}
\newcommand{\nsredc}{\nsredx{\gamma}}

\newcommand{\sredc}{\sredx{\gamma}}
\newcommand{\rreda}{\mapsto_{\xi}}
\newcommand{\ereda}  {\mathrel{\eredx{\xi}}}

\newcommand{\ireda}  {\mathrel{\iredx{\xi}}}

\newcommand{\redc}  {\red_{\gamma}}
\newcommand{\eredc} {\mathrel {\eredx{\gamma}}}
\newcommand{\iredc}  {\mathrel{\iredx{\gamma}}}

\newcommand{\reds} {\red_{\sigma}}


%
%
%

\newcommand{\lered}{\xredx{\lex}{}}
\newcommand{\leredx}[1]{\xredx{\lex}{#1}}


\newcommand{\lam}{\lambda}

\newcommand{\ie}{\textit{i.e.}\xspace}
\newcommand{\eg}{\textit{e.g.}\xspace}
\newcommand{\ih}{\textit{i.h.}\xspace}

\newcommand{\subs}[2]{ [#2/#1] }



\makeatletter
\newcommand{\uset}[3][0ex]{%
	\mathrel{\mathop{#3}\limits_{
			\vbox to#1{\kern-6\ex@
				\hbox{$\scriptstyle#2$}\vss}}}}
\makeatother

%



\RequirePackage{ifthen}

\newcommand{\version}{0}
\newcommand{\commenti}{0} 
\newcommand{\condinc}[2]{\ifthenelse{\equal{\commenti}{0}}{#1}{**\violet{#2}} }
\newcommand{\SLV}[2]{\ifthenelse{\equal{\version}{0}}{#1}{ \RED{#2}}}

\message{<Paul Taylor's Proof Trees, 2 August 1996>}

\newdimen\proofrulebreadth \proofrulebreadth=.05em
\newdimen\proofdotseparation \proofdotseparation=1.25ex
\newdimen\proofrulebaseline \proofrulebaseline=2ex
\newcount\proofdotnumber \proofdotnumber=3
\let\then\relax
\def\hfi{\hskip0pt plus.0001fil}
\mathchardef\squigto="3A3B
%
\newif\ifinsideprooftree\insideprooftreefalse
\newif\ifonleftofproofrule\onleftofproofrulefalse
\newif\ifproofdots\proofdotsfalse
\newif\ifdoubleproof\doubleprooffalse
\let\wereinproofbit\relax
%
\newdimen\shortenproofleft
\newdimen\shortenproofright
\newdimen\proofbelowshift
\newbox\proofabove
\newbox\proofbelow
\newbox\proofrulename
%
\def\shiftproofbelow{\let\next\relax\afterassignment\setshiftproofbelow\dimen0 }
\def\shiftproofbelowneg{\def\next{\multiply\dimen0 by-1 }%
\afterassignment\setshiftproofbelow\dimen0 }
\def\setshiftproofbelow{\next\proofbelowshift=\dimen0 }
\def\setproofrulebreadth{\proofrulebreadth}

\def\prooftree{
%
\ifnum  \lastpenalty=1
\then   \unpenalty
\else   \onleftofproofrulefalse
\fi
%
\ifonleftofproofrule
\else   \ifinsideprooftree
        \then   \hskip.5em plus1fil
        \fi
\fi
%
\bgroup
\setbox\proofbelow=\hbox{}\setbox\proofrulename=\hbox{}%
\let\justifies\proofover\let\leadsto\proofoverdots\let\Justifies\proofoverdbl
\let\using\proofusing\let\[\prooftree
\ifinsideprooftree\let\]\endprooftree\fi
\proofdotsfalse\doubleprooffalse
\let\thickness\setproofrulebreadth
\let\shiftright\shiftproofbelow \let\shift\shiftproofbelow
\let\shiftleft\shiftproofbelowneg
\let\ifwasinsideprooftree\ifinsideprooftree
\insideprooftreetrue
%
\setbox\proofabove=\hbox\bgroup$\displaystyle 
\let\wereinproofbit\prooftree
%
\shortenproofleft=0pt \shortenproofright=0pt \proofbelowshift=0pt
%
\onleftofproofruletrue\penalty1
}

\def\eproofbit{
%
\ifx    \wereinproofbit\prooftree
\then   \ifcase \lastpenalty
        \then   \shortenproofright=0pt  
        \or     \unpenalty\hfil         
        \or     \unpenalty\unskip       
        \else   \shortenproofright=0pt  
        \fi
\fi
%
\global\dimen0=\shortenproofleft
\global\dimen1=\shortenproofright
\global\dimen2=\proofrulebreadth
\global\dimen3=\proofbelowshift
\global\dimen4=\proofdotseparation
\global\count255=\proofdotnumber
%
$\egroup  
%
\shortenproofleft=\dimen0
\shortenproofright=\dimen1
\proofrulebreadth=\dimen2
\proofbelowshift=\dimen3
\proofdotseparation=\dimen4
\proofdotnumber=\count255
}

\def\proofover{
\eproofbit 
\setbox\proofbelow=\hbox\bgroup 
\let\wereinproofbit\proofover
$\displaystyle
}%
%
\def\proofoverdbl{
\eproofbit 
\doubleprooftrue
\setbox\proofbelow=\hbox\bgroup 
\let\wereinproofbit\proofoverdbl
$\displaystyle
}%
%
\def\proofoverdots{
\eproofbit 
\proofdotstrue
\setbox\proofbelow=\hbox\bgroup 
\let\wereinproofbit\proofoverdots
$\displaystyle
}%
%
\def\proofusing{
\eproofbit 
\setbox\proofrulename=\hbox\bgroup 
\let\wereinproofbit\proofusing
\kern0.3em$
}

\def\endprooftree{
\eproofbit 
  \dimen5 =0pt
%
\dimen0=\wd\proofabove \advance\dimen0-\shortenproofleft
\advance\dimen0-\shortenproofright
%
\dimen1=.5\dimen0 \advance\dimen1-.5\wd\proofbelow
\dimen4=\dimen1
\advance\dimen1\proofbelowshift \advance\dimen4-\proofbelowshift
%
\ifdim  \dimen1<0pt
\then   \advance\shortenproofleft\dimen1
        \advance\dimen0-\dimen1
        \dimen1=0pt
        \ifdim  \shortenproofleft<0pt
        \then   \setbox\proofabove=\hbox{%
                        \kern-\shortenproofleft\unhbox\proofabove}%
                \shortenproofleft=0pt
        \fi
\fi
%
\ifdim  \dimen4<0pt
\then   \advance\shortenproofright\dimen4
        \advance\dimen0-\dimen4
        \dimen4=0pt
\fi
%
\ifdim  \shortenproofright<\wd\proofrulename
\then   \shortenproofright=\wd\proofrulename
\fi
%
\dimen2=\shortenproofleft \advance\dimen2 by\dimen1
\dimen3=\shortenproofright\advance\dimen3 by\dimen4
%
\ifproofdots
\then
        \dimen6=\shortenproofleft \advance\dimen6 .5\dimen0
        \setbox1=\vbox to\proofdotseparation{\vss\hbox{$\cdot$}\vss}%
        \setbox0=\hbox{%
                \advance\dimen6-.5\wd1
                \kern\dimen6
                $\vcenter to\proofdotnumber\proofdotseparation
                        {\leaders\box1\vfill}$%
                \unhbox\proofrulename}%
\else   \dimen6=\fontdimen22\the\textfont2 
        \dimen7=\dimen6
        \advance\dimen6by.5\proofrulebreadth
        \advance\dimen7by-.5\proofrulebreadth
        \setbox0=\hbox{%
                \kern\shortenproofleft
                \ifdoubleproof
                \then   \hbox to\dimen0{%
                        $\mathsurround0pt\mathord=\mkern-6mu%
                        \cleaders\hbox{$\mkern-2mu=\mkern-2mu$}\hfill
                        \mkern-6mu\mathord=$}%
                \else   \vrule height\dimen6 depth-\dimen7 width\dimen0
                \fi
                \unhbox\proofrulename}%
        \ht0=\dimen6 \dp0=-\dimen7
\fi
%
\let\doll\relax
\ifwasinsideprooftree
\then   \let\VBOX\vbox
\else   \ifmmode\else$\let\doll=$\fi
        \let\VBOX\vcenter
\fi
\VBOX   {\baselineskip\proofrulebaseline \lineskip.2ex
        \expandafter\lineskiplimit\ifproofdots0ex\else-0.6ex\fi
        \hbox   spread\dimen5   {\hfi\unhbox\proofabove\hfi}%
        \hbox{\box0}%
        \hbox   {\kern\dimen2 \box\proofbelow}}\doll%
%
\global\dimen2=\dimen2
\global\dimen3=\dimen3
\egroup 
\ifonleftofproofrule
\then   \shortenproofleft=\dimen2
\fi
\shortenproofright=\dimen3
%
\onleftofproofrulefalse
\ifinsideprooftree
\then   \hskip.5em plus 1fil \penalty2
\fi
}


\usepackage{authblk}

\title{On reduction and normalization in the computational core }
\author[1]{Claudia Faggian}
\author[2]{Giulio Guerrieri}
\author[3]{Ugo de'Liguoro}
\author[4]{Riccardo Treglia}

\affil[1]{\small Université de Paris Cité, IRIF, CNRS, F-75013 Paris, France\\
	{\tt faggian@irif.fr }}
\affil[2]{\small Aix Marseille Univ, CNRS, LIS UMR 7020, Marseille, France\\
	{\tt giulio.guerrieri@lis-lab.com}}
\affil[3]{\small Universit\`a di Torino, Department of Computer Science, Turin, Italy\\
	{\tt ugo.deliguoro@unito.it}}
\affil[4]{\small Universit\`a di Bologna, Department of Computer Science, Bologna, Italy\\
	{\tt riccardo.treglia@unibo.it}}
\date{\vspace{-5ex}}


\begin{document}
	
	\maketitle
	
	\begin{abstract}
	We study the reduction in a $\lambda$-calculus derived from Moggi's computational one, which we call the computational core.
	The reduction relation consists of rules obtained by orienting three monadic laws. Such laws, in particular associativity and identity, introduce intricacies in the operational analysis.	We investigate the central notions of returning a value versus having a normal form, and address the question of normalizing strategies. 
	Our analysis relies on factorization results.
	\end{abstract}


\newcommand{\Kcomp}{\bullet}
\newcommand{\name}{computational core\xspace}

\section{Introduction}
\label{sect:intro}

The $\lambda$-calculus has been historically conceived as an equational theory of functions, so that reduction
had an ancillary role in Church's view, and it was a tool for studying the theory $\beta$, see \cite[Ch.~3]{Barendregt'84}. 
%
The development of functional programming languages like Lisp and ML, and of proof assistants like LCF, has brought  a new, different  interest in the
$\lambda$-calculus and its \emph{reduction} theory.

 The cornerstone of this change in perspective  is Plotkin's
\cite{Plotkin'75}, where the functional parameter passing mechanism 
is formalized by the {\em call-by-value rewrite rule $\beta_v$}, allowing reduction
only if the argument term is a {\em value}, that is a variable or an abstraction. In \cite{Plotkin'75} it is also introduced
the notion  of {\em weak evaluation}, namely no reduction in the body of a function (\ie, of an abstraction).
This  is now  the standard evaluation  implemented by functional programming languages, where  \emph{values} are the terms of interest (and the normal forms for weak evaluation in the closed case).
Full $\betav$ reduction is instead the basis of  proof assistants like Coq, where  \emph{normal forms} are the result of interest. More generally,  the computational perspective on  
$\lambda$-calculus has given a central role to reduction, whose theory  provides  
a  sound framework for reasoning about program transformations, such as compiler optimizations or parallel implementations.

The rich variety of computational effects in actual implementations of functional programming languages brings  further challenges. This dramatically affects
the theory of reduction of the calculi formalizing such features, whose proliferation makes it difficult to focus on 
suitably general issues. A major change here is the discovery by Moggi \cite{Moggi'88, Moggi'89, Moggi'91} of a whole family of calculi that are based on a few common traits,
combining call-by-value with the abstract notion of effectful computation represented by a {\em monad}, which has shown to be quite successful. 
But Moggi's {\em computational $\lambda$-calculus} is an equational theory in the broader sense; much less is known of the reduction
theory of such calculi: this is the  focus of our paper.

\paragraph{The Computational Calculus.}
Since Moggi's seminal work, computational $\lambda$-calculi have been developed as a foundation of programming languages, formalizing both functional and non-functional features, see \textit{e.g.} \cite{WadlerT03,BentonHM00}, starting a thread in the literature that is still growing. 
The basic idea of computational $\lambda$-calculi is to distinguish {\em values} and {\em computations}, so that programs, 
represented by closed terms, are thought of as functions from values to computations. 
Intuitively, computations embody a richer structure than values and do form a larger set in which values can be embedded. On the other hand,
the essence of programming is composition; to compose functions from values to computations we need a mechanism
to uniformly extend them to functions of computations, while preserving their original behavior over the (image of) values.

To model these concepts, Moggi used the categorical notion of \emph{monad}, abstractly representing
the extension of the space of values to that of computations, and the associated Kleisli category, whose morphisms 
are functions from values to computations, which are the denotations of programs. Syntactically, following \cite{Wadler95},
we can express these ideas by means of a call-by-value $\lambda$-calculus with two sorts of terms: \emph{values}, ranged over by $V,W$, 
namely variables or abstractions, and \emph{computations} denoted by $L,M,N$. Computations are formed by means of two
operators: values are embedded into computations by means of the operator
$\Unit\!\!$ written {\tt return} in Haskell programming language, whose name refers to the unit of a monad in categorical terms; 
a computation $M \Bind (\lambda x.N)$ is formed by the binary operator
$\Bind$, called {\em bind} ($\texttt{>>=}$ in Haskell), 
representing the application to $M$ of the extension to computations of the function $\lambda x.N$. 

\paragraph{The Monadic Laws.}
The operational understanding of these new operators is that evaluating
$M \Bind (\lambda x.N)$, which in Moggi's notation reads $\Let{x}{M}{N}$, amounts to first evaluating $M$ until a computation 
of the form $\Unit V$ is reached, representing the trivial computation that returns the value $V$. Then $V$ is passed to $N$ by binding $x$ to $V$,
as expressed by the identity:
\begin{equation}\label{eq:beta-c}
(\Unit V) \Bind \lambda x.N = N\Subst{V}{x}
\end{equation}
This is the first of the three \emph{monadic laws} in \cite{Wadler95}.
The remaining laws are:
\begin{align}
\label{eq:id}
M \Bind \lambda x. \Unit x &= M \\ 
\label{eq:sigma}
(L \Bind \lambda x.M) \Bind \lambda y.N &= L \Bind \lambda x.(M \Bind \lambda y.N) \quad \text{with } x \notin \fv{N}
\end{align}

 To understand these two last rules, let us define the composition (named Kleisli composition in category theory) of the functions
$\lambda x.M$ and $\lambda y.N$ as
\begin{equation*}\label{eq:Kcomp}
	(\lambda x.M) \Kcomp (\lambda y. N) \coloneqq \lambda x. (M \Bind (\lambda y.N))
\end{equation*}
where we can freely assume that $x$ is not free in $N$.

Equality \eqref{eq:id} (\emph{identity}) implies that $(\lambda z.M) \Kcomp (\lambda x. \Unit x) = \lambda z.M$, which paired with the instance of (\ref{eq:beta-c}):
$(\lambda x. \Unit x) \Kcomp (\lambda y.N) = \lambda x. N\Subst{x}{y} =_\alpha \lambda y.N$ (where $=_\alpha$ is 
the usual congruence generated by the renaming of bound variables), 
tells that $\lambda x. \Unit x$
is the  
identity of composition $\Kcomp$. 

Equality \eqref{eq:sigma} (\emph{associativity}) implies:
\[((\lambda z.L) \Kcomp (\lambda z.M)) \Kcomp (\lambda y.N) = (\lambda z.L) \Kcomp ((\lambda z.M) \Kcomp (\lambda y.N))\]
namely that composition $\Kcomp$ is associative.

\medskip
The monadic laws correspond to the three equalities in the definition of a Kleisli triple   \cite{Moggi'91}, which is an equivalent presentation of monads \cite{MacLane97}. 
Indeed,
Moggi's calculus is the internal language of a suitable category equipped with a (strong) monad $T$, and with enough structure
to internalize the morphisms of the respective Kleisli category. As such, it is a simply typed $\lambda$-calculus, where $T$ is the type constructor associating with each
type $A$ the type $TA$ of computations over $A$. Therefore, $\Unit\!$ and $\Bind$ are polymorphic operators with respective types \cite{Wadler92,Wadler95}:
\begin{equation}\label{eq:unit-bind-types}
\Unit \colon A \to TA \qquad \qquad \Bind: TA \to (A \to TB) \to TB
\end{equation}

\paragraph{The Computational Core.} The dynamics of $\lambda$-calculi is usually defined as a reduction relation on untyped terms.  Moggi's preliminary report \cite{Moggi'88} specifies both an equational and, in \S 6, a \emph{reduction} system  even if only the former 
is thoroughly investigated and appears in \cite{Moggi'89,Moggi'91}, while reduction is briefly treated for an untyped fragment of the calculus.
However, when stepping from the typed calculus to the untyped one, we need to be careful by avoiding meaningless 
terms to creep into the syntax, so jeopardizing the calculus theory.
For example: what should be the meaning of $M \Bind N$ where both $M$ and $N$ are computations?
What about $(\lambda x.N) \Bind V$ for any $V$? Shall we have functional applications of any kind?

To answer these questions, in  \cite{deLiguoroTreglia20}   typability  is taken as syntactic counterpart of being meaningful:  inspired by  ideas in \cite{Scott80},  the untyped computational $\lambda$-calculus is a special case of the typed one, where there are just two types $D$ and $TD$,
related by the type equation $D = D \to TD$, that is Moggi's isomorphism of the call-by-value reflexive object (see \cite{Moggi'88}, \S 5).
With such a proviso, we get the following syntax:
\begin{align*}
	V,W & \Coloneqq x\mid \lam x.M &\qquad {(\ValTerm)}\\
	M,N, L & \Coloneqq \Unit V\mid M \Bind V& \qquad{(\ComTerm)}
\end{align*}

If we assume that all variables have type $D$, then it is easy to see that all terms in $\ValTerm$ have type $D = D \to TD$, which is consistent with
the substitution of variables with values in (\ref{eq:beta-c}). On the other hand, considering the typing of $\Unit$ and $\Bind$ in (\ref{eq:unit-bind-types}), 
terms in $\ComTerm$ have type $TD$. As we have touched above, there is some variety in notation among computational $\lambda$-calculi; we choose the
above syntax because it explicitly embodies the essential constructs of a $\lambda$-calculus with monads, but for functional application, which is
definable: see \Cref{sec:compcal} for further explanations. We dub the calculus {\em computational core}, noted $\lc$.

\paragraph{From Equalities to Reduction.}

Similarly to \cite{Moggi'88} and \cite{SabryWadler97}, the 
reduction  rules in the computational core $\lc$ are  the relation obtained by orienting the monadic laws 
from left to right. We indicate by $\betac$, $\id$, and $\sigma$ the rules corresponding to \eqref{eq:beta-c}, \eqref{eq:id} and \eqref{eq:sigma}, respectively.
 The contextual closure of these  rules, noted $\redlc$, has been  proved  confluent in \cite{deLiguoroTreglia20}, which implies 
that equal terms have a common reduct and the uniqueness of normal forms.

In \cite{Plotkin'75} call-by-value
reduction $\redbv$ is an intermediate
concept between the equational theory and the evaluation relation $\wredx{\betav}$, that  models an abstract machine. 
Evaluation consists of persistently choosing the leftmost $\betav$-redex that is not in the scope of an abstraction, 
\ie evaluation  is \emph{weak}.  
The following crucial result bridges reduction (hence, the foundational calculus) with  evaluation (implemented by 
an ideal programming language):
\begin{equation}\label{eq:cor1}
	M\xredxStar{}{\beta_v}\! V \mbox{ (for some value  $V$)} \mbox{ if and only if } M\wredxStar{\betav\!} V' \mbox{ (for some value $V'$)}
\end{equation}
Such a result (Corollary 1 in \cite{Plotkin'75}) comes from an analysis  of the \emph{reduction} properties of $\redbv$, 
namely standardization.

As we will see, the rules induced by associativity and identity  make the behavior of the reduction in $\lc$---and the 
study of its operational properties---\emph{non-trivial} in the setting of any monadic $\lambda$-calculus.
The issues are inherent to the  rules coming from the  monadic laws  \eqref{eq:id} and \eqref{eq:sigma}, independently of 
the  syntactic representation of the calculus that internalizes them.
The difficulty appears clearly  if we want to   follow  a similar route to  \cite{Plotkin'75}, as we discuss next.

\paragraph{Reduction vs. Evaluation.} 
Following \cite{Felleisen'88},  reduction $\redlc$ and evaluation $\wredx{\lcc}$ of $\lc$ can be defined 
as the closure of  the reduction rules under arbitrary and evaluation contexts, respectively. Consider the following grammars: 
%
\begin{align*}
\cc & \Coloneqq \hole{\,} \mid   \Unit (\lambda x.\cc)  \mid \cc \Bind V \mid M \Bind (\lambda x.\cc) &\qquad \text{(arbitrary) contexts}\\
\ee &\Coloneqq  \hole{\,} \mid \ee \Bind V &\qquad \text{evaluation contexts} 
\end{align*}
where the hole $\hole{\,}$ can be filled by terms in $\ComTerm$, only.
Observe that  the closure under evaluation context $\ee$ is precisely   weak reduction.

Weak reduction of $\lc$, however, turns out to be non-deterministic, non-confluent, and its normal forms are \emph{not unique}. The following is a counterexample to all  such properties---see \Cref{sec:operational} for further examples.

\begin{diagram}[width=2em,height=2em]
	((\Unit z \Bind z) \Bind \lambda x.\,M) \Bind \lambda y.\, \Unit y  & &  \rTo_{\weak} &  &	(\Unit z \Bind z) \Bind \lambda x.\,(M \Bind \lambda y.\, \Unit y)  \\
	\dTo_{\weak} & & & & \\
	(\Unit z \Bind z) \Bind \lambda x.\,M &  & & &
\end{diagram}

Such an  issue is not specific to the syntax of  the \name. The same phenomena show up with the {\em let}-notation, more commonly used in computational calculi. 
Here, evaluation, usually   called  \emph{sequencing},  is the reduction defined by the following  contexts \cite{Filinski:phd1996,JonesSLT98,LevyPT03}:
\[ \ee_{\textit{let}} \Coloneqq \hole{\,} \mid \Let{x}{\ee_{\textit{let}}}{N}.\]
Examples similar to the one  above can be reproduced. We give the details  in \Cref{ex:sequencing}.

\subsection{Content and  Contributions}
The focus of this paper   is an \emph{operational}  analysis of  two crucial  properties of a term $M$: 
	\begin{enumerate}[(i)]
		\item 	$M$ returns a \emph{value}  (\ie $M\redlc^* \Unit V$, for some  $V$ value).
	
\item 	  $M$ has a \emph{normal form} (\ie $M\redlc^* N$, for some  $N$ $\lcc$-normal).

	\end{enumerate}
As in \cite{Acc19}, the cornerstone of our analysis are  \emph{factorization} results (also called \emph{semi-standardization} in the literature): any reduction sequence can be re-organized  so as to  first performing  specific steps  and then everything else.

Via factorization, we show the key result \eqref{eq:weak},  analogous  to \eqref{eq:cor1},  relating reduction and \emph{evaluation}:
\begin{equation}\label{eq:weak}
\!\!	M\RedStar_{\lcc} \Unit V \mbox{ (for some value  $V$)} \iff 
M\wredxStar{\betac} \Unit V' \mbox{ (for some value $V'$)}
\end{equation}

We then analyze the property of having a normal form  (\emph{normalization}), and define  a family of \emph{normalizing strategies}, \ie subreductions that are guaranteed to reach a normal form, if any exists.

\paragraph{On the Rewrite Theory of Computational Calculi.} In this paper we study the rewrite theory of a specific computational calculus, namely $\lc$. 
	We expose a number of issues, which we argue to be  intrinsic to the monadic rules of computational calculi, namely associativity and identity. Indeed, the same issues which we expose   in $\lc$, also appear  in other computational calculi, as we discuss in  \Cref{sec:operational}, where we take as reference the calculus in \cite{SabryWadler97}, which we recall in \Cref{subsect:computational-let}. 
	We expect that  the solutions we propose for $\lc$ could be adapted also there.

\paragraph{Surface Reduction.} The form of weak reduction which we defined in the previous section (\emph{sequencing}) is standard in the literature. 
In this paper we study also a less strict form of weak reduction,  namely \emph{surface reduction}, which is less constrained and better behaved then sequencing.   Surface reduction disallows reduction under the $\Unit$ operator only, and not under abstractions. Intuitively, weak reduction  does not act in the body of a function, while surface reduction does not act in the scope of   {\tt return}.   As we  discuss in \Cref{subsect:computational-let}, it can also be seen as a more natural extension of call-by-value weak reduction to a computational calculus. 
	
Surface  reduction is well studied in the literature because it naturally arises when interpreting $\lambda$-calculus into linear logic, and indeed the name surface (which we take from \cite{Simpson05}) is reminiscent of a similar notion in calculi based on linear logic \cite{Simpson05, EhrhardGuerrieri16}. In \Cref{sec:bang} we will make explicit  the 
correspondence with such calculi, showing that the $\Unit\!$ operator (from the computational core) behaves exactly like a bang $\oc$ (from linear logic).


\paragraph{Identity and Associativity.} 
Our analysis exposes the operational role of the rules associated to the monadic laws of
identity and associativity.
 \begin{enumerate}[(i)]
 	\item To compute a \emph{value},   only $\betac$ steps are necessary.

	\item  To compute a   \emph{normal form},  $\betac$ steps do not suffice: \emph{associativity}  (\ie $\sigma$ steps)  is  \emph{necessary}.

 \end{enumerate}

Hence, the rule associated to the identity law turns out to be operationally \emph{irrelevant}.

\paragraph{Normalization.}
The study of {normalization} is more complex than that of evaluation, and requires some sophisticated techniques. We highlight some   specific  contributions.
\begin{itemize}
	\item  We define two families of \emph{normalizing strategies} in $\lc$. 
	The first one, quite constrained,  relies on an\emph{ iteration of weak reduction} $\wredx{\lc}$.
	The second one, more liberal, is based on an \emph{iteration of surface reduction} $\sredx{\lc}$.
	The  definition and proof of normalization is \emph{parametric} on either. 
	
	\item The technical \emph{difficulty} in the proofs for normalization comes from the fact that neither weak nor surface reduction is deterministic. To deal with that  
	we  rely on a fine  \emph{quantitative analysis} of the {number of $\betac$ steps,} which we carry-on when we study   factorization in  \Cref{sec:factorizations}. 
\end{itemize} 
The most  challenging  proofs in the paper are those related to normalization via surface reduction. The effort is justified by 
the interest in a larger and \emph{more versatile} strategy, which then does not induce a single abstract machine but \emph{subsumes} several ones, each  following a different  reduction policy. It thus facilitates reasoning about 
optimization techniques and    parallel  implementation. 


%

\paragraph{A Roadmap.}
Let us  summarize the structure of the paper.

\Cref{sec:background} contains the    background notions which are relevant to our paper.

\Cref{sec:compcal} gives the \textbf{formal definition of the \name}  $\lc$ and its reduction.

In  \Cref{sec:betac} and \Cref{sec:operational}, we  analyze the \textbf{properties of weak and surface reduction}.  We first  study $\redbc$,  and then  we move to the  whole  $\lc$, where associativity and identity  also come to play, and issues appear.

In \Cref{sec:factorizations} we study several  \textbf{factorization} results. The   cornerstone of our construction is 
surface factorization (\Cref{thm:sfactorization}).  We then further refine this result, first by postponing the $\id$ steps which are not $\betac$ steps, and then with a form of weak factorization. 

In \Cref{sec:values} we study  \textbf{evaluation},  and analyze some relevant consequences of this result. 
 We actually provide two different ways to deterministically compute a value. 
 The first way is the one given by \eqref{eq:weak}, via an   \emph{evaluation context}.
The second way requires no contextual closure at all: simply applying $\betac$- and $\sigma$-\emph{rules} will return a value, if possible.

In \Cref{sec:normalization}  we study \textbf{normalization and normalizing strategies}. 

	
\Cref{sec:related} concludes with final discussions and related work.

\section{Preliminaries}\label{sec:background}

\subsection{Basics on  Rewriting}\label{sec:rewriting}
We recall here some standard definitions and notations in rewriting  that we shall use in this paper
(see for instance Terese \cite{Terese03} or Baader and Nipkow \cite{BaaderN98} for details).

\paragraph{Rewriting System.} 
\renewcommand{\AA}{A}

An \emph{abstract rewriting system (ARS)} is a pair $(\AA, \red)$ 
consisting of a set $A$ and a binary relation $\red$ on $A$ whose pairs are written  $t \to s$ and 
called \emph{steps}. 
A \emph{$\red$-sequence} from $\tm \in \AA$ is a sequence $(t_i \red t_{i+1})_{i \in I}$ of $\red$ steps, where $I = \Nat$ or $I = \{0, 1, \dots, n-1\}$ for some $n \in \Nat$, $t_i \in \AA$ for all $i \in I$ and $t_0 = t$ (in particular, the sequence is empty for $ I = \emptyset$, \ie $n = 0$). 
We denote by  $\red^*$ (resp. $ \red^= $; $\red^+$) the  transitive-reflexive  (resp. reflexive; transitive) closure of 
$\red$, and $\leftarrow$ stands for the \emph{transpose} of $\red$, that is,  $\tmtwo \leftarrow \tm$ if  $\tm\red\tmtwo$.  
We write $t \red^k s$ for a $\red$-sequence $t \red t_1 \red \dots \red t_k \allowbreak= s$ of $k \in \Nat$ steps.
If $\red_1,\red_2$ are binary relations on $\AA$ then $\red_1\cdot\red_2$ denotes their composition,
\ie  $\tm \red_1\cdot\red_2 \tms$ if there exists  $\tmu\in \AA$ such that $\tm \red_1  \tmu \red_2 \tms$.
We often set $\red_{12} \, \defeq \, \red_1 \cup \red_2$.

 A relation $\red$ is \emph{deterministic} if for each $t\in \AA$ there is at most one $s\in \AA$ such that $t\red s$.
 It 
 is \emph{confluent} if ${\revred}^{*}\cdot {\red}^{*}
 {~\subseteq~}{\red} ^{*}\cdot\, {\revred}^{*}$. 
 \SLV{}{
Two relations $\red_{1}$ and $\red_{2}$  on $\AA$
\emph{commute} if  
${\revred_1}^{*}\cdot {\red_2}^{*}
{~\subseteq~}{\red_2} ^{*}\cdot\, {\revred_1}^{*}$.
A relation $\red$ on $\AA$ is \emph{confluent} if it commutes with itself. 
}

We  say that $u \in \AA$ is $\red$-\emph{normal} (or a \emph{$\red$-normal form}, noted $u \not \red$) if $u \not\red t$ for all $t \in \AA$, that is, there is no $t \in \AA$ such that 
$u\red t$; and we say that $t \in \AA$ \emph{has a normal form} $u$ if $t \red^* u$ with $u$ $\red$-normal. 
Confluence implies that  each $\tm\in \AA$ has \emph{unique normal form}, if any exists.

%
%


\paragraph{Normalization.}
Let  $(\AA,\red)$ be an ARS.
In general, a term may or may not reduce to a normal form. 
And if it does, not all reduction sequences necessarily lead to a normal form. 
A term is 
\emph{weakly} or \emph{strongly normalizing}, depending on if it  may or must reduce to normal form.
If a term $\tm$ is strongly normalizing, any choice of steps will eventually lead to a normal form. 
However, if $\tm$ is weakly normalizing, how do we compute a normal form? This is the problem tackled by \emph{normalization} and \emph{normalizing strategies}:  by repeatedly performing \emph{only specific  steps},  a normal form will be computed, provided that $\tm $ can reduce to~any.
We recall two important notions of normalization.

\begin{defn}[Normalizing]\label{def:normalizing_terms} Let  $(\AA,\red)$ be an ARS and $t \in A$.
\begin{enumerate}
	\item $\tm$ is  \emph {strongly $\red$-normalizing} (or {\textbf{terminating}}) if every maximal $\red$-sequence from $\tm$ ends in a normal form (\ie, $\tm$ has no infinite $\red$-sequence).
	\item 	 $\tm$ is   \emph{weakly $\red$-normalizing} (or just \textbf{normalizing}) if there exists a $\red$-sequence from $\tm$ that ends in a $\red$-normal form (\ie, $t$ has a $\red$-normal form).
\end{enumerate}
Reduction $\red$ is \emph{strongly} (resp.~\emph{weakly}) \emph{normalizing} if so is every $\tm\in \AA$.
Reduction $\red$ is \textbf{uniformly normalizing} if every weakly $\red$-normalizing $t \in \AA$ is also strongly $\red$-normalizing.
\end{defn}
 
Clearly, strong normalization implies weak normalization, and any deterministic reduction is uniformly normalizing.

A  \emph{normalizing strategy}  for $\red$ is a reduction strategy which, given a term $\tm$, is guaranteed to reach its $\red$-normal form, if any exists.

\begin{defn}[Normalizing  strategies]
	\label{def:strategy}
A subreduction $\ered\subseteq \red$ is a  \textbf{normalizing strategy} for  $\red$ 
 if $\ered$ has the same normal forms as $\red$, and for all $t \in A$, if $t$ has a $\red$-normal form, then
 \emph{every} maximal $\ered$-sequence from $t$ ends in a $\red$-normal form.
\end{defn}

Note that in \Cref{def:strategy}, $\ered$ need not be deterministic, and $\ered$ and $\red$ need not be confluent.

\paragraph{Factorization.} In this paper, we will extensively use factorization results.
%
%
%
%
%
%

\begin{defn}[Factorization, postponement]\label{def:factorization}
	Let $(A,\to)$ be an ARS with $\red \ \eq \ered \cup  \ired $. 
	
	Relation   $\red$  satisfies  \emph{$\ex$-factorization}, written $\F{\ered}{\ired}$, if
	\begin{equation}\tag{\textbf{Factorization}}
		\F{\ered}{\ired}: \quad (\ered \cup  \ired)^*~ \subseteq ~\ered^* \cdot \ired^*  
	\end{equation}	

	Relation $\ired$ \textbf{postpones} after $\ered$,  written $\PP{\ered}{\ired}$, if
\begin{equation}\tag{\textbf{Postponement}}
	\PP{\ered}{\ired}: \quad \ired^*\cdot \ered^*  ~\subseteq~  \ered^* \cdot \ired^*	
\end{equation}
\end{defn}	

It is  an easy result that
$\ex$-factorization is  equivalent to postponement, which is a more convenient way to express it. 

\begin{lem}		\label{lem:postponement_eq}
	The following  are equivalent (for any two relations $\ered,\ired$):
	\begin{enumerate}
		\item \emph{Postponement}:  $\PP{\ered}{\ired}$.
		\item \emph{Factorization}: $ \F{\ered}{\ired} $.
	\end{enumerate}
	
\end{lem}

Hindley \cite{HindleyPhD} first noted  that a local 
property implies 
factorization. Let $\red = \ered \cup \ired$.
We say that  $\ired$ \emph{strongly postpones} after $\ered$,
if
\begin{equation}\label{eq:SP}\tag{\textbf{Strong Postponement}}
	\LP{\ered}{\ired}:	\qquad	
\ired \cdot \ered ~\subseteq~\ered^*\cdot  \ired^=
\end{equation}


\begin{lem}[Hindley \cite{HindleyPhD}]
\label{lem:SP}\label{l:SP} 
$\LP{\ered}{\ired}$ implies  $\F{\ered}{\ired}$.
\end{lem}


Observe that the following  are special cases of strong postponement. 
The first one  is  \emph{linear} in $\ered$; we refer to it   as \emph{linear postponement}.
In the second one, recall that $\red \ \eq \ered \cup  \ired $.
\begin{enumerate}
	\item 
	$  \ired \cdot \ered \ssubseteq  \ered \cdot \ired^=   $\,.

	\item 
$  \ired \cdot \ered \ssubseteq  \ered \cdot \red  $\,.
\end{enumerate}	

Linear variants of postponement can easily be adapted to  \emph{quantitative} variants, which  allow us to ``count the steps'' and are useful to establish termination properties. We do this in \Cref{sec:w_factorization}.

\paragraph{Diamonds.}
We recall also another \emph{quantitative} result, which we will use.
	\begin{fact}[Newman \cite{Newman}]\label{fact:diamond} In an ARS $(\AA,\red)$, 	if $\red$ is quasi-diamond, then it  has random descent, where quasi-diamond and random descent are defined below.
		\begin{enumerate}	
			\item  \textbf{Quasi-Diamond}: For all $\tm\in \AA$,   if $\tm_1\leftarrow \tm \rightarrow \tm_2$, then $\tm_1=\tm_2$ or  $\tm_1\rightarrow \tmu \leftarrow \tm_2$ for some $\tmu$.	
			\item  \textbf{Random Descent}:  For all $\tm\in \AA$, all maximal $\red$-sequences from $\tm$ have the \emph{same number of steps}, and all end in the same normal form, if  any exists.
		\end{enumerate}
	\end{fact}

Clearly, if $\red$ is quasi-diamond then it is confluent and uniformly normalizing. 

\paragraph{Postponement, Confluence and Commutation.}
\newcommand{\decr}[1]{\langle #1 \rangle}

\newcommand{\redL}[1]{\triangleright_{#1}}
\newcommand{\revredL}[1]{\triangleleft_{#1}}
\newcommand{\redM}[1]{\blacktriangleright_{#1}}
\newcommand{\revredM}[1]{\blacktriangleleft_{#1}}

Both postponement and confluence are   commutation properties.
	Two relations $\redL{}$ and $\redM{}$  on $\AA$
\emph{commute} if 
\begin{equation}\tag{\textbf{Commutation}}
 \revredL{}^{*} \cdot \redM{}^{*} {\subseteq }
\redM{}^{*} \cdot\revredL{}^{*}.
\end{equation}


So, a relation $\red$ is confluent if and only if it commutes with itself. 
Postponement and commutation can be defined in terms of each other, simply taking  
$\ired$ for $ \revredL{} $ and $\ered $ for $\redM{}$  ($\ired$ postpones after $\ered$  if and only  if
$\xrevredx {\int}{}$ commutes with  $\ered$).   
As propounded  in \cite{vO20}, this fact  allows for proving postponement by means of  \emph{decreasing diagrams} \cite{Oostrom94,DD}. 
This is a powerful and general technique to prove commutation properties: it reduces the problem of showing commutation to a \emph{local} test; in exchange for localization, diagrams need to be decreasing with respect to a labelling.

\begin{defn}[Decreasing] Let $\redL{} \deff \bigcup_{k\in K} \redL{k}$ and $\redM{}  \deff \bigcup_{j\in J} \redM{j}$. 
	The  pair of relations $\redL{}, \redM{}$ is \emph{decreasing} if for some well-founded strict order $<$ on the set of labels $K\cup J$ the following holds: 
\begin{align*}
		\revredL{k}\cdot \redM{j} ~\subseteq  ~ (
	\redM{ \decr{k}}^* \cdot   \redM{j}^= \cdot  \redM{\decr  { k,j}}^* )
	~	\cdot ~ ( \revredL{\decr {j}}^* \cdot   \revredL{k}^= \cdot  \revredL{\decr  {k,j}}^*  ) 
	\qquad \text{for every } k\in K, j\in J 
\end{align*}
	where $\decr L =\{ i\in K\cup J\mid \exists l\in L.~l>i \}$ for any $L \subseteq K \cup J$, and $\decr{i_1, \dots, i_n} = \decr{\{i_1, \dots, i_n\}}$.
\end{defn}

\begin{thmm}[Decreasing diagram \cite{Oostrom94}]\label{thm:DD}  A pair of relations $\redL{},\redM{}$ commutes if it is decreasing.  
\end{thmm}

%
%

\paragraph{Modularizing Confluence.}
A classic tool to modularize a proof of confluence is Hindley--Rosen lemma: the union of confluent reductions is itself confluent if they all commute with each other.  
\begin{lem}[Hindley--Rosen]\label{lem:HR} Let  $\red_1$ and $\red_2$ be relations on a set $A$.  If $\red_1$ and 
	$\red_2$ are confluent and commute with each other, then $\red_1\cup \red_2$ is confluent.
\end{lem}
Like for postponement, strong commutation implies commutation.
\begin{lem}[Strong commutation \cite{HindleyPhD}]\label{lem:SC}
	Strong commutation 	($\leftarrow_1 \cdot \red_2 ~\subseteq ~   {\red_2}^* \cdot {\leftarrow_1}^=  $)
	implies commutation.
\end{lem}

\subsection{Basics on the $\lam$-Calculus }
\label{sec:CbV}

We recall the syntax  and  some relevant notions of the $\lam$-calculus, taking 
Plotkin's call-by-value (\emph{CbV}, for short) $\l$-calculus \cite{Plotkin'75}  as a concrete  example.

 Terms and values  are mutually generated by the grammars below.
\begin{align*}		
%
V ~ \Coloneqq  ~ x   \mid \lam x. T   		&\quad (\mathit{values}\textup{; set: } \ValTerm)&
		T, S, R \Coloneqq  V \mid TS   & \quad (\mathit{terms}\textup{; set: } \Lambda) 
\end{align*}
\noindent
where $x$ ranges over a countably infinite set $\Var$ of \emph{variables}. Terms of shape 
$TS$ and  $\lam x. T$ are called \emph{applications} and   \emph{abstractions}, respectively.
In $\lam x.T$, $\lam x$ binds the occurrences of $x$ in $T$.
The set of \emph{free} (\ie non-bound) variables of a term $T$ is denoted by $\fv{T}$.
Terms are identified up to (clash-avoiding) renaming of their bound variables ($\alpha$-\emph{congruence}).



\paragraph{Reduction.}
\begin{itemize}
	\item \textbf{Contexts} (with exactly one \emph{hole} $\hole{\ }$) are generated  by  the grammar
	\begin{equation*}	
			\cc  \Coloneqq  \hole{~}   \mid T\cc\mid \cc T \mid \lambda x.\cc 
			 \qquad (\textit{Contexts})		
	\end{equation*}
	$\cc\hole{T}$ 
	stands for the term obtained from $\cc$ by replacing the  hole with the term $T$ (possibly capturing some free variables of $T$).
	
	\item A \textbf{rule}  $ \Rule$ is  a binary relation on $\Lambda$, also noted   $\Root{\Rule}$, writing  $R \Root{\Rule} R'$;
	$R$ is called a $\Rule$-\emph{redex}.
	
	%
	\item A \textbf{reduction step}  $\red_{\Rule}$ is 
	the  closure of $\Rule$ under contexts $\cc$.
	Explicitly, 
	if $T, S \in \Lambda $ then $T \red_{\Rule} S$ if $T = \cc\hole R$ and 
	$S = \cc\hole{R'}$,  for some context $\cc$ and some $ R\Root{\Rule} R'$.
\end{itemize}

\paragraph{The Call-by-Value $\lam$-calculus.}
\newcommand{\lv}{\lam_v}

The CbV $\l$-calculus is  the rewrite system $(\Lambda, \redbv)$, 
the set of terms $\Lambda$ equipped with 
\emph{$\betav$-reduction} $\redbv$, that is, the 
contextual closure  of the  rule  $\mapsto_{\betav}$:  
\[(\lam x. {T})V \Root{\betav} T\Subst{V}{x}    \quad (V\in \ValTerm)\] 
\noindent where  $T\Subst{V}{x}$ is the term obtained from $T$ by capture-avoiding substitution of $V$ for the free occurrences of $x$ in $T$.
Notice that  here
$\beta$-redexes can be fired only when the argument is a \emph{value}.

%
%
%

\textbf{Weak evaluation} (which does not
reduce in the body of a function) evaluates closed terms to values. In the literature of CbV, there are three main weak
schemes: reducing from left to right, as defined by Plotkin \cite{Plotkin'75}, from right to left \cite{Leroy-ZINC}, or in an arbitrary order
\cite{LagoM08}.
\emph{Left} contexts  $\leftc$, \emph{right} contexts $\rightc$, and (arbitrary order)   \emph{weak} contexts $\ww$ 
are respectively  defined by
\begin{equation*}
	\leftc \Coloneqq \hole{~}  \mid  \leftc\, T \mid  V \leftc   \quad\quad
	\rightc \Coloneqq \hole{~}  \mid   T \rightc \mid   \rightc V  \quad
	\quad  \ww \Coloneqq \hole{~}  \mid  \ww T \mid   T \ww
\end{equation*}
Given a rule  $\Root{\Rule}$  on $\Lambda$,
\emph{weak  reduction} 
$\wredx{\Rule}$  is the closure of  $\Root{\Rule}$ under weak contexts $\ww$;
\emph{non-weak reduction} $\nwredx{\Rule}$ is the closure of $\Root{\Rule}$ under contexts $\cc$ that are not weak.
\emph{Left} and \emph{non-left} reductions ($\xredx{ \lsym}{\Rule}$  and $\xredx{\neg \lsym}{\Rule}$), \emph{right} and \emph{non-right} reductions ($\xredx{ \rsym}{\Rule}$  and $\xredx{\neg \rsym}{\Rule}$)
are defined analogously.

Note that $\lredbv$ and $\xredx{\rsym}{\betav}$ are deterministic, whereas $\wredbv$ is not.


\paragraph{CbV Weak Factorization.} Factorization of $\redbv$ allows for a  characterization of the terms which reduce to a value. 
Convergence below is a   remarkable consequence of factorization.
\begin{thmm}[Weak left  factorization \cite{Plotkin'75}] \label{thm:factorization_CbV}\hfill
	\begin{enumerate}
		\item\label{p:factorization_CbV-left} \emph{Left  Factorization of $\redbv$:}  ~	 $\redbv^* \ssubseteq  \lredbv^*  \cdot \nlredbv^*$.
		\item\label{p:factorization_CbV-converge} \emph{Value Convergence:}~   $T \redbv^* V$ for some value $V$ if and only if $T \lredbv^*V'$ for some value $V'$.
	\end{enumerate}
The same results hold for $\wredbv$ and $\xredx{\rsym}{\betav}$ in place of $\lredbv$.
\end{thmm}

Since the $\lredbv$-normal forms of closed terms are exactly closed values, \Cref{thm:factorization_CbV}.\ref{p:factorization_CbV-converge} means that every closed term $T$ $\betav$-reduces to a value if and only if $\lredbv$-reduction from $T$ terminates.
%
%
%


\section{The Computational Core $\TFlambdaComp$}
\label{sec:compcal}

We recall the syntax and the reduction  of the {\em computational core}, shortly $\TFlambdaComp$, introduced in \cite{deLiguoroTreglia20}.
 We use a notation slightly different from the one used in \cite{deLiguoroTreglia20} (and recalled in \Cref{sect:intro}).
Such a syntactical change is convenient both  to present the calculus in a more familiar fashion, and to establish useful connections between $\TFlambdaComp$ and two well known calculi, namely Simpson's calculus \cite{Simpson05} and Plotkin's call-by-value $\lambda$-calculus \cite{Plotkin'75}.
The equivalence between the current presentation of $\lc$ and \cite{deLiguoroTreglia20} is detailed in \Cref{app:crbiso}.

\begin{defn}[Terms of $\TFlambdaComp$]\label{def:compgram}	
	Terms of the computational core
	consist of two sorts of expressions:
	\begin{align*}
		\ValTerm: && V, W & \Coloneqq  x \mid \lambda x.M && \mbox{(values)} \\
		\ComTerm: && M,N & \Coloneqq  \oc V \mid VM && \mbox{(computations)}
	\end{align*}
	where $x$ ranges over a countably infinite set $\Var$ of variables. We set $\Term \defeq \ValTerm\, \cup \ComTerm$; 
	$\FV(V)$ and $\FV(M)$ are the sets of free variables occurring in $V$ and $M$ respectively, and are defined as usual.
	Terms are identified up to clash-avoiding renaming of bound variables (\emph{$\alpha$-congruence}).
\end{defn}

%


The unary operator $\oc$ is just another notation for $\Unit$ as presented in \Cref{sect:intro}: it coerces a value $V$ into a computation $\oc V$, sometimes called \emph{returned value}. 

\begin{rem}[Application]
	\label{rmk:application}
A computation $VM$ is a \emph{restricted} form of application, corresponding to the term $M \Bind V$ in \cite{Wadler95} (see \Cref{sect:intro}) where there is no functional application. 
The reason is that the bind $\Bind$ represents an effectful form of 
application, such that by redefining the unit and bind one obtains an actual evaluator for the desired computational effects \cite{Wadler95}.
This restriction may seem a strong limitation because we apparently cannot express iterated applications: $(VM)N$ is not well formed in $\lc$.
However, application among computations is definable in $\lc$: 
\[
MN \deff (\lambda z. zN) M \qquad\quad \mbox{where $z\not\in\FV(N)$.}
\]

\end{rem}

\SLV{}{\RED{\paragraph{\emph{To Apply or not to Apply.}}
	
	As remarked in  \cite{deLiguoroTreglia20}, there is no application in the syntax. 
	Indeed the only form of application which is admissible is $VW$, that is a computation with type $TD$. By adding these new terms to $\Comp$,
	we have to consider the\RED{ axiom of $\beta_v$-conversion}, namely $(\lambda x.M)W = \Subst{x}{W}{M}$. However, this introduces redundancies in the calculus; 
	let $VW$ be defined as abbreviation of $\Unit W \Bind V$, then by (\ref{eq:beta-c}) and taking $V = \lambda x.M$:
	\begin{equation}\label{eq:betav}
		(\lambda x.M)W \equiv \Unit W \Bind (\lambda x.M) \Red M\Subst{W}{x} 
	\end{equation}
	Further we observe that, even the ill-formed application $MN$ of a computation to a computation can be simulated in our syntax by taking:
	\begin{equation}\label{eq:monadicApp}
		MN \equiv M \Bind( \lambda z. N \Bind z) \qquad \mbox{for $\quad z \not\in \FV(N)$}
	\end{equation}
	Last but not least, there is a deeper reason for not having functional application as primitive, which is that the $\Bind$ operator is the application
	of the computational $\lambda$-calculus, as explained at the beginning of this Introduction. Indeed one should keep in mind that both
	$\Unit\!\!$ and $\Bind$ are abstractions of concrete realizations of such operators, depending on the effects, and hence on the monad we want to model.
	Adding application as primitive, and consequently the $\beta_v$ rule, extends the calculus by a kind of a pure application that is in general different than
	the impure one, that is effectful and represented by the bind: this is an advantage when designing a programming language, but not with a foundational calculus.
}}

\renewcommand{\BindRight}{\id}
\paragraph{Reduction.}
The operational semantics of $\lc$ puts 
together rules corresponding to the monad laws.

\begin{defn}[Reduction]\label{def:reduction}
	Relation $\Root{\lcc} \ = \ \Root{\Betac} \cup \Root{\BindRight} \cup \Root{\BindLeft}$ is the union of the following rules:
	\begin{align*}
		\Betac) && (\lambda x.M)(\oc V) & \mapsto_{\betac}  M\Subst{V}{x} \\ 
		\BindRight \,) && (\lambda x. \oc x) M & \mapsto_{\id}  M \\ 
		\BindLeft\,) &&  (\lambda y.N)((\lambda x.M)L)  & \mapsto_{\sigma} (\lambda x.(\lambda y.N)M)L  & \mbox{for $x\not\in \FV(N)$}
	\end{align*}

For every $\Rule\in\{\betac,\sigma,\id, \copyright\}$, \emph{reduction} $\redx{\Rule}$ is the  contextual closure of  $\Root{\Rule}$, where \emph{contexts} are defined as follows: 
\begin{align*}
	\cc & \Coloneqq \hole{\,} \mid   \oc (\lambda x.\cc)  \mid V\cc \mid  (\lambda x.\cc)M & \text{Contexts}
\end{align*}
\end{defn}

All reductions in \Cref{def:reduction} are binary relations on $\ComTerm$, thanks to the proposition below.

\begin{prop}\label{property:conservativity}\label{prop:closure_red}
	The set of computations $\ComTerm$ is closed under substitution and reduction:
	\begin{enumerate}
		\item\label{property:conservativity-1}  If $M\in \ComTerm$ and $V\in \ValTerm$, then $M\Subst{V}{x} \in \ComTerm$.
		\item\label{property:conservativity-2} For every $\Rule\in\{\betac,\sigma,\id, \copyright\}$, if $N \redx{\Rule} N'$, then: $N \in \ComTerm$ if and only if $N'\in \ComTerm$. 
	\end{enumerate}
	
\end{prop}

\begin{proof}
	\Cref{property:conservativity-1} (formally proved by induction on $M$) holds because $M\Subst{V\!}{x}$ just replaces a value, $x$, with another value, $V\!$.
	\Cref{property:conservativity-2} is proved by induction on the context for $N\redx{\Rule} N'\!$, using \cref{property:conservativity-1}.
\end{proof}
%
%


%

The \emph{computational core} $\lc$ is the rewriting system $(\ComTerm, \redlc)$.

\begin{prop}[Confluence, \cite{deLiguoroTreglia20}]
	\label{prop:confluence}
	Reduction $\redlc$ is confluent.
\end{prop}

\begin{rem}[$\betac$ and $\betav$]The  relation between $\redbc$ of the computational core and 
		$\redbv$ of Plotkin's CbV $\lam$-calculus is investigated  in \Cref{subsect:translations-comp-cbv}.
To give a taste of it, we show with an example how  $\beta_v$-reduction is simulated by $\Betac$-reduction, possibly with more steps.
Since $\redlc$ is a relation on $\ComTerm$ (\Cref{prop:closure_red}),
no computation $N$ will ever reduce to any value $V$; however, reduction to values is represented by a reduction $N \redlcStar \oc V$, where $ \oc V$ is the coercion of value $V$ into a computation. 
Let us assume that $M \redbc^* \oc \lambda x.M'$ and $N \redbc^* \oc V$.
We have:
\[\begin{array}{lcl@{\hspace{0.5 cm}}l}
	MN & = & (\lambda z.zN)M &\mbox{(by the encoding in \Cref{rmk:application})}\\
	& \redbc^* & (\lambda z.z \oc V) (\oc \lambda x.M') &\mbox{(where $z \not\in \FV(V)$ since $z \not\in \FV(N)$)} \\
	& \redbc & (\lambda x.M')  \oc V \\
	&  \redbc & M'\Subst{V}{x} 
\end{array}\]
Similarly, in Plotkin's CbV $\lam$-calculus, if $M \redbv^* \lambda x.M'$ and $N \redbv^* V$, then $MN \redbv^* M'\Subst{V}{x}$.

\end{rem}

\paragraph{Surface and Weak Reduction.}
As we shall see in the next sections, 
 there are two natural restrictions of $\redlc$:  \emph{weak reduction} $\wredx{\lcc}$
which does not fire in the scope of  $\lambda$, and  \emph{surface reduction} $\sredx{\lcc}$, which does not fire in the scope of  $\oc$. 
The former is the evaluation  usually studied in CbV $\lambda $-calculus (\Cref{thm:factorization_CbV}). The latter is the natural evaluation  in linear logic, and in  Simpson's calculus, whose relation with $\lc$ we discuss  in \Cref{sec:bang}.

\emph{Surface} and \emph{weak contexts} are respectively defined by the grammars
\begin{align*}
	\ss &\Coloneqq \hole{\,} \mid V\ss \mid (\lambda x.\ss)M &\qquad \text{Surface Contexts}
\\	
	\ww &\Coloneqq \hole{\,} \mid V\ww &\qquad \text{Weak Contexts}
\end{align*}
 For $\Rule\in \{\betac,\id,\sigma,\copyright\}$, 
\emph{weak} reduction  $\wredx{\Rule}$
 is the closure  of $\Rule$ under weak contexts $\ww$, \emph{surface} reduction $\sredx{\Rule}$  is its  closure  under surface contexts  $\ss$.
\emph{Non-surface} reduction $\nsredx{\Rule}$ is the closure of $\Rule$ under contexts $\cc$ that are not surface.
Similarly, \emph{non-weak} reduction $\nwredx{\Rule}$ is the closure of $\Rule$ under contexts $\cc$ that are not weak.

Clearly, $\wredx{\Rule} \subsetneq \sredx{\Rule} \subsetneq \, \redx{\Rule}$.
Note that $\wredbc$ is a \emph{deterministic} relation, while $\sredbc$~is~not.
\begin{exmpl} To clarify the difference between surface and weak, let us  consider the  term $(\lambda x. \bI\oc x) \oc \lambda y. \bI\oc y$, where  $\bI=\lambda z.\oc z$, and two different $\redlc$ steps from it.
We underline the fired redex.
	\begin{align*}
	 (\lambda x.\underline{ \bI\oc x}) \oc \lambda y. \bI\oc y \, &\sredx{\lcc}\,  (\lambda x. \oc x)\oc \lambda y. \bI\oc y
	 &
	 (\lambda x. \bI\oc x) \oc \lambda y.\underline{  \bI\oc y} \, &\nxredx{\surf}{\lcc} \,(\lambda x. \bI\oc x)\oc  \lambda y. \oc y
	\\
	(\lambda x. \underline{\bI\oc x}) \oc \lambda y. \bI\oc y \,  &\nxredx{\weak}{\lcc}\, (\lambda x. \oc x)\oc \lambda y. \bI\oc y
	&
	(\lambda x. \bI\oc x) \oc \lambda y.\underline{  \bI\oc y} \, &\nxredx{\weak}{\lcc} \,(\lambda x. \bI\oc x)\oc  \lambda y. \oc y	.
\end{align*}
\end{exmpl}

Surface reduction can be seen as the natural counterpart of weak reduction in calculi with \emph{let}-constructors or explicit substitutions, as we show in \Cref{subsect:computational-let}.

\begin{rem}[Weak contexts]
In the CbV $\lam$-calculus (see \Cref{sec:CbV},) weak contexts can be given in three forms, according to the order in which  redexes that are not in the scope of abstractions are fired: $\leftc,\rightc, \ww$. When the grammar of terms is restricted to computations, the three coincide. 
So, in $\lc$  there is \emph{only one} definition of  {weak context}, and weak, left and right reductions coincide. 
\end{rem}

In  \Cref{sec:betac} and \Cref{sec:operational}, we  analyze the \emph{properties} of weak and surface reduction.  We first  study $\redbc$,  and then  we move to the  whole  $\lc$, where $\sigma$ and $\id$ also come to play.

%
%
%
\paragraph*{Notation.}In the rest of the paper, we adopt the following  notation:  
\begin{center}
	$\II:=\lam x.!x$  \quad and \quad  $\bDelta := \lam x. x!x$.
\end{center}

\subsection{The Computational Core vs. Computational Calculi with $\tlet$-Notation}
\label{subsect:computational-let}

	It is natural to compare the computational core $\lc$  with  other untyped computational calculi, and wonder if the analysis of the rewriting theory of $\lc$ we present in this paper applies to them. 
	There is indeed a rich literature on computational calculi refining Moggi's $\lambda_c$ \cite{Moggi'88, Moggi'89,  Moggi'91},
	 most of them use the  {\em let}\,-constructor. 
	A standard reference is Sabry and Wadler's  $\lambda_{\mlsym}$ \cite[Sect. 5]{SabryWadler97}, which we display in \Cref{Figure-lm*}.

	 $\lambda_{\mlsym}$ has a two sorted syntax that separates \emph{values} (\ie variables and abstractions) and \emph{computations}. 
	 The latter are either $let$-expressions (aka explicit substitutions, capturing monadic binding), or applications (of values to values), or coercions $[V]$ of values $V$ into computations ($[V]$ is the notation  for $\Unit V$ in \cite{SabryWadler97}, so it corresponds to $\oc V$ in $\lc$).
\begin{itemize}
	\item 	The  \emph{reduction rules} in $\lambda_{\mlsym}$ are the usual $\beta$ and $\eta$  from Plotkin's \emph{call-by-value} $\lambda$-calculus \cite{Plotkin'75}, plus the oriented version of three \emph{monad laws}: $\LetSym.\beta$, $\LetSym.\eta$, $\LetSym.\textit{ass}$ (see \Cref{Figure-lm*}).
	\item 	\emph{Reduction} $\redx{\textit{ml}^*}$ is the contextual  closure of the union of these rules. 
\end{itemize}

\begin{figure}[!t]\label{fig:ml-syntax}
	{\small
\begin{align*}
	\textit{Values:} \quad V, W & \Coloneqq  x \mid \lambda x. M 
	& &&
	\textit{Computations:} \quad M, N & \Coloneqq  [V] \mid  \Let{x}{M}{N} \mid VW
\end{align*}
		
		\noindent
		\begin{align*}
			(c.\beta) && (\lambda x.M)V  & \Red  M\Subst{V}{x} \\
			(c.\eta)  && \lambda x. Vx & \Red  V \qquad \qquad x \not\in \fv{V} \\ 
			(c.\LetSym.\beta) && \Let{x}{[V]}{N} & \Red  {N}\Subst{V}{x} \\ 
			(c.\LetSym.\eta) && \Let{x}{M}{[x]} & \Red  M \qquad \qquad x \not\in \fv{M} \\ 
			(c.\LetSym.\textit{ass}) &&
			\Let{y}{(\Let{x}{L}{M})}{N} & \Red  \Let{x}{L}{(\Let{y}{M}{N})} \quad x \notin \fv{N} 
		\end{align*}
\\[-7pt]	
	}

	\caption{$\lambda_{ml^*}$: Syntax and Reduction}
	\label{Figure-lm*}
\end{figure}


\begin{figure}[!t]
			{\small 
			\[\qquad \trb{\cdot}: \lambda_{ml^*} \to \lc \]
			\\[-1em]
			\[\begin{array}{rcl}
				\trap{(x)} & \defeq & x \\ [1mm]
				\trap{(\lambda x.M)} & \defeq & \lambda x. \trb{M}\\ [3mm]
								\trb{[V]} & \defeq & \oc(\trap{V})\\ [1mm]
				\trb{VW} & \defeq & \trap{V} \, \oc (\trap{W}) \\ [1mm]
				\trb{\Let{x}{M}{N}} & \defeq &  (\lambda x. \trbp{N})\trbp{M} 
			\end{array}
			\]}	
		\hspace{8mm}%
		\newline
			{\small
			\[\trc{\cdot}:  \lc\to \lambda_{ml^*}\]
			\\[-1em]
			\[\begin{array}{rcl}

				\traa{x} & \defeq & x \\ [1mm]
				\traa{\lambda x.M} & \defeq & \lambda x. \trc{M} \\ [3mm]
				\trc{\oc V} & \defeq & [\traap{V}]\\ [1mm]
				\trc{V \,\oc W} & \defeq & \traap{V}\,\traap{W}\\ [1mm]
				\trc{xM} & \defeq &  \Let{y}{ \trcp{M} }{xy} \quad \text{if } y\not\in \fv{M}, \ M \neq \oc V \text{ for any value } V \\ [1mm]
				\trc{(\lambda x.N)M} & \defeq & \Let{x}{\trcp{M} }{\trcp{N} } \quad\! \text{if } M \neq \oc V \text{ for any value } V
			\end{array}
			\]}


	\caption{Translations between $\lambda_{ml^*}$ and $\lc$}
	\label{Figure-transl}
\end{figure}

To state a correspondence between $\lc$  and $\lambda_{ml^*}$, consider the translations in Figure \ref{Figure-transl}: translation $\trb{\cdot}$ from  $\ml$ to $\lc$ (resp. $\trc{\cdot}$ from  $\lc$  to $\ml$) is defined via the auxiliary encoding $\trap{(\cdot)}$ (resp. $\traa{\cdot}$) for values.
The translations induce an equational correspondence by adding $\eta$-equality to $\lc$.
More precisely, let $\redx{\eta}$ be the closure of the rule $\Root{\eta}$ (below left) under contexts $\gc$ (below right).
\[\lambda x. (V\oc x) \Root{\eta} V \qquad \qquad\qquad \vc \Coloneqq \hole{\,} \mid \lam x.\gc \qquad \gc \Coloneqq \oc \vc \mid \vc M \mid V \gc \]
Let $\eqlceta$ be the reflexive-transitive and symmetric closure of the reduction $\redlceta \eq \redlc \cup \Red_\eta$, and similarly for $\eqlm$ with respect to $\redx{ml^*}$.



\begin{restatable}{prop}{eqCorr}\label{prop:eq-correspondence}
%
The following hold:
\begin{enumerate}
\item \label{prop:eq-correspondence-1} $M \eqlceta\trb{\trcp{M}} $ for every computation $M$ in $\lc$;
\item \label{prop:eq-correspondence-2} $ \trc{\trbp{P}}\eqlm P$ for every computation $P$ in $\ml$;
\item \label{prop:eq-correspondence-3}$M \eqlceta N$ implies $\trcp{M}\eqlm \trcp{N}$, for every computations $M, N$ in $\lc$;
\item \label{prop:eq-correspondence-4} $P \eqlm Q$ implies $\trbp{P} \eqlceta \trbp{Q}$, for every computations $P, Q$ in $\ml$.

\end{enumerate}
\end{restatable}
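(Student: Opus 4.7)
The plan is to establish two auxiliary substitution lemmas, then prove the four parts in an order that lets soundness of the reduction rules rest on them. I would first show by routine induction that both translations commute with substitution: $\trcp{M\Subst{V}{x}} = \trcp{M}\Subst{\trcp{V}}{x}$ for every term $M$ of $\lc$, and symmetrically $\trbp{P\Subst{V}{x}} = \trbp{P}\Subst{\trbp{V}}{x}$ for every term $P$ of $\lambda_{ml^*}$. These will be invoked whenever a $\beta$-style rule is applied under either translation.

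Next, I would prove \eqref{prop:eq-correspondence-1} and \eqref{prop:eq-correspondence-2} by induction on the structure of the terms. For \eqref{prop:eq-correspondence-1} the only interesting case is $M = xN$ with $N \neq \oc W$: unfolding gives $\trb{\trcp{xN}} = (\lambda y.x\oc y)\trbp{\trcp{N}}$, and one step of $\eta$ reduction (which is why $\redx{\eta}$ must be added to $\lc$) together with the inductive hypothesis on $N$ recovers $xN$. All remaining cases are direct by the inductive hypothesis. For \eqref{prop:eq-correspondence-2} the interesting case is $P = \Let{x}{[V]}{N}$: unfolding gives $\trc{\trbp{P}} = (\lambda x.\trcp{\trbp{N}})\trcp{\trbp{V}}$, and the two parallel reductions $(\lambda x.N)V \redx{ml^*} N\Subst{V}{x}$ and $\Let{x}{[V]}{N} \redx{ml^*} N\Subst{V}{x}$ let us conclude via $\eqlm$.

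The core of the argument for \eqref{prop:eq-correspondence-3} and \eqref{prop:eq-correspondence-4} is a root-step soundness lemma: for each rule $\rho$ of $\redlceta$ (respectively $\redx{ml^*}$), contracting a $\rho$-redex at the root yields $\eqlm$-equivalent (respectively $\eqlceta$-equivalent) translations. Under $\trcp{\cdot}$, the $\betac$ and $\eta$ cases are immediate using the substitution lemma and the $c.\beta$ and $c.\eta$ rules; the $\id$ case splits on whether the argument $M$ has the shape $\oc W$, giving either a $c.\beta$ reduction or a $c.\LetSym.\eta$ reduction. The case $\sigma$ is the main obstacle: since the clause for $\trcp{\cdot}$ applied to an application is not uniform, we must case-split on whether the two candidate arguments $M$ and $L$ in $(\lambda y.N)((\lambda x.M)L)$ are of the shape $\oc W$, producing four subcases. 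In the subcases where the inner computation translates to a $\Let{}{}{}$-block, the rule $c.\LetSym.\textit{ass}$ does the work; in the other subcases, $c.\beta$ and $c.\LetSym.\beta$ on both sides must be used to align the two intermediate forms to the common reduct. The dual soundness for $\trbp{\cdot}$ is easier: $c.\beta$ and $c.\LetSym.\beta$ map to $\betac$, $c.\LetSym.\eta$ to $\id$, $c.\LetSym.\textit{ass}$ to $\sigma$, and $c.\eta$ to $\eta$, all with one (or at most two) reduction steps under the translation.

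Finally, root-step soundness is lifted to contextual closure by induction on the context, using the fact that the two translations are defined by structural recursion so that $\trcp{\cc\hole{M}}$ and $\trcp{\cc\hole{M'}}$ share the same ambient $\lambda_{ml^*}$-context modulo the case split already analysed (and symmetrically for $\trbp{\cdot}$). Since $\eqlm$ and $\eqlceta$ are defined as the symmetric-reflexive-transitive closure of their respective reductions, single-step preservation immediately propagates to full equivalence, yielding \eqref{prop:eq-correspondence-3} and \eqref{prop:eq-correspondence-4}. The expected main obstacle is the fourfold case analysis of the $\sigma$ rule under $\trcp{\cdot}$, since it forces us to synchronise the two translations using both $c.\LetSym.\textit{ass}$ and $c.\beta$/$c.\LetSym.\beta$.
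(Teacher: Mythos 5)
Your plan is essentially the paper's proof: the paper disposes of this proposition with a one-line ``induction over the definition of the translations,'' and your four-part development (structural induction for the two round-trip clauses, root-step soundness for each rule, lifting through contexts, then through the reflexive-transitive-symmetric closure) is exactly the expansion of that induction, including the correct identification of where $\eta$ is needed (the clause $\trc{xM}$) and the fourfold case analysis for $\sigma$ under $\trcp{\cdot}$.

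One small correction: the substitution lemma for $\trcp{\cdot}$ does not hold as a syntactic identity. The clause for translating an application case-splits on whether the head is a variable or an abstraction, and this split is not stable under substituting an abstraction $V=\lambda z.Q$ for the head variable: $\trc{(xM)\Subst{V}{x}} = \Let{z}{\trcp{M\Subst{V}{x}}}{\trcp{Q}}$ whereas $\trcp{xM}\Subst{\trcp{V}}{x} = \Let{y}{\trcp{M}\Subst{\trcp{V}}{x}}{(\lambda z.\trcp{Q})y}$, and these differ by a $c.\beta$ step. So the lemma should be stated as $\trcp{M\Subst{V}{x}} \eqlm \trcp{M}\Subst{\trcp{V}}{x}$; since every use of it lands in an $\eqlm$-goal anyway, this weakening costs nothing. (The analogous lemma for $\trbp{\cdot}$ does hold on the nose, as its clauses are shape-uniform.)
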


\begin{proof}
	\begin{enumerate}
		\item By induction on $M$ in $\lc$.
		\item By induction on $P$ in  $\ml$.
		\item We prove that $M \redlceta N$ implies $\trcp{M}\eqlm \trcp{N}$, by induction on the definition of $M \redlceta N$.
		\item We prove that $P \redx{ml^*} Q$ implies $\trbp{P} \redlceta \trbp{Q}$, by induction on the definition of $P \redx{ml^*} Q$.
		\qedhere
	\end{enumerate}
\end{proof}
\Cref{prop:eq-correspondence} establishes a precise correspondence between the \textit{equational} theories of $\lc$ (including $\eta$-conversion) and $\lambda_{ml^*}$. 
We had to consider $\eqlceta$ since 
\begin{align}\label{eq:missing-eta}
	xM \not\eqlc\trb{\trc{xM}}\eq (\lambda y. x\oc y ) \trb{\trcp{M}} \qquad \text{when } M \neq \oc V \text{ for any value } V
\end{align}
(where $\eqlc$ is the  reflexive-transitive and symmetric closure of $\redlc$) and so condition (\ref{prop:eq-correspondence-1}) in \Cref{prop:eq-correspondence} would not hold if we replace $\eqlceta$ with $\eqlc$.

	
	\begin{rem}[Some intricacies]\label{rem:intricacies}
The correspondence between the \emph{reduction} theories of $\lc$ (possibly including $\eta$) and $\lambda_{ml^*}$ is not  immediate and demands further investigations since, according to the terminology in \cite{SabryWadler97}, there is no Galois connection:
\Cref{prop:eq-correspondence} where we replace $\eqlceta$ with $\redlcetaStar$, and $\eqlm$ with $\RedStar_{\textit{ml*}}$ does not hold. 
More precisely:
\begin{itemize}
	\item the condition corresponding to \Cref{prop:eq-correspondence-1}, namely $M \RedStar_{\lcc\eta}\trb{\trcp{M}} $, fails since 
	$xM\not\RedStar_{\lcc\eta} \trb{\trc{xM}}$ when $M \neq \oc V$ for any value $V$, see \eqref{eq:missing-eta} above;
	\item the condition corresponding to \Cref{prop:eq-correspondence-3}, namely $M \redlceta^* N$ implies $\trcp{M} \RedStar_{\textit{ml}^*} \trcp{N}$, fails because $(\lambda y.N)((\lambda x.\oc V) \oc W) \reds (\lambda x. (\lambda y.N) \oc V) \oc W$\\ but $\trc{(\lambda y.N)((\lambda x.\oc V) \oc W)} \not\RedStar_{\textit{ml}^*} \trc{(\lambda x. (\lambda y.N) \oc V) \oc W}$.
\end{itemize}
	\end{rem}

\paragraph{Surface vs.~Weak Reduction.}
%
In this paper we study not only weak but also  surface reduction, as the latter has better rewriting properties than the former. 
	\emph{Surface reduction in $ \lc $} can be seen as a natural  counterpart to Plotkin's  \emph{weak reduction in calculi with the \emph{let}-constructor}, such as $\lambda_{ml^*}$.  Intuitively, a term of the form $\Let{x}{M}{N}$ can be interpreted as syntactic sugar for $(\lambda x.N)M$, however, in  the expression  $\Let{x}{M}{N}$, it is not  obvious that weak  reduction should avoid firing redexes in  $N$.  The distinction between $\lambda$ and \emph{let} allows for a clean interpretation of weak reduction: it forbids reduction under $\lambda$, but not under \emph{let}, which  is compatible with  surface reduction in $\lc$.

Technically, when we embed $\lambda_{ml^*}$ in the computational core  via the translation $\trb{\cdot}$ in \Cref{Figure-transl}, weak reduction $\wredx{\textit{ml}^*}$ in $\ml$ (defined as the restriction of $\redx{\textit{ml}^*}$ that does not fire under $\lambda$) corresponds to surface reduction $\sredx{\lcc}$ in $\lc$: if $P \,\wredx{\textit{ml}^*}\, P'$ then $\trbp{P} \,\sredx{\lcc}\, \trb{P'}$ but not necessarily $\trbp{P} \allowbreak\wredx{\lcc}\, \trb{P'}$. 
Indeed, consider ${P} = \Let{x}{R}{Q} \,\wredx{\textit{ml}^*}\, \Let{x}{R}{Q'} = {P'}$ in $\ml$, with $Q \,\wredx{\textit{ml}^*}\, Q'$;
then $\trbp{P} = (\lambda x.Q)R \,\sredx{\lcc}\, (\lambda x.Q')R = \trb{P'}$, which is not a weak step, in $\lc$.

\section{The Operational Properties of $\betac$}\label{sec:betac}\label{sec:bang} \label{subsect:translations-comp-bang}
\label{sect:translations}
%
%
 
 Since  $\beta$-reduction is the engine of any $\lambda$-calculus, we start our analysis of the rewriting theory of $\lc$ by studying  the properties of $\redbc$ and its surface restriction  $\sredx{\betac}$. 
As we show in this section, $\redbc$ and  $\sredx{\betac}$ have already been studied in the literature: there is an exact  correspondence with  Simpson's calculus \cite{Simpson05}, which stems from Girard's linear logic \cite{LL}. 
 Indeed, the operator $\oc$ in \cite{Simpson05} (modelling the \emph{bang} operator  from linear logic) behaves exactly as the the operator $\oc$ in $\lc$ (modelling 
 $\Unit$ in  computational calculi).
It is easily seen that  $(\ComTerm,  \redbc)$, that is, $\lc$ when considering only   $\Betac$ reduction,  is nothing but the restriction of the bang calculus to 
 computations.  Thus, $\redbc$  has the same operational properties as $\redbb$.
 In particular,  surface factorization and confluence for $\redbc$ are inherited from the corresponding  properties of $\redbb$ in Simpson's calculus.




\paragraph{The Bang Calculus.}
We call  \emph{bang calculus}  the fragment of Simpson's  linear $\lambda$-calculus \cite{Simpson05} without linear abstraction. 
It has also been studied in \cite{EhrhardGuerrieri16,GuerrieriManzonetto18,FaggianGuerrieri21,GuerrieriO21} (with the name bang calculus, which we adopt),
and it is closely related to Levy's Call-by-Push-Value \cite{Levy99}. 


We briefly recall the   bang calculus $ (\Bang,\redbb) $.
\emph{Terms}  $\Lambda^!$ are defined by
\begin{align*}
	T,S,R & \Coloneqq x  \mid 	TS  \mid 	\lambda x.T    \mid \oc T & (\textbf{terms}\textup{, set: } \Lambda^!)
\end{align*}

\emph{Contexts} ($\cc$) and \emph{surface contexts} ($\ss$) are generated  by the grammars:
\begin{align*}
	\cc & \Coloneqq \hole{\,}  \mid T\cc \mid \cc T   \mid \lambda x.\cc    \mid \oc\cc  & \qquad(\textbf{contexts})\\
	\ss & \Coloneqq \hole{\,} \mid T\ss \mid \ss T   \mid \lambda x.\ss &\qquad (\textbf{surface contexts})
\end{align*}

The reduction $\redbb$   is the closure under context $\cc$ of the rule
\[	(\lambda x.R)\oc T \mapsto_{\betab} R \Subst{T}{x}\]

\emph{Surface} reduction 
$\sredbb$   is the closure of the rule $\mapsto_{\betab}$ under surface contexts $\ss$.
\emph{Non-surface} reduction $\sredbb$   is the closure of the rule $\mapsto_{\betab}$ under contexts $\cc$ that are not surface.
Surface reduction \emph{factorizes} $\redbb$.

\begin{thmm}[Surface factorization \cite{Simpson05}] \label{thm:factorization_simpson}
	\label{prop:simpson-factorization}
	In $\Lambda^\oc$:
	\begin{enumerate}
		\item \emph{Surface factorization of $\redbb$:}	~ $\redbb^* \ssubseteq  \sredxStar{\betab}  \cdot \nsredxStar{\betab}$.
		\item \emph{Bang convergence:}  ~ $T\redbb^* \oc R$ for some term $R$ if and only if $T\sredxStar{\betab} \oc S$ for some term $S$.
	\end{enumerate}
\end{thmm}	

Surface reduction is non-deterministic, but satisfies the diamond property  of \Cref{fact:diamond}.	
	\begin{thmm}[Confluence and diamond {\cite{Simpson05}}]
		\label{thm:confluence_bang} 
		In $\Lambda^!$:
		\begin{itemize}
			\item reduction $\redbb$ is confluent;
			\item reduction $\sredbb$ is quasi-diamond (and hence confluent).
		\end{itemize}
	\end{thmm}

\paragraph{Restriction to Computations.} The restriction of the bang calculus  to computations, \ie $(\ComTerm, \redbb)$ is \emph{exactly  the same} as the fragment of $\lc$ with  $\betac$-rule as unique reduction rule, \ie  $(\ComTerm,\redbc)$.

First,  observe that the set of computations $\ComTerm$ defined in \Cref{sec:compcal} is a subset of the terms $\Lambda^!$, and moreover it  is closed under $\redbc$ reduction (exactly as in \Cref{prop:closure_red}).
Second, observe  that the restriction of contexts and surface contexts to computations, gives exactly the grammar defined in \Cref{sec:compcal}.
Then  
$(\ComTerm, \redbb)$ and $(\ComTerm,\redbc)$ are in fact \emph{the same}, and for every $M, N \in \ComTerm$:
	\begin{itemize}
		\item $M\redbb N$ if and only if   $M\redbc N$.
		\item $M\sredbb N$ if and only if   $M\sredbc N$.
	\end{itemize}


Hence, $\redbc$ in $\lc$ \emph{inherits the operational properties of $\redbb$}, in  particular
 \emph{surface factorization} and the \emph{quasi-diamond} property of $\sredbb$ (\Cref{prop:simpson-factorization,thm:confluence_bang}). 
 We will use both extensively. 
 
\begin{fact}[Properties of $\betac$ and of its \emph{surface} restriction]
	\label{fact:betac-surface-factorization}
	In $\lc$: 
	\begin{itemize}
		\item reduction $\redbc$ is non-deterministic and  confluent;
		
		\item reduction $\sredbc$ is quasi-diamond (and hence confluent);
		
		\item reduction $\redbc$ satisfies surface factorization: $\redbc^* \ssubseteq  \sredbc^*  \cdot \nsredbc^*$.
		
		\item $M\redbc^* \oc V$ for some value $V$ if and only if $T\sredxStar{\betac} \oc W$ for some value $W$.
	\end{itemize}
		 
\end{fact}

In \Cref{sec:factorizations,sec:values} we shall generalize and refine the last two points, respectively, to reduction $\redlc$ instead of $\redbc$.

\section{Operational Properties of $\lc$, Weak and Surface Reduction}\label{sec:operational}

%
%
%

We study evaluation and normalization in $\lc$ via    \emph{factorization}  theorems (\Cref{sec:factorizations}), which are based on both  weak and surface reductions.
%
The construction we develop  in the next sections demands more work  than one may expect. This  is due to the fact that the  
rules induced by the monadic laws of  associativity and identity make the analysis of the reduction properties non-trivial.
In particular---as anticipated in the introduction---\emph{weak reduction} does not factorize $\redlc$, and has severe drawbacks, which we explain  next. \emph{Surface reduction} behaves better, but still present difficulties. 
In the rest of this section, we examine their respective  properties.

\subsection{Weak Reduction: the Impact of Associativity and Identity}
  
\emph{Weak (left)} reduction   (\Cref{sec:CbV}) is one of the most common and studied way to implement evaluation in CbV, and more generally in 
calculi with effects.

Weak $\betac$ reduction $\wredbc$, that is, the closure of $\betac$ under weak contexts is a \emph{deterministic} relation.
However, when including the rules induced by the monadic equation of  associativity and identity,   the   reduction is  \emph{non-deterministic},  
 \emph{non-confluent}, and  normal forms are  \emph{not unique}. 
 
This is somehow surprising, given the prominent role of such a reduction in the literature  of  calculi with effects. Notice  that the issues only come from $\sigma$ and $\id$, not from $\betac$.
To resume:
\begin{enumerate}
	\item Reductions $\wredx{\id}$ and  $\wredx{\betac\id}$ are non-deterministic, but are both confluent.
	\item\label{p:non-confluence} $\wreds, \ \wredx{\bs}, \ \wredx{\sigma\id} \text{, and } \wredx{\lcc}$  are non-deterministic, non-confluent and their normal forms are not unique,
	\ie, adding $\sigma$, weak reductions lose confluence and uniqueness of normal forms. 
\end{enumerate}


\begin{exmpl}[Non-confluence]\label{ex:weak}
		An example of the non-determinism of $\wredx{\id}$ is the following:
		$$V((\lambda y. \oc y)N) \,\xrevredx\, {\weak}{\id} (\lambda x. \oc x)(V((\lambda y. \oc y)N)) \,\wredx{\id}\, (\lambda x. \oc x)(VN).$$
			
		Because of the $\sigma$ rule, 
		weak reductions $\wreds, \ \wredx{\bs}, \ \wredx{\sigma\id}$ and $\wred{\lcc}$ are not confluent and their normal forms are not unique (\Cref{p:non-confluence} above). 
		Indeed,	consider 
		$T \eq V((\lam x.P)   ((\lam y.Q)L))$ where $V = \lambda z. z \oc z$ and $P = Q = L = z \oc z$.
		Then,
		\begin{align*}
			M_1= ( \lam x.VP) ((\lam y.Q)L)  \, \xrevredx{\weak }{\sigma}   \  T \ \wreds \,  V  ((\lam y. (\lam x.P) Q)L )= N_1
			\\
			M_1 \, \wreds \,  M_2 \defeq (\lam y.( \lam x. VP) Q)L  \neq  (\lam y. V((\lam x.P) Q)) L \,\eqdef\, N_2 \,\xrevredx{\weak}{\sigma} \, N_1
		\end{align*}
		where $M_2$ and $N_2$ are different $\wredx{\lcc}$-normal forms (clearly, $N_2 \, \nwredx{\sigma} \, M_2$).
%
\end{exmpl}
Reduction $\redbc$  (like $\redbv$, \Cref{thm:factorization_CbV}.\ref{p:factorization_CbV-left}) admits weak factorization (\Cref{fact:factorization-betac} in \Cref{subsect:translations-comp-cbv})
\[\redbc^* \ \subseteq \  \wredxStar{\betac} \cdot \nwredxStar{\betac} \]
This is not the case for $\redlc$.
The following   counterexample is due to van  Oostrom \cite{ex_vO}.
\begin{exmpl}[Non-factorization \cite{ex_vO}]\label{ex:vO} Reduction $\redlc$ does not admit weak factorization.
	\label{ex:weak-no-factorize}
	Consider the reduction sequence
	\[
	M \defeq (\lam y. \II \oc y)(z\oc z) \,\nwredx{\betac}\, (\lam y. \oc y) (z\oc z )\, \wredx{\id}\, z\oc z
	\]
	$M$ is $\wredx{\lcc}$-normal and cannot reduce to $z \oc z$ by only performing $\nwredx{\lcc}$ steps (note that $(\lam y. \oc y) (z\oc z )$ is $\nwredx{\lcc}$-normal), hence it is \emph{impossible}  to factorize the sequence from $M$ to $ z\oc z$ as $M\wredxStar{\lcc}\cdot  \nwredxStar{\lcc}  z\oc z$.
\end{exmpl}

\paragraph{Let: Different Notation, Same Issues.}
We stress that the 
  issues are inherent to the associativity and identity rules, not to the specific syntax of $\lc$.
Exactly the same issues appear  in   Sabry and Wadler's $\ml$ \cite{SabryWadler97} (see our \Cref{fig:ml-syntax}), as we show in the example below.

 \begin{exmpl}[Evaluation  context in \tlet-notation]\label{ex:sequencing} 
 	In \tlet-notation,  the standard evaluation  is  \emph{sequencing} \cite{Filinski:phd1996,JonesSLT98,LevyPT03}, which exactly corresponds to weak reduction in $\lc$.  The  evaluation   context for sequencing is  
 \[\ee_\tlet  \, \Coloneqq \, \hole{\,} ~ \mid~ \Let x {\ee_\tlet}  M.\]	
 We write $\eredx{\mlsym}$ for the closure of the $\ml$ rules  (in \Cref{fig:ml-syntax}) under contexts $\ee_\tlet$. 
 We observe two problems, the first one due to the rule $c.\LetSym.\textit{ass} $, the second one to the rule $c.\LetSym.\eta$.
 
 \begin{enumerate}
 	\item \textbf{Non-confluence.} Because of the associative rule $c.\LetSym.\textit{ass}$,  reduction $\eredx{\mlsym}$ is \emph{non-deter\-ministic}, \emph{non-confluent}, and \emph{normal forms are not unique}.
 	Consider the following term
 	\[T \deff   \overline{\Let z {\, \underline{(\Let   x {\,( \Let y L Q)} P )}} R}
 	\qquad\text{with } R= P = Q = L = z  z.
 	\]
 	There are two weak redexes in $T$, the overlined and the underlined one. Therefore,
 	\begin{align*}
 		T ~& \eredx{\mlsym} ~ \Let x {\,(\Let y L Q)}{(\Let z P R)}\\
 		& \eredx{\mlsym} ~ \Let y L {(\Let x Q {(\Let z P R)})} \eqdef T'\\[3pt]
 		T ~  & \eredx{\mlsym}~ \Let z {\,(\Let y L {(\Let x Q P)})} R \\
 		& \eredx{\mlsym} ~	\Let y L {(\Let z {(\Let x Q P)} R)} \eqdef T''
 	\end{align*} 	
 	where $T'$ are  $T''$ different $\eredx{\mlsym}$-normal forms.
 	
 	\item \textbf{Non-factorization.} Because of the $c.\LetSym.\eta$-rule,
 	\emph{factorization w.r.t. sequencing} does not hold. That is, a reduction sequence  $M \RedStar_{\mlsym}N$  cannot be reorganized as  weak steps followed by non-weak steps. 
 	Consider the following variation on van Oostrom's \Cref{ex:vO}:
 	\begin{align*}
 		M ~&\defeq~ \Let{y}{zz}{(\Let{x}{[y]}{[x]})} 
 		\ \nwredx{c.\LetSym.\eta}~ \Let{y}{ z z}{[y]} ~~\wredx{\,c.\LetSym.\eta}~~ zz 
 	\end{align*}
 	$M$ is $\wredx{\mlsym}$-normal and cannot reduce to $z z$ by only performing $\nwredx{\mlsym}$ steps (note that $\Let{y}{z z}{[y]}$ is $\nwredx{\mlsym}$-normal), so it is \emph{impossible}  to factorize the sequence form $M$ to $zz$ as $M\wredxStar{\mlsym} \!\cdot  \nwredxStar{\mlsym} zz$.
 \end{enumerate}

 	 \end{exmpl}	
  

\subsection{Surface Reduction}
In $\lc$, surface reduction  is non-deterministic, but  confluent, and well-behaving.  
\begin{fact}[Non-determinism]
	For $\Rule \in \{\copyright, \betac, \sigma,\id, \bs, \beta\id, \sigma\id\}$,   $\sredx{\Rule}$ is non-deterministic (because in general more than one surface redex can be fired).

\end{fact}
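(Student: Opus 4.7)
The plan is to prove non-determinism for each rule $\Rule$ in the list by exhibiting, for each case, a single term from which two distinct surface $\Rule$-steps are possible. Because the examples for $\betac$, $\id$, and $\sigma$ individually also serve as examples for the combined reductions $\bs$, $\beta\id$, $\sigma\id$, and $\lc$, only the three base cases need to be produced in full.

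For the base cases I would use the following witnesses, where $\bI = \lambda w.\oc w$:
\begin{itemize}
\item \textbf{$\betac$:} the term $(\lambda x.\,\bI(\oc y))(\oc z)$ has the outer $\betac$-redex at the hole and the inner $\betac$-redex $\bI(\oc y) = (\lambda w.\oc w)(\oc y)$ sitting under the surface context $(\lambda x.\hole{\,})(\oc z)$, which is of shape $(\lambda x.\ss)M$.
\item \textbf{$\id$:} the term $(\lambda x.\oc x)((\lambda y.\oc y)M)$, for any $M\in\ComTerm$, has an outer $\id$-redex and an inner $\id$-redex $(\lambda y.\oc y)M$ under the surface context $(\lambda x.\oc x)\hole{\,}$, which is of shape $V\ss$.
\item \textbf{$\sigma$:} the term $(\lambda y.\oc y)((\lambda x.\oc x)((\lambda z.\oc z)M))$, for any $M\in\ComTerm$, has an outer $\sigma$-redex (with the side condition $x\notin\FV(\oc y)$ trivially satisfied) and an inner $\sigma$-redex $(\lambda x.\oc x)((\lambda z.\oc z)M)$ (with $z\notin\FV(\oc x)$ up to $\alpha$-renaming) sitting under the surface context $(\lambda y.\oc y)\hole{\,}$.
\end{itemize}

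In each case firing the outer versus the inner redex yields different one-step reducts, so the relation $\sredx{\Rule}$ cannot be a function. For the combined rules I would simply reuse any of the above examples: e.g. the term for $\betac$ serves for $\bs$, $\beta\id$, and $\lc$; the one for $\id$ serves for $\beta\id$, $\sigma\id$, and $\lc$; and the one for $\sigma$ serves for $\bs$, $\sigma\id$, and $\lc$.

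There is no real obstacle here; the only point requiring attention is checking that each inner redex is genuinely in a surface context, i.e.\ that the enclosing context belongs to the grammar $\ss ::= \hole{\,} \mid V\ss \mid (\lambda x.\ss)M$ and in particular never crosses an $\oc$, together with verifying the variable-freshness conditions in the $\sigma$-rule, which can always be arranged by $\alpha$-conversion.
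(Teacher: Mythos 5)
Your overall strategy (exhibit, for each base rule, a term with two distinct surface redexes, and let those witnesses cover the unions) is exactly the right one and matches what the paper does implicitly via its examples. The $\betac$ and $\sigma$ witnesses check out: in each case the two contexts are of the shapes $\hole{\,}$ and $(\lambda x.\hole{\,})M$ (resp.\ $\hole{\,}$ and $V\hole{\,}$), both surface, and the two one-step reducts are visibly different terms.

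The $\id$ witness, however, is broken. From $t = (\lambda x.\oc x)((\lambda y.\oc y)M)$ the outer redex yields $(\lambda y.\oc y)M$ and the inner one yields $(\lambda x.\oc x)M$, and these two reducts are $\alpha$-equivalent (both are $\II M$). Since terms are identified up to $\alpha$-congruence and determinism is defined as ``at most one $s$ with $t \red s$'', this term has a unique one-step $\sredx{\id}$-reduct and therefore does \emph{not} witness non-determinism; no choice of $M$ repairs this, because a tower of identities always collapses to the same term no matter which layer you peel off. The claim for $\sredx{\id}$ is of course still true, but you need a witness in which the two redexes are genuinely distinguishable. The paper's own Example 5.1 does this by interposing a value between the two identity redexes: from $(\lambda x.\oc x)(V((\lambda y.\oc y)N))$ the outer step gives $V((\lambda y.\oc y)N)$ while the inner step (whose context $(\lambda x.\oc x)(V\hole{\,})$ is weak, hence surface) gives $(\lambda x.\oc x)(VN)$, and these differ provided $V \neq \II$ (take, e.g., $V$ a variable). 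Alternatively, two parallel $\id$-redexes, as in $(\lambda x.(\lambda u.\oc u)(x\oc x))((\lambda v.\oc v)(y\oc y))$, also work. With such a witness substituted for yours, the rest of your argument, including the coverage of $\bs$, $\beta\id$, $\sigma\id$ and $\lc$ by the base cases, goes through.
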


We now analyze confluence of surface reduction. 
We will use  confluence of $\sredx{\bs}$  (\Cref{p:surface-properties-betasigma} below) in \Cref{sec:normalization}
(\Cref{prop:uniform}).
\begin{prop}[Confluence of surface reductions]
	\label{prop:surface-properties}
\hfill
	\begin{enumerate}
	\item\label{p:surface-properties-separate} Each of the reductions  $\sredbc$, $\sredid$,  $\sreds$ is  confluent.
	\item\label{p:surface-properties-betasigma} Reductions $\sredx{\betac\id} =(\sredbc \cup \sredid)$  and 
	$\sredx{\bs}  = (\sredbc \cup \sreds)$ are confluent.
	\item\label{p:surface-properties-betasigmaid} Reduction $\sredx{\lcc} = (\sredbc \cup \   \sreds \cup \sredid)$  is confluent.
	\item\label{p:surface-properties-sigmaid} Reduction $\sredx{\sigma\id} = (\sreds \cup \sredid)$ is \emph{not} confluent.
\end{enumerate}

\end{prop}

\begin{proof} 
	We rely on confluence of $\sredbc$ (by \Cref{thm:confluence_bang}), and on   Hindley--Rosen Lemma (\Cref{lem:HR}). We prove commutation via strong commutation (\Cref{lem:SC}).
	The only delicate point  is the  commutation of $\sreds$ with $\sredid$ (\Cref{p:surface-properties-betasigmaid,p:surface-properties-sigmaid}).
	
	\begin{enumerate}
		\item   $\sreds$ is locally confluent and terminating, and so confluent; 
		$\sredid$ is quasi-diamond (in the sense of \Cref{fact:diamond}), and hence also confluent.
		\item It is easily verified that  $\sredbc $ and $  \sredid$ strongly commute, and similarly  for  
		$\sredbc $ and $  \sreds$. The claim then  follows by Hindley--Rosen Lemma.
		\item $\sredbc \cup \sredid$ strongly commutes with $\sreds$. This point is delicate because to close a  diagram  of the shape $\xrevredx{\surf}{\sigma} \cdot \sredid$ may require a $\sredbc$ step, see \Cref{ex:confluence} below.
		 The claim then follows by Hindley--Rosen Lemma.
		\item  A counterexample is provided by the same diagram  mentioned in the previous point  (\Cref{ex:confluence}), requiring a $\sredbc$ step to close.
		\qedhere
	\end{enumerate}	
\end{proof}

%
%

%
%

\begin{exmpl} 
	\label{ex:surface}
	Let us  give an example for  non-determinism and confluence of surface reduction.
	\begin{enumerate}
		\item \emph{Non-determinism:}	Consider the term $(\lambda x. R)((\lambda y. R')N)$ where $R$ and $R'$ are any redexes.
		\item \emph{Confluence:} 	Consider the same  term  as in \Cref{ex:weak}:
		$T=V((\lam x.P)   ((\lam y.Q)L ))$.
		Then,
		\[ M_2	\xrevredx{\surf }{\sigma} M_1  \xrevredx{\surf }{\sigma}     T\, \sreds  \,  N_1 \sreds \,   N_2 \]
		Now we can close the diagram:
		\[M_2 \eq  (\lam y.( \lam x. VP) Q)L \xrevredx{\surf}{\sigma} (\lam y. V((\lam x.P) Q)) L =  N_2.\]
	\end{enumerate}
\end{exmpl}

{\begin{exmpl}\label{ex:confluence}
		In the  following counterexample to confluence of $\sreds \cup \sredid$, where $M= N=z!z$, the  $\BindLeft$-redex overlaps with the $\BindRight$-redex.  
		The corresponding  steps are surface (and even weak) and the only way to  close the diagram is  by means of a $\Betac$ step, which is also surface (but not weak).
		\begin{diagram}[width=4em,height=2.5em]
			(\lambda y.\, N)((\lambda x.\, \oc x)M) & &	 \rTo^{\BindLeft}	& & 	(\lambda x.(\lambda y.\, N)\, \oc x)M  \\
			\dTo^{\BindRight}		&   	&							&	&     \dDashto_{\Betac} 		\\
			(\lambda y.\, N)M	&	&        \rDashes^=		&	& (\lambda x.\, (N\Subst{x}{y}))M	  \\
		\end{diagram}
		Note that  $x\not \in \FV(N)$, and so $ \lambda x. (N\Subst{x}{y}) = \lam y. N$, since $ \lambda x. (N\Subst{x}{y})$ is the term obtained from $\lam y.N$ by renaming its bound variable $y$ to $x$.
		
		This  is also a counterexample to confluence of $(\reds \cup \redid)$  and of  $(\wreds \cup \wredx{\id})$.  
\end{exmpl}}

In \Cref{sec:factorizations}, we 
 prove that {surface reduction} \emph{does  factorize} $\redlc$, similarly to what happens for Simpson's calculus (\Cref{prop:simpson-factorization}).
Surface reduction  also has a drawback:  it 
   does not allows us to separate $\redbc$ and $\reds$ steps. This fact  makes it  difficult to reason about  returning a value.
\begin{exmpl}[An issue with surface reduction] 
	Consider the term $\Delta ((\lambda  x. \oc\Delta) (xx))$, which  is normal for $\redx{\betac}$ and in particular for $\sredx{\betac}$, but	
\[	\Delta ((\lambda  x. \oc \Delta) (x\oc x)) \,\sreds\, (\lambda  x. \Delta \oc\Delta) (x\oc x) \,\sredx{\betac}\, (\lambda  x. \Delta \oc\Delta) (x\oc x)\]
Here it is not possible to postpone a step $\sreds$  after a step $\sredx{\betac}$.
\end{exmpl}

\subsection{Confluence Properties of $\betac$, $\sigma$ and $\id$}
Finally, we briefly revisit the  confluence of $\lc$, already established in \cite{deLiguoroTreglia20}, in order to analyze the confluence properties of the different subsystems too. This completes  the analysis given in \Cref{prop:surface-properties}.
In \Cref{sec:normalization} we will use  confluence of $\redx{\bs}$ (\Cref{thm:normalizing_strategy}).
\begin{prop}[Confluence of  $\betac$,  $\sigma$ and $\id$]
	\label{prop:conf}
	\hfill
	\begin{enumerate}
		\item Each of the reductions  $\redbc$, $\redid$,  $\reds$ is  confluent.
		\item\label{p:conf-betasigma} Reductions $\redx{\betac\id} \, = (\redbc \cup \redid)$  and 
		$\redx{\bs} \, = (\redbc \cup \reds)$ are confluent.
		\item\label{p:conf-betasigmaid} Reduction $\redlc \, = (\redbc \cup \reds \cup \redid)$  is confluent.
		\item\label{p:conf-sigmaid} Reduction $\redx{\sigma\id} \, = (\reds \cup \redid)$ is \emph{not} confluent.
	\end{enumerate}
\end{prop}
\begin{proof} We rely on confluence of $\redbc$ (by \Cref{thm:confluence_bang}), and on   Hindley--Rosen Lemma (\Cref{lem:HR}). We prove commutation via strong commutation (\Cref{lem:SC}).
	The only delicate point  is again  the  commutation of $\reds$ with $\redid$ (\Cref{p:conf-betasigmaid,p:conf-sigmaid}).
	
	\begin{enumerate}
		\item   $\reds$ is locally confluent and terminating, and so confluent.  $\redid$ is quasi-diamond in the sense of \Cref{fact:diamond}, and hence confluent.
		\item It is easily verified that  $\redbc $ and $  \redid$ strongly commute, and  
		$\redbc $ and $  \reds$ do as well. 
		The claim then  follows by Hindley--Rosen Lemma.
		\item $\redbc \cup \redid$ strongly commutes with $\reds$. 
		This point is delicate because to close a  diagram  of the shape $\xrevredx{}{\sigma} \cdot \redid$ may require a $\redbc$ step (see \Cref{ex:confluence}).
		The claim then  follows by Hindley--Rosen Lemma.
		\item  A counterexample is provided by the  same diagram mentioned in the previous point (\Cref{ex:confluence}), requiring a $\redbc$ step to close.
		\qedhere
	\end{enumerate}	
\end{proof}

\SLV{}{
	\subsection{Roadmap: from factorization to evaluation and normalization}
	Because of their respective properties, we use both surface and weak reduction.
	
	In \Cref{sec:factorizations} we study  \textbf{factorization} results. We
	first prove surface factorization (\Cref{thm:sfactorization}), which is the cornerstone of our construction.   Then, we refine this  result first by postponing $\id$ steps and then by means  of  weak factorization (on the surface steps). 
	This further  phase allows us: 
	\textbf{(1.)}  to separate $\redbc$ and $\reds$ steps,  by postponing  $\sigma$  steps after  the $\wredx{\betac}$ steps 
	and 
	\textbf{(2.)} to perform 
	a fine analysis of quantitative properties, namely the number of beta steps. We need (1.) to define the evaluation relation, and (2.) to define the normalizing strategies. 
	

By  relying  on the factorization properties from \Cref{sec:factorizations}, we then obtain  two major results.
\begin{enumerate}
	\item \textbf{Evaluation} (\Cref{sec:values}).  If $M$ returns a value,   only $\betac$ steps are necessary to compute a value:
	\[ M \redlc^* \oc V  \mbox{ iff }    M \wredbc^* \oc W  \]
	We  study evaluation  (see \Cref{thm:values}) in  \Cref{sec:values}, where we also analyze some relevant consequences of this result. 
	
	\item \textbf{Normalization} (\Cref{sec:normalization}).
	If $M$ has a normal form, only $\betac$ and $\sigma$ steps are necessary to normalization. Indeed,  $\id$ steps that are not $\betac$ do not play any role (see \Cref{thm:normals}). 
	From here, and again building on the factorization results in \Cref{sec:factorizations}, we  define two families of normalizing strategies,  by iterating weak reduction and surface reduction, respectively. 
	
\end{enumerate}

A point which is worth stressing is that 
the proofs related to  normalization and normalizing strategies rely on a fine analysis of the \emph{number of $\betac$-steps,} which we carry-on while studying factorization in  \Cref{sec:factorizations}. 

}

\section{Surface and Weak  Factorization }
\label{sec:factorizations}

	In this section, we prove several  factorization results for $\lc$. 
	Surface factorization is  the cornerstone for the subsequent development.
	It is proved in \Cref{subsect:surface-factorization}.
	
	\begin{restatable}[Surface factorization in $\lam_{\lcc}$]{thmm}{SurfaceFactorization}\label{thm:sfactorization}
		Reduction $\redlc$ admits surface factorization:
		\begin{align*}
			M\RedStar_{\lcc}N \text{ implies }  M \sredxStar{\lcc} \cdot  \nsredxStar{\lcc} N.
		\end{align*}
	\end{restatable}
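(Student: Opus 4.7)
The plan is to apply Hindley's strong postponement lemma (Lemma~2.4), but in its \emph{linear} variant where a single non-surface step may be postponed past a surface step at the cost of several non-surface steps afterwards. Concretely, the goal is to establish the swap property
\[
\nsredx{\lcc} \cdot \sredx{\lcc} \;\subseteq\; \sredx{\lcc} \cdot \nsredx{\lcc}^{*},
\]
from which surface factorization follows by a standard inductive reorganization: given a $\redlc$-sequence, locate the leftmost $\nsredx{\lcc}\cdot\sredx{\lcc}$ pattern, apply the swap, and use as termination measure (a lexicographic pair based on) the position and number of such patterns.

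To prove the swap, I would fix a peak $M \nsredx{\lcc} M_0 \sredx{\lcc} N$ and reason on the relative positions of the two fired redexes. By definition, the non-surface step fires some redex $R_1$ lying inside an $\oc$-subterm, while the surface step fires a redex $R_2$ inside a surface context $\ss$, which by construction never enters an $\oc$. Hence either (i) $R_1$ is disjoint from $R_2$, or (ii) $R_1$ lies inside an $\oc$-subterm that occurs as part of $R_2$. Case~(i) is immediate: the two steps commute and we get a $\sredx{\lcc} \cdot \nsredx{\lcc}$ sequence ending at $N$. In case~(ii), I split on the rule used by $R_2$.

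If $R_2$ is a $\betac$-redex $(\lambda x.P)\oc V$ and $R_1$ lives inside $V$, firing $R_2$ first yields $P[V/x]$, and reducing each of the (possibly zero or many) copies of $V$ to $V'$ reaches the same final term $P[V'/x]$ --- this is where the $\nsredx{\lcc}^{*}$ on the right genuinely matters, rather than Hindley's strict $\nsredx{\lcc}^{=}$. If $R_2$ is a $\sigma$-redex $(\lambda y.N')((\lambda x.P)L)$, then $R_1$ lies inside one of $N'$, $P$, $L$, none of which is duplicated or erased by $\sigma$; a single post-swap non-surface step suffices. The $\id$ case $(\lambda x.\oc x)L \mapsto L$ is analogous, with no duplication. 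In every subcase I must also check that the residual of $R_1$ is still non-surface after the swap, which holds because no rule of $\lc$ moves a subterm out of an $\oc$.

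The main obstacle is the $\betac$ case: substitution can duplicate the $\oc V$ carrying $R_1$, which is precisely what forces the linear (many-step) form of postponement and rules out Hindley's stronger form. A secondary technical point is the $\sigma$ case, where the rearrangement moves a subterm across a $\lambda$-binder: one has to verify carefully that a non-surface redex inside $N'$, originally under $\lambda y$ and inside the surface context, remains non-surface after $\sigma$ turns $(\lambda y.N')((\lambda x.P)L)$ into $(\lambda x.(\lambda y.N')P)L$, and that the free-variable side-condition $x\notin\FV(N')$ is preserved under the swap (which it is, by $\alpha$-conversion). Once these local diagrams are in place, the global factorization statement is the routine inductive consequence sketched above.
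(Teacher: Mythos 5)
There is a genuine gap, and it sits exactly where you locate ``the main obstacle'': the $\betac$ case of your local swap. The diagram $\nsredx{\lcc}\cdot\sredx{\lcc}\subseteq\sredx{\lcc}\cdot\nsredx{\lcc}^{*}$ is false, because your claim that ``the residual of $R_1$ is still non-surface \ldots\ because no rule of $\lc$ moves a subterm out of an $\oc$'' fails for $\betac$: firing $(\lambda x.P)\,\oc V\mapsto_{\betac}P\Subst{V}{x}$ moves $V$ out from under the $\oc$, and whenever $x$ occurs in \emph{function position} in $P$ (a subterm $xN$), the body of $V$ lands in a surface position, since $(\lambda y.\ss)N$ is a surface context whereas $\oc(\lambda y.\cc)$ is not. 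Concretely, take $M=(\lambda x.\,x(\oc u))\,\oc(\lambda y.(\lambda w.w\,\oc w)(\oc z))$. Then $M\nsredbc(\lambda x.\,x(\oc u))\,\oc(\lambda y.z(\oc z))\sredbc(\lambda y.z(\oc z))(\oc u)=N$, but the only surface redex of $M$ is the outer one, giving $M\sredbc(\lambda y.(\lambda w.w\,\oc w)(\oc z))(\oc u)$, and this term has \emph{no} non-surface redex at all: the residual of the postponed step now sits in the surface context $(\lambda y.\hole{})(\oc u)$. So no path of shape $\sredx{\lcc}\cdot\nsredx{\lcc}^{*}$ from $M$ to $N$ exists. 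The diagram can only be repaired to something like $\nsredx{\lcc}\cdot\sredx{\lcc}\subseteq\sredx{\lcc}^{+}\cdot\nsredx{\lcc}^{*}$, and with that shape your ``routine inductive reorganization'' no longer terminates for free: the number of surface steps is not preserved, so neither Hindley's strong postponement (\Cref{lem:SP}) nor its linear variants applies, and a genuinely finer termination argument is needed.

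This is precisely why the paper does not attack $\redlc$ head-on with a single-step swap. It inherits surface factorization of the pure $\betac$ fragment from Simpson's theorem (\Cref{prop:simpson-factorization}) --- which is where the hard ``duplication into surface position'' case is absorbed, classically via parallel reduction --- and then adds $\sigma$ and $\id$ modularly through \Cref{prop:test}: surface factorization of $\redx{\id}\cup\redx{\sigma}$ alone, substitutivity of $\mapsto_{\id}$ and $\mapsto_{\sigma}$, and the root swap $\nsredbc\cdot\mapsto_{\xi}\subseteq\mapsto_{\xi}\cdot\redbc^{=}$ for $\xi\in\{\id,\sigma\}$. Your $\sigma$ and $\id$ cases are essentially these root lemmas and that part of your analysis is sound; it is the $\betac$-past-$\betac$ interaction that your argument cannot handle and that must be imported or re-proved by a stronger technique.
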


 We   then refine this  result first by \emph{postponing} $\id$ steps  which are not also $\betac$ steps, and then by means  of  \emph{weak factorization} (on surface steps). 
	This further  phases serve two purposes: 
	\begin{enumerate}
		\item \label{p:weak} to postpone non-weak $\betac\sigma$ steps after weak $\betac\sigma$ steps, and
		\item\label{p:separate} to separate weak $\betac$ and $\sigma$ steps,  by postponing  $\wreds$  steps after $\wredx{\betac}$ steps, and
		\item\label{p:quantitative} to perform 
		a fine analysis of quantitative properties, namely the number of $\betac$ steps.
	\end{enumerate}
	We will  need \Cref{p:weak,p:separate} to define \emph{evaluation} relations (\Cref{sec:values}), and \Cref{p:quantitative} to define  \emph{normalizing strategies} (\Cref{sec:normalization}). 
	

	

\paragraph{Technical Lemmas.}
We shall often  exploit  some basic properties of  contextual closure, which we collect here.
 In any variant of the $\lambda$-calculus, if a step $ T \redx{\Rule} T'$ is obtained by the closure of a rule $\Root{\Rule}$ under a \emph{non-empty context} (\ie, a context other than the hole), then  $T$ and $T'$ have the \emph{same shape}, that is, they are both
	applications or both abstractions or both variables or both !-terms.

\begin{restatable}[Shape preservation]{fact}{shape}
	\label{fact:shape} Let $\Root{\Rule}$ be a rule and $\redx{\Rule}$ be its contextual closure.
	Assume $T=\cc\hole{R}\redx{\Rule} \cc\hole {R'}=T'$ where $R \Root{\Rule} R'$ and    $\cc \neq \hole{\,}$. 
	Then $T$ and $T'$ have the same shape.
	\end{restatable}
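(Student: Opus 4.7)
The plan is to proceed by a straightforward case analysis on the outermost constructor of the non-empty context $\cc$. Since $\cc$ is non-empty, its grammar
\[
\cc \Coloneqq \hole{\,} \mid \oc(\lambda x.\cc') \mid V\cc' \mid (\lambda x.\cc')M
\]
forces $\cc$ to be of one of three shapes: $\oc(\lambda x.\cc')$, $V\cc'$, or $(\lambda x.\cc')M$. In each case, the outermost constructor of $\cc\hole{R}$ coincides with the outermost constructor of $\cc\hole{R'}$ — the inner reduction happens strictly below the topmost constructor — so $T$ and $T'$ share the same top-level shape.

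More precisely, I would observe: if $\cc = \oc(\lambda x.\cc')$, then $T = \oc(\lambda x.\cc'\hole{R})$ and $T' = \oc(\lambda x.\cc'\hole{R'})$, so both are computations of the form $\oc W$ (a trivial computation). If $\cc = V\cc'$, then both $T$ and $T'$ are of the form $V N$ for some computation $N$, hence applications. If $\cc = (\lambda x.\cc')M$, then both $T$ and $T'$ are of the form $(\lambda x.N)M$, again applications. In every case the outer term-former is left untouched, so $T$ and $T'$ have matching top-level shape: variable, abstraction, $\oc$-term, or application.

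There is no real obstacle here; the only thing to be careful about is to make explicit what we mean by \emph{shape}. I would read ``same shape'' as ``same top-level term-former,'' in the sense distinguishing variables, abstractions, $\oc$-terms, and applications. With this reading, the argument is immediate from inspecting the three possible top-level forms of a non-empty $\cc$, and no induction on the depth of $\cc$ is actually required — only case analysis on its outermost constructor. This explains why the hypothesis that $\cc$ is non-empty is essential: when $\cc = \hole{\,}$, the rule $R \Root{\Rule} R'$ itself may of course change the top-level shape (for instance, a $\betac$-redex $(\lambda x.M)\oc V$ is an application while its contractum $M\Subst{V}{x}$ need not be).
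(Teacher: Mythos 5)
Your proof is correct and matches the paper's intent: the paper states this as a basic fact about contextual closure without spelling out a proof, and the justification it has in mind is exactly your case analysis on the outermost constructor of the non-empty context, which leaves the top-level term-former untouched. Your observations that no induction is needed and that non-emptiness is what rules out the root-step case (where the shape can change) are both accurate.
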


An easy-to-verify consequence of \Cref{fact:shape} in $\lc$ is the following.

\begin{restatable}[Redexes preservation]{lem}{lemredexbsi}
	\label{lem:redex_bsi}
	Let  $M \,\nsredx{\lcc}\, N$ and $\gamma \in \{\betac,\sigma,\id\}$:  
	$M$ is a $\gamma$-redex if and only if $N$ is a $\gamma$-redex.
\end{restatable}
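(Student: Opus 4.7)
The plan is to combine shape preservation (\Cref{fact:shape}) with a case analysis on the three redex types. My first observation will be that a non-surface context is necessarily non-empty, since the empty context $\hole{\cdot}$ is itself a surface context. Hence the step $T \nsredx{\lcc} S$ is obtained by closing a rule under a non-empty context, so by \Cref{fact:shape} the terms $T$ and $S$ share their outermost constructor. Since every $\gamma$-redex (for $\gamma \in \{\betac,\sigma,\id\}$) is an application, I write $T = V_T M_T$ and $S = V_S M_S$; the context grammar then yields two subcases: either $V_T = V_S$ and $M_T \nsredx{\lcc} M_S$, or $V_T = \lambda x.N_T$, $V_S = \lambda x.N_S$, and $N_T \nsredx{\lcc} N_S$ (a value can be rewritten only if it is an abstraction).

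For $\gamma = \betac$, with redex shape $(\lambda x.M_0)(\oc V_0)$, if the reduction is in the right component $\oc V_0$, the only non-surface context targeting a $\oc$-term has the form $\oc(\lambda y.\cc'')$, forcing $V_0 = \lambda y.B$ to be an abstraction and yielding $S = (\lambda x.M_0)(\oc \lambda y.B')$, still a $\betac$-redex. If the reduction is instead in the body of the left abstraction, then $S = (\lambda x.M_0')(\oc V_0)$ is manifestly a $\betac$-redex. For $\gamma = \sigma$, with shape $(\lambda y.N)((\lambda x.M_0)L)$, iterated use of \Cref{fact:shape} on the outer and inner applications yields that $S$ retains the same layout; the side-condition $x \notin \FV(N)$ can always be re-established by $\alpha$-renaming. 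Both these cases are routine.

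The hard part will be $\gamma = \id$, because the pattern $(\lambda x.\oc x)M$ pins down the left component to be \emph{exactly} $\lambda x.\oc x$, rather than any abstraction, so I must rule out that a non-surface step can alter the body $\oc x$. The crucial sub-lemma is that $\oc x$ admits no non-surface rewrite, either incoming or outgoing: any non-surface context matching a $\oc$-term must have the form $\oc(\lambda y.\cc'')$, but $\lambda y.\cc''\hole{R}$ is always an abstraction, never a variable. Hence any non-surface step leaving from or arriving at an $\id$-redex acts purely on the $M$-argument, and thus $T$ is an $\id$-redex if and only if $S$ is.

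For each $\gamma$, the converse implication follows from the same context-based analysis applied to the reverse direction of the step, since shape preservation and the context-matching arguments used above are symmetric in source and target.
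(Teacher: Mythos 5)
Your proof is correct and follows essentially the same route as the paper, which factors the $\betac$ and $\sigma$ cases through shape preservation (\Cref{fact:isteps}, yielding \Cref{cor:redex}) and isolates the $\id$ case in \Cref{lemma:id-redex}, where the key point is exactly the one you identify: no non-surface (indeed, no non-empty-context) step can enter or leave the body $\oc x$, since the only contexts producing $\oc$-terms have the form $\oc(\lambda y.\cc)$ and would force that body to be an abstraction.
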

\begin{proof}
	See the Appendix, namely \Cref{cor:redex} for $\gamma \in \{\sigma,\betac\}$, 
	and \Cref{lemma:id-redex} for $\gamma = \id $.
\end{proof}

\Cref{lem:redex_bsi} is false if we replace the hypothesis $M\nsredx{\lcc} \, N$ with $M\nwredx{\lcc} \, N$.
Indeed, consider $M = (\lam x. (\lam y. \oc y)\oc x)L \nwredx{\lcc} (\lam x. \oc x) L = N$: $N$ is a $\id$-redex but $M$ is not.

Notice the  following inclusions, which we will use freely.
\begin{fact}
	$ \wredx{\lcc} \subsetneq \sredx{\lcc}$ and  $\nsredx{\lcc}\subsetneq \nwredx{\lcc}$, because a  weak context is necessarily a surface context (but a surface context need not be a weak context, \eg the surface context $\ss = (\lam x. \hole{\,})M$ is not weak).
\end{fact}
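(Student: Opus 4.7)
The plan is to first establish the inclusion of the context grammars $\ww \subseteq \ss$ by a short structural induction, then derive both stated inclusions on reductions by a direct appeal to the definitions, and finally witness strictness by a single context. The only care needed is to match productions of the two grammars; there is no serious obstacle.

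First I would induct on the grammar $\ww \Coloneqq \hole{\,} \mid V\ww$. The base $\hole{\,}$ is also the first production of $\ss$. For $\ww = V\ww'$, the inductive hypothesis gives $\ww' \in \ss$, and the production $\ss \Coloneqq V\ss$ yields $V\ww' \in \ss$. From this containment, the first stated inclusion $\wredx{\lcc} \subseteq \sredx{\lcc}$ is immediate: $\wredx{\lcc}$ is by definition the closure of the $\lc$-rules under $\ww$-contexts, which are in particular $\ss$-contexts. Contrapositively, any context that fails to be surface fails \emph{a fortiori} to be weak, so the closure of the rules under non-surface contexts is contained in their closure under non-weak contexts, giving $\nsredx{\lcc} \subseteq \nwredx{\lcc}$.

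For strictness of both inclusions, the context $\cc_0 \defeq (\lambda x.\hole{\,})N$ witnesses everything at once: it is a surface context via the third production $(\lambda x.\ss)M$ of the $\ss$-grammar, but it is not weak, because the $\ww$-grammar has no clause crossing a $\lambda$-abstraction. Filling $\cc_0$ with any $\lc$-redex, for instance $(\lambda y.\oc y)\oc z$ under the $\betac$-rule, gives a step $\cc_0\hole{(\lambda y.\oc y)\oc z} \sredx{\lcc} \cc_0\hole{\oc z}$ that is surface but not weak, and hence a non-weak but not a non-surface step. This confirms $\wredx{\lcc} \subsetneq \sredx{\lcc}$ and $\nsredx{\lcc} \subsetneq \nwredx{\lcc}$ as asserted.
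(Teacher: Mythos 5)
Your proof is correct and follows exactly the reasoning the paper leaves implicit: the grammar of weak contexts $\ww \Coloneqq \hole{\,} \mid V\ww$ is literally a sub-grammar of the surface contexts $\ss \Coloneqq \hole{\,} \mid V\ss \mid (\lambda x.\ss)M$, which gives both inclusions, and your witness $(\lambda x.\hole{\,})N$ for strictness is the natural one. The paper states this Fact without proof beyond the parenthetical remark, so your elaboration (the structural induction plus the explicit strictness witness) is just a spelled-out version of the same argument.
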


\subsection{Surface Factorizations in $\lam_{\lcc}$, Modularly}
\label{subsect:surface-factorization}

We prove surface factorization in $\lc$.  
We already know that surface factorization holds for $\redbc$ (\Cref{fact:betac-surface-factorization}), so we can rely on it, and 
work modularly, following the approach proposed in \cite{AccattoliFaggianGuerrieri21}.   The tests for call-by-name head factorization  and call-by-value weak factorization in  \cite{AccattoliFaggianGuerrieri21}  easily adapt to surface factorization in $\lc$, yielding    the following convenient test.  It modularly establishes surface factorization of a reduction $\redbc \cup \redc $, where $\redc$ is a new reduction added to $\redbc$. 
Details of the proof are in \Cref{app:factorization,subsect:restriction}.


\begin{restatable}[A modular test for surface factorization with $\betac$]{prop}{testss}\label{prop:test} 	
	Let  $\redc$ be the contextual closure of a rule $\rredc $.  	
	Reduction	$\redbc \cup \redc$  satisfies surface factorization (that is, $(\redbc \cup \redc)^* \subseteq (\sredbc \!\cup\, \sredc)^* \cdot (\nsredbc \!\cup\, \nsredc)^*$) if:
	\begin{enumerate}
		\item \emph{Surface factorization of $\redc$}: ~~
		$\redc^* \, \, \subseteq ~\sredxStar{\gamma} \cdot \nsredxStar{\gamma}$\,.
		\item  $\rredc$ is  \emph{substitutive}:  ~~ 	$R \rredc R' 
		\text{ implies } R \subs x M \rredc R'\subs x M.$
		\item \emph{Root linear swap}: ~~ $\nsredbc \cdot \rredc \, \subseteq \ \rredc \cdot\redbc^* $.
	\end{enumerate}
\end{restatable}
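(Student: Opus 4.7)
The plan is to prove the modular test by reducing surface factorization of $\redbc \cup \redc$ to local swap lemmas, following the approach of \cite{AccattoliFaggianGuerrieri21}. Surface factorization of $\redbc$ alone is already given by Theorem~\ref{thm:factorization_simpson}, and surface factorization of $\redc$ is hypothesis~(1), so the remaining task is to show that the two surface sub-reductions combine correctly. Concretely, it suffices to establish a linear swap of the form
\[ \nsredx{\lcc} \cdot \sredx{\lcc} \ \subseteq \ \sredx{\lcc}^* \cdot (\nsredbc \cup \nsredc)^* \]
where $\sredx{\lcc} = \sredbc \cup \sredc$ and $\nsredx{\lcc} = \nsredbc \cup \nsredc$, and then to iterate Hindley's postponement argument (Lemma~\ref{lem:SP}).

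Splitting by cases on the two steps, the pure sub-cases $\nsredbc \cdot \sredbc$ and $\nsredc \cdot \sredc$ follow directly from the two known surface factorizations. The case $\nsredc \cdot \sredbc$ swaps essentially for free: a $\sredbc$-step acts at the surface, while a $\nsredc$-step is hidden under an $\oc$, so anticipating the $\sredbc$-step leaves at most several parallel copies of the $\gamma$-redex (one per occurrence of the substituted variable), each of which remains in a non-surface position; substitutivity of $\rredc$ (hypothesis~2) is what ensures that these copies are still $\gamma$-redexes.

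The critical case is $\nsredbc \cdot \sredc$, and this is where hypotheses~(2) and~(3) are designed to intervene. Here the plan is to lift the root linear swap (hypothesis~3) to a contextual swap by induction on the surface context $\ss$ hosting the $\gamma$-redex fired by the $\sredc$-step. The base case $\ss = \hole{}$ is immediate from hypothesis~(3). The inductive step treats two subcases: either the $\nsredbc$-step occurs in a position disjoint from the $\gamma$-redex (swap trivially), or the $\nsredbc$-step is the substitution that moves the $\gamma$-redex into its surface location, in which case substitutivity of $\rredc$ (hypothesis~2) guarantees that the $\gamma$-redex is already present before the substitution and can therefore be anticipated. Fact~\ref{fact:shape} and Lemma~\ref{lem:redex_bsi} are used throughout to keep the redex structure aligned across swaps.

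The main obstacle will be this contextual lifting: in particular, verifying that the extra $\redbc^{*}$-tail produced by hypothesis~(3) can be absorbed inside the non-surface tail $(\nsredbc \cup \nsredc)^{*}$ of the factorized sequence, which amounts to checking that each of these residual $\betac$-steps occurs in a non-surface position. The cleanest organization is to prove a slightly stronger contextual swap that records where the residual $\betac$-steps live (essentially a quantitative version of hypothesis~3 lifted under $\ss$), so that the tail bookkeeping chains smoothly through the iteration of Hindley's lemma.
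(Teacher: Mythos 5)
Your overall architecture — reduce the combination of the two factorizing sub-systems to local swap diagrams, handle the root cases with hypotheses (2) and (3), and lift to contexts — is in the right family, but the glue you propose does not hold. The global swap you aim for, $\nsredx{\lcc} \cdot \sredx{\lcc} \ \subseteq\ \sredx{\lcc}^{*} \cdot \nsredx{\lcc}^{*}$, has an unbounded internal tail, so it is \emph{not} an instance of Strong Postponement and \Cref{lem:SP} does not apply to it: Hindley's lemma crucially requires the shape $\ired \cdot \ered \subseteq \ered^{*}\cdot \ired^{=}$, and the weaker local property with $\ired^{*}$ in the tail does not imply factorization in general (each trailing internal step creates new peaks, and the induction on the number of internal steps breaks down). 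Your closing remark about ``recording where the residual $\betac$-steps live'' does not repair this, because the residue produced by hypothesis (3) is genuinely $\redbc^{*}$, i.e.\ possibly many steps, some of which may well be surface. The paper avoids this trap by invoking the modular factorization theorem of Accattoli--Faggian--Guerrieri (\Cref{thm:modular} in the appendix): given that $\redbc$ and $\redc$ each factorize, their union factorizes provided the two \emph{linear} swaps $\nsredbc \cdot \sredc \subseteq \sredc \cdot \redbc^{*}$ and $\nsredc \cdot \sredbc \subseteq \sredbc \cdot \redc^{*}$ hold — note the single surface step at the head and the unrestricted tail, which is exactly the shape one can actually prove. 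That theorem is the missing ingredient; it is not recoverable by iterating \Cref{lem:SP}.

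Two secondary points. First, you have the roles of hypotheses (2) and (3) partly crossed: substitutivity is what discharges the swap $\nsredc \cdot \sredbc \subseteq \sredbc \cdot \redc^{*}$ (the paper's \Cref{l:swap_after_b}: when the $\gamma$-redex sits in the function body or in the box being substituted, anticipating the $\betac$-step duplicates or relocates it, and \Cref{fact:subs} keeps it a $\gamma$-step), whereas the swap $\nsredbc \cdot \sredc \subseteq \sredc \cdot \redbc^{*}$ follows from the root case (3) by a plain induction on the surface context (\Cref{l:surf_swaps}), with no appeal to substitutivity. Second, in your $\nsredc \cdot \sredbc$ case the claim that the duplicated $\gamma$-redexes ``remain in non-surface position'' is false in general (a copy substituted at the head of an application can become surface); it is harmless for the linear swap, whose tail is unrestricted $\redc^{*}$, but it is another reason your target inclusion with a non-surface tail is not the right statement to prove.
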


We will  use the following easy property (an instance of \Cref{l:surf_swaps} in the Appendix).
%
%
%
\begin{lem}\label{fact:roots} Let  $\reda, \redc$ be the contextual closure of rules $\rreda,\rredc$.
	Then,  $\nsreda \cdot \rredc \subseteq  {\sredc} \cdot \reda^= $ implies 
			$\nsreda \cdot \sredc \subseteq  {\sredc} \cdot \reda^= $. 
\end{lem}

\paragraph{Root Lemmas.}
\Cref{l:sigma_basic,l:id_basic} below  provide  \emph{everything   we need} to verify the conditions of the modular test in \Cref{prop:test}, and so to establish surface factorization in $\lc$.

\begin{lem}[$\sigma$-Roots]\label{l:sigma_basic} 
	Let $\gamma \in \{\sigma, \id, \betac\}$. The following holds:
	\begin{center}
		{ $M\nsredx{\ \gamma}\ L \mapsto_{\sigma} N$ implies  $M \mapsto_{\sigma} \cdot  \red_{\gamma} N$.}
	\end{center}
\end{lem}
\begin{proof}We have $L = (\lam x.L_1)((\lam y.L_2)L_3)  \mapsto_{\sigma} (\lam y. (\lam x.L_1)L_2)L_3  = N$. Since  $L $ is a $\sigma$-redex,  $M$  also is a $\sigma$-redex (\Cref{lem:redex_bsi}). So   
		$M =(\lam x.M_1)((\lam y.M_2)M_3) \nsredc\allowbreak (\lam x.L_1)((\lam y.L_2)L_3) =L $, where for only one $i\in\{1,2,3\}$  $M_i \nsredc L_i$  and otherwise $M_j=L_j$, for  $j\neq i$  (by \Cref{fact:isteps} in the Appendix). 
		Therefore,  $M = (\lam x.M_1)((\lam y.M_2)M_3)  \mapsto_{\sigma} 
		 (\lam y. (\lam x.M_1) M_2)M_3 \allowbreak\nsredc  (\lam y. (\lam x.L_1)L_2)L_3 = N$.
\end{proof}

\begin{lem}[$\id$-Roots]\label{l:id_basic} 
	Let $\gamma \in \{\sigma, \id, \betac\}$. The following holds:
	\begin{center}
		{ $M\nsredx{\ \gamma}\ L \mapsto_{\id} N$ implies  $M \mapsto_{\id} \cdot  \red_{\gamma} N$.}
	\end{center}
\end{lem}
\begin{proof} We have $ L\eq  \II N \mapsto_{\id} N$.  Since  $L$ is an $\id$-redex,  $M$  also is  (\Cref{lem:redex_bsi}). So,
	$M\eq \II P \nsredx{\gamma} \, \II N $ for some $P \nsredx{\gamma} \, N$. Therefore, $M\eq \II P\mapsto_{\id}  P \nsredx{\gamma} \, N$.
\end{proof}

Let us make explicit the content of the two lemmas above. 
By  instantiating $\gamma \in \{\sigma, \id, \betac\}$ in \Cref{l:sigma_basic,l:id_basic}, and combining them with \Cref{fact:roots},  we obtain the following facts:

\begin{fact}\label{cor:sigma_basic} 
	\begin{enumerate}
	
		\item\label{p:sigma_basic-sigma-sigma}
		$M \nsredx{\sigma}	\cdot \mapsto_{\sigma} N$ implies  $M \mapsto_{\sigma} \cdot  \red_{\sigma}^= N$  and so  (\Cref{fact:roots})
		$M \nsredx{\sigma} \cdot \sredx{\sigma} N$ implies  $M \sredx{\sigma} \cdot  \red_{\sigma}^= N$ (\ie strong postponement holds).
		
		\item \label{p:sigma_basic-id-sigma}
		$M \nsredx{\id} \cdot \mapsto_{\sigma} N$ implies  $M \mapsto_{\sigma} \!\cdot\!  \red_{\id}^=  N$ and so  (\Cref{fact:roots}) 	$M \nsredx{\id} \!\cdot\! \sredx{\sigma} N$ implies  $M \sredx{\sigma} \cdot\!  \red_{\id}^=  N$.

		\item\label{p:sigma_basic-bc-sigma}
		 $M \nsredbc \cdot \mapsto_{\sigma} N$ implies  $M \mapsto_{\sigma} \cdot  \red_{\betac}^= N$.
		
	\end{enumerate}
\end{fact}


\begin{fact}\label{cor:id_basic} 

	\begin{enumerate}
		
		\item\label{p:id_basic-id-id}
		$M \nsredx{\id} \cdot \mapsto_{\id} N$ implies  $M \mapsto_{\id} \cdot  \red_{\id}^= N$, and so (\Cref{fact:roots}) $M \nsredx{\id} \cdot \sredx{\id} N$ implies  $M \sredx{\id} \cdot  \red_{\id}^= N$ (\ie strong postponement holds).
		\item \label{p:id_basic-sigma-id}
		$M \nsredx{\sigma} \cdot \mapsto_{\id} N$ implies  $M \mapsto_{\id} \!\cdot\!  \red_{\sigma}^=  N$, and so (\Cref{fact:roots})
		$M \nsredx{\sigma} \!\cdot\! \sredx{\id} N$ implies  $M \sredx{\id} \cdot\!  \red_{\sigma}^=  N$.
		
		\item \label{p:id_basic-bc-id}
		$M \nsredbc \cdot \mapsto_{\id} N$ implies  $M \mapsto_{\id} \cdot  \red_{\betac}^= N$. 
	\end{enumerate}
\end{fact}

\paragraph{Surface Factorization of  $\redx{\id} \cup \redx{\sigma}$.}
We can now combine the facts above concerning $\sigma$ and $\id$ steps, using the modular approach proposed in \cite{AccattoliFaggianGuerrieri21} (see \Cref{thm:modular} in the Appendix), to prove surface factorization of  $\redx{\id  \sigma} \, = \, \redx{\id} \cup \redx{\sigma} $.
\
\begin{lem}[Surface factorization of $\id  \sigma$]\label{cor:factorization_is}
	\emph{Surface factorization} of $\redx{\id} \cup \redx{\sigma}$ holds, because: 
	\begin{enumerate}
		\item\label{p:factorization-is-sigma} {Surface factorization} of $\redx{\sigma}$ holds (that is, $\redx{\sigma}^* \,\subseteq \sredx{\sigma}^* \cdot \nsredx{\sigma}^*$).   
		\item\label{p:factorization-is-id} 	{Surface factorization} of $\redx{\id} $ holds (that is, $\redx{\id}^* \,\subseteq \sredx{\id}^* \cdot \nsredx{\id}^*$). 
		\item\label{p:factorization-is-swap-id-sigma} Linear swap: $\nsredx{\id} \cdot \sredx{\sigma } \subseteq  \sredx{\sigma} \cdot \RedStar_{\id} $.
		\item\label{p:factorization-is-swap-sigma-id} Linear swap: $\nsredx{\sigma} \cdot \sredx{\id } \subseteq  \sredx{\id } \cdot \RedStar_{\sigma} $.
	\end{enumerate}
\end{lem}
\begin{proof}

	\Cref{p:factorization-is-sigma,p:factorization-is-id} follow from  \Cref{cor:sigma_basic}.\ref{p:sigma_basic-sigma-sigma} and \Cref{cor:id_basic}.\ref{p:id_basic-id-id}, respectively, by (linear) strong postponement (\Cref{lem:SP}).
	\Cref{p:factorization-is-swap-id-sigma}  is   \Cref{cor:sigma_basic}.\ref{p:sigma_basic-id-sigma}.
	\Cref{p:factorization-is-swap-sigma-id}  is   \Cref{cor:id_basic}.\ref{p:id_basic-sigma-id}.
\end{proof}

\paragraph{Surface Factorization of $\lc$, Modularly.}
We are now ready to use the modular test for surface factorization with $\betac$ (\Cref{prop:test}) to prove \Cref{thm:sfactorization}.

\SurfaceFactorization*

\begin{proof}
		All conditions in \Cref{prop:test} hold, namely
	\begin{enumerate}
		\item \emph{Surface factorization} of $\redx{\id} \cup \redx{\sigma}$ holds by \Cref{cor:factorization_is}.
		
		\item \emph{Substitutivity}: ~$\mapsto_{\id}$ and  $\mapsto_{\sigma}$ are substitutive (the proof is immediate).
		
		\item \emph{Root linear swap}: for $\xi \in \{\id,\sigma\}$, 
		$\nsredbc \cdot \mapsto_{\xi}\ \subseteq\ \mapsto_{\xi} \cdot  \red_{\betac} ^=$ by \Cref{cor:sigma_basic}.\ref{p:sigma_basic-bc-sigma} and \Cref{cor:id_basic}.\ref{p:id_basic-bc-id}.
		\qedhere
	\end{enumerate}	
\end{proof}

	
Interestingly, the same machinery can also be used to prove another surface factorization result, which says that surface factorization, when applied to $\RedStar_{\bs}$ only, does not create $\redid$ steps.

\begin{prop}[Surface factorization of $\bs$]
	\label{prop:sfactorization-bs}
	Reduction $\redx{\bs}$ admits surface factorization:
	\begin{align*}
		M\RedStar_{\bs}N \text{ implies }  M \sredxStar{\bs} \cdot  \nsredxStar{\bs} N.
	\end{align*}
\end{prop}

\begin{proof}
	All conditions in \Cref{prop:test} hold, namely
	\begin{enumerate}
		\item \emph{Surface factorization} of $\redx{\sigma}$ holds by \Cref{cor:sigma_basic}.\ref{p:sigma_basic-sigma-sigma} and  
		 strong postponement (\Cref{lem:SP}).
		
		\item \emph{Substitutivity}: ~$\mapsto_{\sigma}$ is substitutive (the proof is immediate).
		
		\item \emph{Root linear swap}: 
		$\nsredbc \cdot \mapsto_{\sigma}\ \subseteq\ \mapsto_{\sigma} \cdot  \red_{\betac}^=$ by \Cref{cor:sigma_basic}.\ref{p:sigma_basic-bc-sigma}.
		\qedhere
	\end{enumerate}	
\end{proof}

The fact that two similar factorization results (\Cref{thm:sfactorization,prop:sfactorization-bs}) can be proven by means of the \emph{same} modular test (\Cref{prop:test}) fed on \emph{similar} lemmas shows one of the benefits of our modular approach: passing from one result to the other is smooth and effortless.

\subsection{A Closer Look at $\id$ Steps, via Postponement}\label{subsec:iotapostponement}
We show a   postponement result for $\id$ steps on which  evaluation (\Cref{sec:values}) and normalization (\Cref{sec:normalization}) rely.
\condinc{}{
	Observe that the rule $\redid$ overlaps with $\redbc$.
	We define  as $\redi$ a $\redid$ step that is not a $\redbc$. In terms of rules: 
	\begin{equation}\tag{$\iota$ rule}
		\mapsto_{\ii} \deff \mapsto_{\id} \smallsetminus \mapsto_{\betac}
	\end{equation}
	Clearly, $\redlc \eq \redbc \cup \reds \cup \redi$.
}
Note that $\redid$ overlaps with $\redbc$.
We define $\redi$ as a $\redid$ step that is not $\redbc$.
 
\begin{equation}\tag{$\iota$-rule}
	\mapsto_{\ii} \deff \mapsto_{\id} \smallsetminus \mapsto_{\betac}
\end{equation}
Clearly, $\redlc \eq \redbc \cup \reds \cup \redi$.
In the proofs, it is convenient to split  $\redbc$ into steps that are also $\redid$ steps, and those that are not.
 \begin{align}\tag{$\beta1$-rule}
 	\mapsto_{\beta1} \deff \mapsto_{\id} \cap  \mapsto_{\betac}
	\\
	\tag{$\beta2$-rule}
	\mapsto_{\beta2} \deff \mapsto_{\betac} \smallsetminus \mapsto_{\id}
\end{align}
 Notice  that  $\redbc  \eq \red_{\beta1}  \cup \red_{\beta2}$.

We prove that $\redi$ steps can be postponed after both $\redbc$ and $\reds$ steps, by using van Oostrom's decreasing diagrams technique \cite{Oostrom94,DD} (\Cref{thm:DD}). 
The proof closely follows van Oostrom's proof of the  postponement of $\eta$ after $\beta$ \cite{vO20}.

\begin{restatable}[Postponement of $\ii$]{thmm}{DDiotapostponement}\label{thm:i_postponement}
If	$M\redx{\lcc}^*N$ then $M  \redx{\bs}^* \cdot   \redx{\ii}^* N$.
\end{restatable}
\begin{proof}
Let $\revredL{} \eq \redi$ and let $\redM{} \eq \redx{\beta1} \cup \redx{\beta2} \cup \reds$. 
We equip the labels $\{\beta1, \beta2, \sigma, \iota\}$ with the following (well-founded) order
\[\beta2<\iota  \quad \quad \iota<\beta1   \quad\quad \iota < \sigma\]
We prove that the pair of relations 
$\redL{},\redM{}$ is decreasing by checking the following three local commutations  hold (the technical details are in the Appendix):
\begin{enumerate}
	\item  $\redi \cdot \redx{\beta1} ~{\subseteq}~  \redx{\beta1}^* \cdot \redi^= $  (see \Cref{lem:iotaVSbeta1});
	\item $\redi \cdot \redx{\beta2} ~{\subseteq}~ \redx{\beta1}^* \cdot \redx{\beta2}^= \cdot\redx{\beta1}^* \cdot\redi^* $ (see \Cref{lem:iotaVSbeta2});
	\item $\redi \cdot \reds ~{\subseteq}~   (\reds \cup \redx{\beta1 })^* \cdot \redi^= $  (see \Cref{lem:iotaVSsigma}).
\end{enumerate}
Hence, 
by  \Cref{thm:DD}, the relations  $\redL{}$ and $\redM{}$ commute. That is,  $\redi$ postpones after 
$\redbc \cup \reds $:
\[\redi^* \cdot \redx{\bs}^*   ~\subseteq~ \redx{\bs}^* \cdot \redi^*  \]
Or equivalently (\Cref{lem:postponement_eq}),  $\redlc^*  ~\subseteq~ \redx{\bs}^* \cdot \redi^*$.
\end{proof}

\begin{restatable}[Surface factorization + $\ii$ postponement]{coroll}{ipostponement}\label{cor:i_postponement} 
	\label{cor:surface_i}
 If $M\RedStar_{\lcc} N$ then  $M \sredxStar{\bs } \!\cdot \nsredxStar{\bs } \!\cdot  \redx{\ii}^*   N$.
\end{restatable}

\begin{proof}
	Immediate consequence of $\ii$-postponement (\Cref{thm:i_postponement}) and of surface factorization of the resulting initial $\redx{\bs}$-sequence (\Cref{prop:sfactorization-bs}).
\end{proof}

%
%


\subsection{Weak Factorization}\label{sec:w_factorization}

\newcommand{\bsize}[1]{|#1|}

Thanks to surface factorization plus $\ii$-postponement (\Cref{cor:i_postponement}), every $\redlc$-sequence can be rearranged so that it starts with an $\sredx{\bs}$-sequence. 
We show now that such an initial $\sredx{\bs}$-sequence can in turn be factorized into weak steps followed by non-weak steps.
This \emph{weak factorization} result will be used in \Cref{sec:values} to obtain evaluation via weak $\betac$ steps (\Cref{thm:values}).

Remarkably, weak factorization of an $\sredx{\bs}$-sequence \emph{preserves}  the number of $\betac$ steps. 
This property has no role with respect to evaluation, but it will be  crucial when  we investigate   normalizing strategies in \Cref{sec:normalization}. 
For this reason we include it in the statement of \Cref{prop:weak_fact}.

\SLV{}{ It will allows us to establish that all maximal $\wredx{\bs}$-sequences have the same number of $\beta$-steps, from which it follows that such (non deterministic and non confluent ) reduction is uniformly normalizing. We obtain a similar result follows also for the  $\sredx{\bs}$-sequences.}

\paragraph{Quantitative Linear Postponement.}
Let us take an abstract point of view.
The condition in \Cref{lem:SP}---Hindley's strong postponement---can  be refined into quantitative (linear) variants, which  allow us to ``count the steps'' and are useful to establish termination properties.

\begin{lem}[Linear  postponement]\label{lem:quantitative_postponement} Let $(A, \red)$ be an ARS with $\red \eq \ered \cup \ired$.
	\begin{itemize} 
		\item  	If  $\ired \cdot \ered \ssubseteq  \ered \cdot \ired^=$, then 
		$M\red^*N$ implies $M \ered ^*\cdot \ired^* N$ and the two sequences have the same number of $\ered$ steps.
		
%
		\item For all $l\in L$ with $L$ a set of indices, let $\redx l \eq  \eredx{\, l} \cup \iredx{\,\, l}$ and $\ered =\bigcup\limits_{l} \eredx{\, l}$ and $\ired = \bigcup\limits_{l} \iredx{\,\, l}$. 
		Assume
			\begin{align} \label{eq:linear-postponement}
				\iredx{\,\, j} \cdot \eredx{\,\, k} \ssubseteq  \eredx{\,\, k} \cdot \redx j~ \text{ for all } j,k\in L
			\end{align} 
			Then,
			$M\red^* N$ implies $M \ered ^*\cdot \ired^* N$ and the two sequences have the same number of $\redx l$ steps, for each $l\in L$.
	\end{itemize}
	
\end{lem}

			Observe  that in \eqref{eq:linear-postponement}, the last step is $ \redx j $, not necessarily $ \iredx{\,\, j} $.

\paragraph{Weak Factorization.}

We show two kinds of \emph{weak factorization}: a surface $\betac\sigma$ sequence can be reorganized, so that non-weak $\betac\sigma$ steps are postponed after the weak ones; 
and a weak $\betac\sigma$ sequence can in turn be rearranged so that weak $\betac$ steps are before weak $\sigma$ steps. 

\begin{thmm}[Weak factorization]\label{prop:weak_fact}\hfill
	\begin{enumerate}
		\item\label{p:weak_fact-from-surface} If	$M\sredxStar{\bs} N$ then $M\wredxStar{\bs} \cdot \nwredxStar{\bs} N$ where all steps are surface, and the two sequences have the same number of $\betac$ steps.
		\item\label{p:weak_fact-from-weak} If	$M\wredxStar{\bs} N$ then $ M\wredxStar{\betac} \!\cdot \wredxStar{\sigma} N $, and the two sequences have the same number of $\betac$ steps.
	\end{enumerate}
\end{thmm}

\begin{proof}In both claims, we use \Cref{lem:quantitative_postponement}. 
	Its linearity allows us to count the $\betac$ steps.
	In the proof, we write $\wred$ (resp.  $\sred $) for $\wredx{\bs}$ (resp.  $\sredx{\bs} $).

	\begin{enumerate}
		\item Let 	  $\ired \eq \sred \smallsetminus \wred$ (\ie $\ired$ is a surface step whose redex is in the scope of a $\lambda$). We   prove linear postponement: 
		\begin{align}\label{eq:linear-postponement-bis}
			 \ired \cdot \wred  \ssubseteq  \wred \cdot \sred
		\end{align}
		
		Assume $M\ired L \wred N$:  $M$ and $L$ have the same shape, which is not 
		$\oc U$, otherwise no weak or surface step from $M$ is possible. 
		We examine the cases.
		\begin{itemize}
			\item The step $L\wred N$ has  empty context:
			\begin{itemize}
				\item $L\mapsto_{\betac}N.$
				Then  $L= (\lam x. P') !V \rredbc   P'\subs x V = N$, and $M= (\lam x. P) !V \ired  (\lam x. P') !V   $ with $P \sred P'$. \\Therefore   $ (\lam x. P) !V \rredbc P\subs x V \sred P'\subs x V = N$.
				
				\item $L \mapsto_{\sigma}N$. 
				Then $L = V ( (\lam x. P) Q) \mapsto_{\sigma}  (\lam x. V P) Q = N $,
				and    $M=  V_0 ( (\lam x. P_0) Q_0) \ired\allowbreak  V ( (\lam x. P) Q)$ where exactly one among $V_0, P_0, Q_0$ reduces to $V, P, Q$, respectively, the other two are unchanged.
				So, $M=  V_0 ( (\lam x. P_0) Q_0)   \mapsto_{\sigma}  (\lam x. V_0 P_0) Q_0 \ired (\lam x. V P) Q = N $.
			\end{itemize}
			\item The step $L \wred N$ has  non-empty context. Necessarily, we have  $L = VQ \wred VQ'$, with $Q \wred Q'$:
			
			\begin{itemize}
				\item Case   $M=M_VQ\ired VQ  \wred VQ' = N$ with $M_V \ired V$ and $Q \wred Q'$, then $M_VQ \wred M_VQ'  \ired VQ'$.

				\item Case $M=VM_Q\ired VQ \wred VQ' = N$ with $M_Q \ired Q \wred Q$. We conclude by \ih.
				
			\end{itemize}
		\end{itemize}	
		
		Observe that we have proved more than \eqref{eq:linear-postponement-bis}, namely we proved
		\[\iredx{\;j} \cdot \wredx k  \ssubseteq  \wredx k \cdot \sredx j  ~~~ (\text{for all } j,k\in\{\betac,\sigma\})\]
		So, we conclude that the two sequences have the same number of $\betac$ steps, by  \Cref{lem:quantitative_postponement}.
		
		\item We prove $\wreds \cdot \wredbc \ssubseteq  \wredbc \cdot \wreds^=$ similarly to \Cref{p:weak_fact-from-surface}, and conclude by \Cref{lem:quantitative_postponement}.
		\qedhere
	\end{enumerate}

\end{proof}

Combining \Cref{p:weak_fact-from-surface,p:weak_fact-from-weak} in \Cref{prop:weak_fact}, we deduce that 
\begin{center}
	$M\sredxStar{\bs} !V$ implies $ M\wredxStar{\betac}\cdot \wredxStar{\sigma} \cdot \nwredxStar{\bs}\oc V $
\end{center}
and the two sequences from $M$ to $\oc V$ have the same number of $\betac$ steps.

%

\section{Returning a Value}\label{sec:values}

In this section we focus on \emph{values}. 
They are the terms of interest in the CbV $\lam$-calculus. 
Also, for weak reduction there,  closed values
are exactly the normal forms of closed terms, \ie  of \emph{programs}.

 In a computational setting such as $\lc$, we are interested in knowing if a term $M$
\emph{returns} a value, \ie if $M\redlc^* \oc V$ for some value $V$, noted $M \Downarrow$ (the computation $\oc V$ is sometimes called a \emph{returned value}, in that it is the coercion of a value $V$ to the computational level).
Since a term may be reduced in several ways and so its reduction graph can become quite complicated, it is natural to search for deterministic reductions to return a value.
Hence, the question is: if $M$ returns a value, is there a \emph{deterministic} reduction (called \emph{evaluation}) from $M$ that is guaranteed to return a value? The answer is positive. 
In fact, there are two such reductions:
$\wredbc$ and $ \Root{\betac\sigma}  $ (\Cref{thm:return-value} below).
%
%
Recall that   $\mapsto_{\bs} \eq (\mapsto_{\betac}  \cup \mapsto_{\sigma})$ is the union of two rules without any contextual closure. 

Thanks to their simple reduction graph, deterministic reductions are quite useful in particular for proving negative results such as showing that a computation cannot return a value.

\begin{fact}\label{fact:determinism}
	In $\lc$, reductions $\wredx{\betac}$, $\Root{\betac}$, $ \Root{\sigma}$, and $ \Root{\betac\sigma} \,=\, \Root{\betac} \!\cup \Root {\sigma} $ are deterministic.
\end{fact}

In $\lc$ one of the reasons for the interest in values is that, akin to the CbV $\lam$-calculus, \emph{closed} (\ie, without free variables) returned values are exactly the closed normal forms for weak reductions $\wredx{\lcc}$ and $\wredbc$.
This is a consequence of the following syntactic characterizations of normal forms.

\begin{prop}
	\label{prop:normal}
	A computation is normal for reduction $\wredx{\lcc}$ (resp. $\sredx{\lcc}$; $\redlc$) if and only if it is of the form $\normalweak$ (resp. $\normalsurface$; $\normalfull$) defined below, where $\hat{M}$ denotes a computation $M \neq \oc x$ for any $x \in \Var$.
	
	\begin{align*}
		\normalweak &\Coloneqq \oc V \mid \appnormalweak \mid (\lambda x. \hat{M})\appnormalweak & \appnormalweak &\Coloneqq x\normalweak
		\\
		\normalsurface &\Coloneqq \oc V \mid \appnormalsurface \mid (\lambda x. \hat{\normalsurface})\appnormalsurface & \appnormalsurface &\Coloneqq x\normalsurface
		\\
		\normalfull &\Coloneqq \oc x \mid \oc \lam. \normalfull \mid \appnormalfull \mid (\lambda x. \hat{\normalfull})\appnormalfull & \appnormalfull &\Coloneqq x\normalfull
	\end{align*}
\end{prop}

\begin{proof}
	The right-to-left part is proved by induction on $\normalweak$ (resp. $\normalsurface$; $\normalfull$).
	The left-to-right part follows easily from the observation that every computation can be written in a unique way as $V_1(\dots (V_n \oc V_0) \dots )$ for some $n \geq 0$ and some values $V_0, \dots, V_n$.
\end{proof}

\begin{coroll}[Closed normal forms]
	\label{cor:closed-normal}
	Let $\to \in \{\wredbc, \wredx{\lcc}, \sredbc, \sredx{\lcc}\}$. 
	A closed computation is $\to$-normal if and only if it is a returned value.
\end{coroll}

\begin{proof}
	 Computations of shape $\appnormalweak$ and $\appnormalsurface$ have a free variable.
	 So,  according to \Cref{prop:normal}, closed returned values are all and only the closed normal forms for $\wredx{\lcc}$ and $\sredx{\lcc}$.
	 
	 Moreover, since every closed computation $M$ can be written in a unique way as $V_1(\dots (V_n \oc V_0) \dots )$ for some $n \geq 0$ and some closed values $V_0, \dots, V_n$, if $M$ is $\wredbc$-normal or $\sredbc$-normal then $n=0$ (otherwise $V_n \oc V_0$ would be a $\betac$-redex), hence $M$ is a returned value.
\end{proof}

\Cref{cor:closed-normal} means that reductions $\wredbc, \wredx{\lcc}, \sredbc, \sredx{\lcc}$ behave differently only on \emph{open} computations (that is, with at least one free variable).

We can now state the main result in this section. 
\Cref{sec:values_via_betab,sec:values_via_roots}  are devoted to prove it.

\begin{thmm}[Returning a value] 
	\label{thm:return-value}
	The following are equivalent:
	\begin{enumerate}
		\item\label{p:return-value-full} $M$ returns a value, \ie $M\redlc^* \oc V$.
		
		\item\label{p:return-value-bs}   The maximal $\wredx{\betac}$-sequence from $M$ is \emph{finite} and ends in a returned value $!W$.
		
		\item\label{p:return-value-root}   The maximal $\Root{\betac \sigma}$-sequence from $M$ is \emph{finite} and ends in a returned value   $!W$.
	\end{enumerate}
\end{thmm}

\begin{proof} $(\labelcref{p:return-value-full}) \implies (\labelcref{p:return-value-bs})$ is \Cref{thm:values} below, which we prove in forthcoming \Cref{sec:values_via_betab}.
	
	$(\labelcref{p:return-value-bs}) \implies (\labelcref{p:return-value-root})$ is \Cref{lem:two_returns} below, which we prove in forthcoming \Cref{sec:values_via_roots}.
	
	$(\labelcref{p:return-value-root}) \implies (\labelcref{p:return-value-full})$ is trivial.
\end{proof}

Note that \Cref{thm:return-value} (and hence the analysis that will follow) is not restricted to closed terms.	
Indeed, an open term may well return a value. For example, $!(\lam x. \oc z)$ or $\oc x$ or $(\lam x. \oc x) \oc z$.

\subsection{Values via  Weak $\betac$ Steps}\label{sec:values_via_betab}
Thanks to factorization, we can prove  that  $\wredbc$ steps suffice to return a value.
This is an immediate consequence of 
 surface factorization plus $\ii$ postponement (\Cref{cor:i_postponement}), and weak factorization (\Cref{prop:weak_fact}), and  the fact that    non-weak steps, $\ii$ steps, and $\sigma$ steps cannot produce $!$-terms.

 \begin{lem}\label{lem:!preservation}
		If $M \redx{\lcc} \oc V$ with a  step that is \emph{not}  $M \wredx{\betac} \oc V$, then $M=\oc W$ for some value $W$.
  \end{lem}
 \begin{proof}

Indeed, one can easily check the following (recall that $\redi \,=\, \redid \smallsetminus \redbc$).
 	\begin{itemize}	
 		
 		\item 	If $M \redx{\sigma} \oc V$, then $M = \oc W$ for some value $W$ (proof by induction on $M$). 
 		
 		\item 	If $M \redx{\ii} \oc V$, then $M = \oc W$ for some value $W$ (proof by induction on $M$). 
 		 		
 		
 		 \item 	If $M \nwredx{\betac} \oc V$, then $M = \oc W$ for some value $W$ (by shape preservation,  \Cref{fact:shape}).
 		 \qedhere
 	\end{itemize}

  \end{proof}


 \begin{restatable}[Values via weak $\betac$ steps]{thmm}{StateValues}\label{thm:values}	The following are equivalent:
 	\begin{enumerate}
 		\item\label{p:values-lc} $M\redlcStar \oc V$ for some $V\in \ValTerm$;
 		
 		\item\label{p:values-betaweak}  $M\wredxStar{\betac} \oc W$  for some $W \in \ValTerm$.
 	\end{enumerate}
 \end{restatable}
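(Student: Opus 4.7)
The direction $(\ref{p:values-betaweak}) \Rightarrow (\ref{p:values-lc})$ is immediate, since $\wredxStar{\betac} \subseteq \redlcStar$. The interesting direction is the forward one, and my plan is to peel off successive layers of reduction, repeatedly using \Cref{lem:!preservation} to argue that a $\oc$-term at the end forces its predecessor to already be a $\oc$-term.

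Concretely, starting from $M \redlcStar \oc V$, I would first apply the refined factorization \Cref{thm:i_postponement}(\ref{p:i_postponement-lc}) to obtain a reduction of the form
\[ M \sredxStar{\bs} \; N_1 \; \nsredxStar{\bs} \; N_2 \; \redx{\ii}^* \; \oc V. \]
Since $\redx{\ii}$ steps cannot create a $\oc$-term from a non-$\oc$-term (\Cref{lem:!preservation}), a straightforward induction on the length of the $\iota$-sequence gives $N_2 = \oc V_2$ for some value $V_2$. The same argument applied to the non-surface phase (again using \Cref{lem:!preservation}) yields $N_1 = \oc V_1$. Thus we have extracted $M \sredxStar{\bs} \oc V_1$.

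Next, I apply weak factorization of $\sredx{\bs}$, namely \Cref{prop:weak_fact}(\ref{p:weak_fact-from-surface}), to obtain $M \wredxStar{\bs} L \nwredxStar{\bs} \oc V_1$. Once more by \Cref{lem:!preservation}, non-weak steps cannot yield a $\oc$-term unless $L$ is already one, so $L = \oc V_0$. This gives $M \wredxStar{\bs} \oc V_0$. Finally, \Cref{prop:weak_fact}(\ref{p:weak_fact-from-weak}) turns this into $M \wredxStar{\betac} \cdot \wredxStar{\sigma} \oc V_0$, and a last application of \Cref{lem:!preservation} (to the $\wredxStar{\sigma}$ phase) collapses it to $M \wredxStar{\betac} \oc U$ for some value $U$, as required.

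The proof is essentially an assembly of the factorization toolkit built in \Cref{sec:factorizations}; the only real content is the repeated use of the shape-preservation observation (\Cref{lem:!preservation}) that $\sigma$, $\iota$, non-weak and non-surface steps cannot transform a non-$\oc$-term into a $\oc$-term. There is no genuine obstacle: all four factorization/postponement results are already in place, and each application is routine. The one point to double-check is that \Cref{lem:!preservation} truly covers every single layer being stripped; in particular that a reduction ending in $\oc V$ at the root forces the whole predecessor term to already be at the root a $\oc$-term, which follows directly from \Cref{fact:shape}.
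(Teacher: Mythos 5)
Your proposal is correct and follows essentially the same route as the paper: the paper's proof likewise chains surface factorization with $\iota$-postponement (you invoke \Cref{thm:i_postponement}.\ref{p:i_postponement-lc} directly, which packages \Cref{thm:sfactorization} and point \ref{p:i_postponement-surface}) and then weak factorization (\Cref{prop:weak_fact}), finally iterating \Cref{lem:!preservation} backwards from $\oc V$ to force every intermediate term to be a $\oc$-term. The only difference is presentational: you collapse each phase to a $\oc$-term layer by layer, while the paper assembles the full factorized sequence first and then walks back through it once.
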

 

\begin{proof}
	\Cref{p:values-betaweak} trivially implies \Cref{p:values-lc}, as $\wredx{\betac} \subseteq \, \redlc$.
	Let us show that Point \ref{p:values-lc} entails \Cref{p:values-betaweak}.
	
	If $M\redlcStar \oc V$ then
	$M \sredxStar{\bs } \cdot \nsredxStar{\bs } \cdot  \redx{\ii}^*   \oc V$ by surface factorization plus $\ii$ postponement (\Cref{cor:surface_i}).
	By weak factorization (\Cref{prop:weak_fact}.\ref{p:weak_fact-from-surface}-\ref{p:weak_fact-from-weak}), we have 
\[	M \wredxStar{\betac}  M'  \wredxStar{\sigma} \cdot  \nwredxStar{\bs} \cdot  \nsredxStar{\bs} \cdot \redi^*  \oc V.\]

	
	By iterating \Cref{lem:!preservation} from $!V$ backwards (and since $\nsredx{\Rule} \subseteq \nwredx{\Rule}$), we have that all terms  in the sequence from $M'$ to $!V$ are $!$-terms.
	So in particular, $M'$ has shape $\oc W$ for some value $W$.
\end{proof}
%
%

\begin{rem}
\Cref{thm:values} was  already claimed in \cite{deLiguoroTreglia20}, for closed terms.
However, the inductive  argument there (which does not use any factorization) is fallacious, it does not suffice to produce  a complete proof in the case where $M \nwredx{\lcc} \cdot \wredx{\lcc} \oc V$. 

\end{rem}

\subsection{Values  via $\bs$ Root Steps}\label{sec:values_via_roots}
We also show   an alternative way   to evaluate a term in $\lc$. 
Let us call \emph{root steps} the rules $\mapsto_{\betac}, \mapsto_{\sigma}$ and $\mapsto_{\id}$. 
The first two suffice to return a value, without the need for any  contextual closure.

Note that this property  holds only because  terms are restricted  to computations (for example, in Plotkin's CbV $ \lam $-calculus, 
$(II)(II)$ can be reduced, but it is not itself a redex, so $(II)(II) \not\mapsto_{\betav}$).

%
%
Looking closer at the proof of \Cref{cor:closed-normal}, we observe that any closed (\ie without free variables) computation has the following property: it \emph{is} either a returned value (when $n = 0$), or a $\betac$-redex (when $n = 1$) or a $\sigma$-redex (when $n > 1$).
More generally, the same holds for any (possibly open) computation that returns a value (\Cref{cor:root_redex} below).

\begin{lem}\label{lem:weakbeta} Assume $M \wredxStar{\betac} \oc W$ for some value $W$. Then, 
	\begin{itemize}
		\item either $M = \oc W$,
		\item or $M=(\lam x.P)M'$  and $M'\wredxStar{\betac} \oc U$, for some value $U$.
	\end{itemize}

Thus, $M = V_1(\dots (V_n!U) \dots)$, where $n \geq 0$ and the $V_i$'s are abstractions, and  if $n > 0$ then
	 $M = V_1 \dots (V_{n-1}(\lam x_n.P_n)!U) \dots) \,\wredbc\, V_1(\dots (V_{n-1} P_n \subs{x_n}U ) \dots ) $. 
\end{lem}

\begin{coroll}[Progression via root steps]
	\label{cor:root_redex}
	If $M$ returns a value (\ie $M \redlc^* \oc W$ for some value $W$), then   $M$ 
is either a $\betac$-redex, or  a $\sigma$-redex, or it  has shape $\oc V$ for some value $V$. 
\end{coroll}

\begin{proof}
	By \Cref{thm:values}, $M \wredbc^* \oc W'$ for some value $W'$.
	By \Cref{lem:weakbeta}, we conclude. 
\end{proof}

\Cref{cor:root_redex} states a \emph{progression} results: a $\Root{\bs}$-sequence from $M$ may only end in  a $!$-term.
We still need to verify that such a sequence terminates.

\begin{prop}[Weak steps and root steps]\label{lem:two_returns}
If $M \wredxStar{\betac} !W$ then $M \Root{\betac \sigma}^* \oc W$.
 Moreover, the two sequences have the same number of $\betac$ steps.
\end{prop}
\begin{proof}
	By induction on the number $k$ of $\wredbc$ steps. If $k=0$ the claim holds trivially.	
	Otherwise, $M\wredbc M_1\wredxStar{\betac} !W$ and by \ih 
	\begin{align}\label{eq:root}
		M_1 \Root{\betac \sigma}^* \oc W.
	\end{align}

	\begin{itemize}
		\item 	If $M$ is $\betac$-redex, then $M\Root{\betac} M_1$ by determinism of $\wredbc$ (\Cref{fact:determinism}), and the claim is proved.

\item 	If  $M$ is a $\sigma$-redex, observe that  by \Cref{lem:weakbeta}, 
\begin{itemize}
	\item $M= (\lam x_0.P_0)(\dots (\lam x_{n-1}.P_{n-1})((\lam x_n.P_n)\oc U) \dots )$, and 
	\item $M_1\eq (\lam x_0.P_0)(\dots (\lam x_{n-1}.P_{n-1})(P_n \subs{x_n}U) \dots )$.
\end{itemize}

 	We apply all possible $\Root{\sigma}$ steps starting from $M$, obtaining 
	\begin{align*}
	M &\Root{\sigma}^*  ( \lam x_{n-1}.(\dots (\lam x_0.P_0)\dots ) P_{n-1}) ((\lam x_n.P_n)\oc U)  
	\\
	&\Root{\sigma}  (\lam x_n.(\lam x_{n-1}.(\dots (\lam x_0. P_0) \dots ) P_{n-1}) P_n)\oc U = M'
	\end{align*}
	which is a $\betac$-redex, so $M'\Root{\betac} (\lam x_{n-1}.(\dots (\lam x_0. P_0) \dots ) P_{n-1}) (P_n\subs{x_n}{U}) \eqdef N$
	(note that we used the hypothesis on free variables of $\Root{\sigma}$).
	We observe that  $M_1 \Root{\sigma}^* N$. 
	We conclude, by using $\eqref{eq:root}$ and the fact that $\Root{\bs}$ is deterministic (\Cref{fact:determinism}).
	\qedhere
	\end{itemize}	
\end{proof}
The converse of \Cref{lem:two_returns} is also true and immediate.
We can finally prove that \emph{root} steps $\mapsto_{\betac}$ and $\mapsto_{\sigma}$ suffice to return a value, without the need for any contextual closure.

\begin{restatable}[Values via root $\betac\sigma$ steps]{thmm}{StateValuesRoot}\label{thm:values-root}	The following are equivalent:
	\begin{enumerate}
		\item\label{p:values-root-lc} $M\redlcStar \oc V$ for some $V\in \ValTerm$;
		
		\item\label{p:values-root-betas}  $M\Root{\betac\sigma}^* \oc W$  for some $W \in \ValTerm$.
	\end{enumerate}
\end{restatable}

\begin{proof}
Trivially $(\labelcref{p:values-root-betas}) \implies (\labelcref{p:values-root-lc})$. 
Conversely, $(\labelcref{p:values-root-lc}) \implies (\labelcref{p:values-root-betas})$ by \Cref{lem:two_returns} and \Cref{thm:values}.
\end{proof}

\subsection{Observational Equivalence}\label{sec:adequacy}
We now adapt the notion of observational equivalence, introduced in \cite{Plotkin'75} for the CbV $\lam$-calculus, to $\lc$. 
Informally, two terms  are observationally equivalent if they can be substituted for each other in all contexts without observing any difference in their behavior. For a computation $M$ in $\lc$, the ``behavior'' of interest   is \emph{returning a value}: $M \redlc^* \oc V$ for some value $V$, also noted $M \Downarrow$. 

\begin{defn}[Observational equivalence]
	\label{def:observational-equivalence}
	Let $M, N \in \ComTerm$.
	We say that $M$ and $N$ are \emph{observationally equivalent}, noted $M\cong N$, if for every context $ \cc $, $\cc\hole{M} \Downarrow$ if and only if $\cc\hole{N} \Downarrow$.
\end{defn}

A consequence of \Cref{thm:values} is that the behavior of interest in \Cref{def:observational-equivalence} can be equivalently defined as  $\cc\hole{M} \wredbc^* \oc V$ for some value $V$, instead of  $\cc\hole{M} \Downarrow$: the resulting notion of observational equivalence would be exactly the same.
The definition using $\wredbc$ instead of $\redlc$ is more in the spirit of Plotkin's original one for  the CbV $\lam$-calculus \cite{Plotkin'75}.
Reduction $\wredbc$ is deterministic, and for closed terms it terminates if and only if it ends in a returned value (\Cref{cor:closed-normal}). 
Hence, for \emph{closed} terms, returning a value amounts to say that their evaluation $\wredbc$ \emph{halts}.

The advantage of our \Cref{def:observational-equivalence} is that it allows us to prove an important property of observational equivalence---the fact that it contains the equational theory of $\lc$ (\Cref{cor:observational-equivalence})---in a very easy way, thanks to the following obvious lemma and adequacy (\Cref{thm:adequacy}).

\begin{lem}[Value persistence]
	\label{lem:value-persistence}
	For every value $V$, if $\oc V \redlc M$ then $M = \oc W$ for some value $W$.
\end{lem}

\begin{proof}
	In $\lc$, no redex has shape $\oc V$ for any value $V$, hence the step $\oc V \redlc M$ is obtained via a non-empty contextual closure.
	By shape preservation (\Cref{fact:shape}), $M = \oc W$ for some value $W$.
\end{proof}

An easy argument,  similar   to that  in \cite{Crary09} (which in turn simplifies the one in \cite{Plotkin'75}) gives:

\begin{thmm}[Adequacy] 
	\label{thm:adequacy}
	If $M\RedStar_{\lcc} N$ then $M \Downarrow$   if and only if $N \Downarrow$.
\end{thmm}

\begin{proof} 
	Suppose $N \Downarrow$, that is, $N \RedStar_{\lcc} \oc V$ for some value $\oc V$. 
	Therefore, $M \RedStar_{\lcc} N \RedStar_{\lcc} \oc V$ and so $M \Downarrow$.
	
	Conversely, suppose $M \Downarrow$, that is, $M \RedStar_{\lcc} \oc V$ for some value $V$. 
	By confluence of $\redlc$ (\Cref{prop:confluence}), since $M \redlcStar N$, there is $L \in \ComTerm$ such that $N \redlcStar L$ and $\oc V \redlcStar L$.	
	Since $\oc V$ is a returned value, so is $L$ by value persistence (\Cref{lem:value-persistence}). Therefore, $N \Downarrow$.
\end{proof}

\begin{coroll}[Observational equivalence contains equational theory]
	\label{cor:observational-equivalence}
	If $M \eqlc  N$ then $M\cong N$.
\end{coroll}

\begin{proof}
 As $M \eqlc N$, there are $L_0, \dots, L_n \in \ComTerm$ ($n \geq 0$) such that $M = L_0 \leftrightarrow_{\lcc} L_1 \leftrightarrow_{\lcc} \dots \leftrightarrow_{\lcc} L_n = N$, where $\leftrightarrow_{\lcc} \, \defeq \, \redlc \cup \revred_{\lcc}$.
 Hence, for every context $\cc$, $\cc\hole{L_0} \leftrightarrow_{\lcc} \cc\hole{L_1} \leftrightarrow_{\lcc} \dots \leftrightarrow_{\lcc} \cc\hole{L_n}$.
 By adequacy (\Cref{thm:adequacy}), $\cc\hole{L_i} \Downarrow$ if and only if $\cc\hole{L_{i+1}} \Downarrow$ for all $1 \leq i < n$. 
 Thus, $M \cong N$.
\end{proof}

The converse of \Cref{cor:observational-equivalence} fails.
Indeed, $\oc \lambda x.\oc x \cong \oc \lambda x. \oc\lambda y. x \oc y$ but $\oc \lambda x.\oc x \not\eqlc \oc \lambda x. \oc\lambda y. x \oc y$.


\newcommand{\s}{\mathfrak{s}}
\section{Normalization and Normalizing Strategies}\label{sec:normalization}
In this section we study normalization and normalizing strategies in $\lc$.


Reduction $\redlc$ is obtained by adding $\redi$ and $\reds$ to $\redbc$.  
What is the role of $\ii$ steps and $\sigma$ steps with respect to normalization in $\lc$?
Perhaps surprisingly, despite the fact that both $\redi$ and $\reds$ are strongly normalizing (\Cref{lemma:strong-normalizing-idsigma} below), their role is quite different.

\begin{enumerate}
		\item\label{p:betasigma} Unlike the case of terms returning a  value we studied  in \Cref{sec:values}, $\betac$ steps do not suffice to capture   $\lcc$-normalization, in that $\sigma$ steps may turn a $\betac$-normalizing term into one that is not $\lcc$-normalizing. That is, $\sigma$ steps are \emph{essential} to   normalization in $\lc$ (see \Cref{subsect:betasigma}).
	
		\item\label{p:irrelevance} $\ii$ steps  instead are \emph{irrelevant} for normalization in $\lc$, in the sense that they  play no role. 
		Indeed, a term 
		has a $\lcc$-normal form if and only if it has a $\bs$-normal form
		(see \Cref{subsect:irrelevance}).

\end{enumerate}

Taking  into account both  \Cref{p:irrelevance,p:betasigma}, in \Cref{subsect:normalizing-strategies} we    define two families of normalizing strategies in $\lc$.
The first one, quite constrained,  relies on an\emph{ iteration of weak reduction} $\wredx{\lcc}$.
The second one, more liberal, is based on an \emph{iteration of surface reduction} $\sredx{\lcc}$.
%
 The interest of a rather liberal strategy is that it provides \emph{a more versatile framework} to reason about program transformations, or optimization techniques such as parallel  implementation.

\paragraph{Technical Lemmas: Preservation of Normal Forms.}
We collect here some properties of preservation of (full, weak and surface) normal forms, which we will use along the section.
{The easy proofs are in Appendix \ref{app:normalization}.}

\begin{restatable}[]{lem}{lemredi}
	\label{lem:redi}
	Assume $M\redi N$.
	\begin{enumerate}
		\item\label{p:redi-beta-normal} $M$ is $\betac$-normal  if and only if  $N$ is $\betac$-normal.
		\item\label{p:redi-sigma-normal} If $M$ is $\sigma$-normal,  so is $N$. 
	\end{enumerate}
\end{restatable}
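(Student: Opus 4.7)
The plan is to carry out a careful position-based analysis of the step $M \redi N$. First I would unpack the step: by definition of $\mapsto_\ii = \mapsto_{\id} \smallsetminus \mapsto_{\betac}$, we have $M = \cc\hole{(\lambda x.\oc x)P}$ and $N = \cc\hole{P}$ for some context $\cc$, where crucially $P \neq \oc V$ for any value $V$ (otherwise the step would also be a $\betac$-step). Since $P$ is a computation and is not a return, we must have $P = V'M'$, so both $(\lambda x.\oc x)P$ and $P$ are \emph{applications}. This shape preservation at the $\ii$-redex position is the key observation, and is what makes the whole argument go through.

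For part~\ref{p:redi-beta-normal}, I would show that the $\betac$-redex subterms of $M$ and $N$ coincide. $\betac$-redexes disjoint from, or contained strictly within, the common subterm $P$ are preserved trivially. The redex $(\lambda x.\oc x)P$ itself is not a $\betac$-redex by hypothesis, and $P$ (at its root, in $N$) is a $\betac$-redex iff the same subterm $P$ is a $\betac$-redex inside $M$. To rule out a ``new'' $\betac$-redex $(\lambda y.Q)\oc W$ at a position straddling the $\ii$-redex boundary, observe that such a redex would need either an abstraction or a return at the straddled position: in $M$ that position holds the application $(\lambda x.\oc x)P$, and in $N$ it holds $P$ which is neither an abstraction (it is a computation) nor a return (by hypothesis). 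Hence the two sets of $\betac$-redexes coincide, giving the biconditional.

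For part~\ref{p:redi-sigma-normal}, I would proceed by contraposition: assuming $N$ contains a $\sigma$-redex at some position $p$, I exhibit a $\sigma$-redex in $M$, contradicting $\sigma$-normality of $M$. Case analysis on $p$ relative to the $\ii$-redex position $q$: if $p$ is disjoint from $q$ or strictly below $q$, the $\sigma$-redex is a common subterm. If $p = q$, then $P$ has shape $(\lambda y.N_1)((\lambda x'.M_1)L_1)$, whence $(\lambda x.\oc x)P$ in $M$ at $p$ matches the $\sigma$-redex pattern: it is an application whose first child $\lambda x.\oc x$ is an abstraction, and whose second child $P$ is itself an application with an abstraction as first child. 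If $p$ is strictly above $q$, sub-cases on which sub-position of the $\sigma$-redex pattern the position $q$ occupies: the only delicate case is $q = p.2$ (hole at the argument of the outer application), where the subterm $(\lambda x.\oc x)P$ in $M$ at $p.2$ again matches the required ``application with abstraction as first child'' shape; all other sub-cases either preserve the pattern verbatim or are impossible because $P$ is neither an abstraction nor a return.

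The main obstacle is the case analysis in part~\ref{p:redi-sigma-normal}, specifically tracking positions accurately and verifying each sub-case produces a $\sigma$-redex in $M$. The unifying insight that makes the analysis uniform is that $(\lambda x.\oc x)P$ is always an application whose first child is an abstraction---exactly the shape required of the inner application $(\lambda x.M_1)L_1$ of any $\sigma$-redex pattern---so inserting $(\lambda x.\oc x)$ in front of $P$ can only \emph{create}, never destroy, $\sigma$-redex patterns, explaining why the implication fails in the converse direction (but that is not claimed).
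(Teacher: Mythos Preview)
Your proposal is correct. The paper's own proof is a single line---``Easy to prove by induction on the structure of terms''---so your position-based case analysis spells out in detail precisely what that induction would establish, and the key observation you isolate (that both $(\lambda x.\oc x)P$ and $P$ are applications, with $P$ not a return) is exactly what drives each case. Presentationally, a structural induction on $M$ (equivalently, on the context $\cc$) would be cleaner than reasoning about positions ``above'', ``below'', and ``straddling'' the redex: the inductive step on $M = VQ$ with $Q \redi Q'$ directly yields that $Q = \oc W$ iff $Q' = \oc W'$ (so $\betac$-redex status at the root is preserved) and that if $Q'$ is an application with an abstraction in function position then so is $Q$ (so $\sigma$-redex status at the root can only be lost, not gained), with the inner redexes handled by the inductive hypothesis. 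But this is a matter of packaging, not substance.
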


\begin{lem}
	\label{lem:wbeta_nf} \label{lem:wsbeta_nf}
	If 	$M \reds N$, then: 	$M$ is $\wredbc$-normal  if and only if  so is $N$. 
\end{lem}

\Cref{lem:wbeta_nf} fails if we replace $\wredbc$ with $\sredbc$. Indeed, $M\reds N$ for some $M$ $\sredbc$-normal does not imply that $N$ is  $\sredbc$-normal, as we will see in~\Cref{ex:blocked_beta}.

\begin{restatable}{lem}{lemsurfacenf}
	\label{lem:surface_nf} 
	Let $\ex \in \{ \weak, \surf\}$. 
	If $M \neredx{\bs} N$ then: $M$ is $\eredx{\bs}$-normal  if and only if $N$ is $\eredx{\bs}$-normal.

\end{restatable}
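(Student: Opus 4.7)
The plan is to prove the equivalence by induction on the structure of the non-$\ex$ context $\cc$ witnessing $M = \cc\hole{R}$ and $N = \cc\hole{R'}$ with $R \Root{\bs} R'$. Since a non-$\ex$ context is necessarily non-empty (both the surface and weak grammars contain $\hole{\,}$ as the only empty context), \Cref{fact:shape} ensures that $M$ and $N$ have the same outer shape, and \Cref{lem:redex_bsi} ensures that $M$ is a $\betac$- or $\sigma$-redex at the root iff $N$ is. This handles the root position uniformly, so the task reduces to tracing the $\ex$-accessible non-root redex positions across the step.

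For the base cases, if $\cc = \oc(\lambda x.\cc')$, then $M = \oc V$ and $N = \oc V'$; neither term is a $\bs$-redex, and neither surface nor weak contexts enter below $\oc$, so both $M$ and $N$ are vacuously $\ex$-normal. For $\ex = \weak$ there is an extra base case $\cc = (\lambda x.\cc')L$, giving $M = (\lambda x.P)L$ and $N = (\lambda x.P')L$; since weak contexts do not enter the body of an abstraction, any weak redex of $M$ or $N$ lies either at the root (determined by the shape of $L$, which is unchanged) or inside $L$ (also unchanged), hence $M$ is weak-normal iff $N$ is.

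The inductive cases are $\cc = V\cc_1$ (for both values of $\ex$) and $\cc = (\lambda x.\cc_1)L$ (only for $\ex = \surf$), with $\cc_1$ itself non-$\ex$. In the first, write $M = VL_M$ and $N = VL_N$ where $L_M \nered L_N$ via $\cc_1$; by the induction hypothesis, $L_M$ is $\ex$-normal iff $L_N$ is. Any $\ex$-redex of $VL_M$ lives either at the root (preserved since the shapes of $L_M$ and $L_N$ agree by \Cref{fact:shape}), inside $L_M$ via $V\ss$ or $V\ww$ (matched in $L_N$ by IH), or (for $\ex = \surf$ and $V = \lambda x.P$) inside the body of $V$ via $(\lambda x.\ss)L_M$, which is unchanged since $V$ itself is common to $M$ and $N$. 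The remaining case $\cc = (\lambda x.\cc_1)L$ is symmetric: the IH applies to the body of the abstraction, while the argument $L$ and hence every surface redex reachable through a $V\ss$ context is unchanged, and root-redex-ness depends only on the shape of $L$.

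The main obstacle is the somewhat tedious case analysis of where a surface or weak redex can appear inside an application: the three positions (root, body of the left value when accessible, and inside the right argument) must all be checked, but each subcase reduces to either shape preservation, invariance of the unchanged component, or the induction hypothesis.
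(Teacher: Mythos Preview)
Your proposal is correct and follows essentially the same approach as the paper: a structural induction (you induct on the non-$\ex$ context, the paper on the shape of $M$), using shape preservation to control the root position and the induction hypothesis for the recursive positions. Your case analysis is more explicit—in particular you separate the $\ex=\weak$ base case $\cc=(\lambda x.\cc')L$, which the paper folds into its uniform ``$M=VP$'' case—but the substance is the same; one small remark: for $\ex=\weak$ the preservation of root-redexes is covered by \Cref{cor:redex} rather than \Cref{lem:redex_bsi} (the latter is stated only for $\nsred$), though your direct argument via the unchanged shape of the argument already handles this.
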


\SLV{}{
\RED{TO CHECK:\\
	($M\reds N$ and $N$ is $\sredbb$-normal) implies  $M$ is $\sredbb$-normal}
}

\subsection{Irrelevance of $\ii$ Steps for Normalization}
\label{subsect:irrelevance}

We show that postponement of $\ii$ steps (\Cref{thm:i_postponement}) implies that $\redi$ steps have no impact on  normalization, \ie whether a term $M$ has or not a $\lcc$-normal form. 
Indeed, saying that $M$ has a $\lcc$-normal form is equivalent to say that $M$ has a $\betac\sigma$-normal form.


On the one hand, if $M\RedStar_{\bs}N$ and $N$ is $\bs$-normal, to reach a $\lcc$-normal form it suffices to extend the reduction with $\ii$ steps to a $\ii$-normal form (since $\redi$ is terminating, \Cref{lemma:strong-normalizing-idsigma}). Notice that here  we use \Cref{lem:redi}.
On the other hand, the proof that  $\lcc$-normalization implies  $\betac\sigma$-normalization is trickier, because $\sigma$-normal forms are not preserved by performing a $\ii$ step backward (the converse of \Cref{lem:redi}.\ref{p:redi-sigma-normal} is false). Here is a counterexample.
\begin{exmpl}
	Consider $(\lambda x. x \oc x)(I(z\oc z)) \redi (\lambda x. x \oc x) (z \oc z)$, where $(\lambda x. x \oc x) (z\oc z)$ is $\sigma$-normal (actually $\lcc$-normal) but $(\lambda x. x \oc x)(I(z\oc z))$ is not $\sigma$-normal.
\end{exmpl} 

Consequently, the fact that $M$ has a $\lcc$-normal form $N$ means (by postponement of $\redi$) that $M \redbcs^* P \redi^* N$ for some $P$ that \Cref{lem:redi} guarantees to be $\betac$-normal only, not $\sigma$-normal.
To prove that $M$ has a $\betac\sigma$-normal form is not even enough to take the $\sigma$-normal form of $P$, because a $\sigma$ step can create a $\betac$-redex. 
To solve the problem, we need the following technical lemma.
\begin{restatable}{lem}{lemNorm}
	\label{lem:id_tail} 
	Assume $M\redi^k N$, where $k> 0$, and $N$ is $\sigma\ii$-normal.
	If $M$ is not $\sigma$-normal, then there exist $M'$ and $N'$ such that either $M\reds M' \redi N'\redi^{k-1} N$  or $M\reds M' \redbc N'\redi^{k-1} N$.
\end{restatable}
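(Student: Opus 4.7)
The plan is to proceed by induction on $k$, with the conceptual heart of the proof living in the base case.

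For the base case $k = 1$, we have $M \redi N$ with $N$ being $\sigma\iota$-normal and $M$ not $\sigma$-normal, so the unique $\iota$-step contracting some $\bI R \redi R$ (with $\bI = \lambda x.\oc x$ and $R \neq \oc V$, by the definition of $\iota$) must destroy every $\sigma$-redex of $M$. Picking any such $\sigma$-redex and inspecting how its shape $(\lambda y_1.N_1)((\lambda x_1.M_1)L_1)$ can overlap with the $\iota$-redex $\bI R$, only two possibilities arise. In Case (a) the $\sigma$-redex coincides with $\bI R$, which forces $R = (\lambda x_1.P)L$; firing $\sigma$ in $M$ rewrites the subterm $\bI((\lambda x_1.P)L)$ into $(\lambda x_1.\bI P)L$, and then reducing $\bI P$ inside $\lambda x_1$ — a $\betac$-step if $P = \oc V$, an $\iota$-step otherwise — yields $N$. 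In Case (b) the $\sigma$-redex has the form $(\lambda y.Q)(\bI R)$; firing $\sigma$ gives $(\lambda x.(\lambda y.Q)\oc x)R$, and the internal $\betac$-step $(\lambda y.Q)\oc x \mapsto_{\betac} Q\{x/y\}$ (legal by the $\sigma$-rule's side condition $x \notin \FV(Q)$) then yields $N$ up to $\alpha$-equivalence. Both subcases furnish $M \reds M' \to N$ with $\to \in \{\redi,\redbc\}$, matching the conclusion with $N' := N$.

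For the inductive step, write $M \redi M_1 \redi^{k-1} N$ and distinguish two subcases according to whether the $\iota$-redex $\bI R$ participates in a Case-(a) or Case-(b) overlap with some $\sigma$-redex in $M$. If it does, the base-case argument (whose only real input is the destruction of a $\sigma$-redex, not $\sigma$-normality of the target of the $\iota$-step) applied at the first step yields $M \reds M^* \to M_1$ with $\to \in \{\redi,\redbc\}$; concatenating with $M_1 \redi^{k-1} N$ and taking $N' := M_1$ closes this case. Otherwise every $\sigma$-redex of $M$ is disjoint from $\bI R$ and therefore persists as a residual in $M_1$, which is still not $\sigma$-normal. Applying the inductive hypothesis to $M_1 \redi^{k-1} N$ yields $M_1 \reds M_1' \to N_1' \redi^{k-2} N$; and since the absence of a Case-(b) overlap in $M$ also forbids the $\iota$-step from creating any new $\sigma$-redex in $M_1$, the fired $\sigma$-redex is necessarily the residual of one already in $M$. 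Firing that $\sigma$-redex in $M$ commutes with the initial $\iota$-step because their positions are disjoint, producing $M \reds M^* \redi M_1' \to N_1' \redi^{k-2} N$: if $\to = \redi$ this already has the target shape, and if $\to = \redbc$ a final swap $\redi \cdot \redbc \subseteq \redbc \cdot (\redi,\redbc)$ — the non-surface analog of \Cref{lem:iota_s_postponement}.\ref{lem:iota_s_postponement-beta}, verified by the same redex-overlap analysis — pushes the $\redbc$ into position right after the $\reds$-step.

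The hard part will be the combinatorial shape analysis of the base case, together with the verification, in the ``disjoint'' branch of the inductive step, that the absence of Case-(a)/(b) overlaps indeed blocks the $\iota$-step from creating any fresh $\sigma$-redex in $M_1$ (so the residual reasoning is warranted). The one genuine subtlety is the degenerate variant of the final $\redi \cdot \redbc$ swap in which a $\betac$-step contracts the residual of $R$ into an explicit return $\oc V$ and collapses the would-be trailing $\iota$-step into a second $\betac$-step; one must check that the resulting $\redbc\cdot\redbc$ pattern can be re-packaged without breaking the $\redi^{k-1}$ suffix, for instance by absorbing the extra $\betac$ into the $\reds$-phase through the Case-(b) pattern that the degeneracy itself makes available in $M$.
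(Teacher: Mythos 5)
Your base case is sound and is, in substance, the same overlap analysis the paper performs in its root case: the only way an $\iota$-step $\II R \redi R$ can erase a $\sigma$-redex is when the $\sigma$-redex coincides with $\II R$ (your Case (a), the paper's ``$M$ is also an $\ii$-redex'' bullet) or has $\II R$ as its argument (your Case (b), the paper's $M = V(\II L)$ bullet, resolved exactly as in \Cref{ex:confluence}). The genuine gap is in your inductive step on $k$, in the ``no overlap'' branch when the induction hypothesis returns a $\redbc$-step. There you are left with $M \reds M^{*} \redi \tilde{M} \redbc \tilde{N} \redi^{k-2} N$ and must turn the middle $\redi \cdot \redbc$ into a single $\redbc$ or $\redi$ step placed right after the $\reds$-step and followed by \emph{exactly} $k-1$ $\iota$-steps. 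No generic swap does this: if the $\iota$-redex of $M^{*}$ sits inside the value $V$ of the $\betac$-redex $(\lambda w.P)(\oc V)$ fired by the i.h., performing the $\betac$-step first replaces the one trailing $\iota$-step by one copy per occurrence of $w$ in $P$, so the suffix becomes $\redi^{k-2}$ or $\redi^{k}$; and, as you note yourself, the swap can also collapse the $\iota$-step into a second $\betac$-step, yielding a $\redbc\cdot\redbc$ pattern that simply does not have the required shape --- the conclusion admits exactly one $\reds$-step and one $\redbc$/$\redi$-step, so there is nowhere to ``absorb'' an extra $\betac$. Your proposed repair is not carried out, and as written the induction on $k$ does not close.

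Two ways out. One is to strengthen your induction hypothesis so that the $\redbc$-step it produces is always of one of the two linear shapes $\II(\oc U)$ or $(\lambda y.Q)(\oc x)$ coming from your Cases (a)/(b); in both the bound variable occurs exactly once in the body, so the swap neither duplicates nor erases the trailing $\iota$-step and cannot degenerate. The other, which is what the paper does, is to induct on the structure of $M$ rather than on $k$: the whole block $\redi^{k}$ is kept intact, shape preservation of internal and $\iota$-steps (\Cref{fact:isteps}, \Cref{lemma:id-redex}) locates a surviving $\sigma$-redex either at the root of $M$ or inside a subterm, and the single compensating $\redbc$/$\redi$-step is inserted at the very front, so the $\redi^{k-1}$ suffix is never disturbed and no commutation of $\iota$ past $\betac$ is ever needed.
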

We also use  that $\reds$ and $\redi$ are strongly normalizing (\Cref{lemma:strong-normalizing-idsigma}). 
Instead of proving that $\reds$ and $\redi$ are---separately---so, we state a more general result (its proof is in \Cref{app:normalization}).
\begin{restatable}[Termination of $\sigma\id$]{prop}{lemmastrongnormalizingidsigma}
	\label{lemma:strong-normalizing-idsigma}
	Reduction $\redx{\sigma\id} \, = (\reds \cup \redid)$ is strongly normalizing.
\end{restatable}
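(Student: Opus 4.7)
The plan is to exhibit a weight $\mu : \Term \to \mathbb{N}$ that strictly decreases under every root step of both $\id$ and $\sigma$, and that is strictly monotone on the contexts of $\lc$. Strong normalization of $\redid \cup \reds$ then follows immediately by well-foundedness of $<$ on $\mathbb{N}$, with no need to invoke any commutation lemma between the two subreductions.

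I would define $\mu$ by induction on terms, assigning extra weight to the right argument of an application:
\[
\mu(x) \defeq 1, \qquad \mu(\lambda x.M) \defeq \mu(M), \qquad \mu(\oc V) \defeq \mu(V), \qquad \mu(VM) \defeq 2\mu(M) + \mu(V).
\]
A trivial induction shows $\mu(M)\geq 1$ for every term. The role of the factor $2$ in the application case is to penalize right-leaning applications, which is exactly what the $\sigma$ rule flattens.

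Next I would check the two root rules. For $\id$, $\mu((\lambda x.\oc x)M) = 2\mu(M) + \mu(\oc x) = 2\mu(M) + 1 > \mu(M)$. For $\sigma$, I compute
\[
\mu\bigl((\lambda y.N)((\lambda x.M)L)\bigr) = 4\mu(L) + 2\mu(M) + \mu(N), \qquad \mu\bigl((\lambda x.(\lambda y.N)M)L\bigr) = 2\mu(L) + 2\mu(M) + \mu(N),
\]
so the difference is $2\mu(L) > 0$. Then I would verify strict monotonicity of $\mu$ under the contexts of \Cref{def:reduction} by an easy induction: each of the productions $\oc(\lambda x.\cc)$, $V\cc$, and $(\lambda x.\cc)M$ puts the hole in a position where $\mu$ depends on the subterm via a strictly monotone function (identity or multiplication by a positive constant followed by a sum), so a strict decrease in the hole propagates to a strict decrease in the whole term.

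Combining these facts, every step of $\redid \cup \reds$ strictly decreases $\mu$, ruling out infinite reductions. I do not expect a real obstacle: the only delicate point is calibrating the coefficients so that the same measure witnesses termination of both rules simultaneously under arbitrary contextual closure, and the choice $\mu(VM) = 2\mu(M) + \mu(V)$ achieves this. As a sanity check, the alternative of proving termination of $\redid$ and $\reds$ separately and combining them lexicographically would also work (size strictly decreases on $\id$ and is invariant on $\sigma$, while $\mu$ decreases on $\sigma$), but the single-measure argument above is cleaner.
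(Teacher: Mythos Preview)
Your proof is correct, and it is genuinely simpler than the paper's. The paper uses a \emph{lexicographic pair} of measures $(\size{M},\sizeaux{M})$, where $\size{\cdot}$ is the ordinary term size (which drops under $\id$ and is invariant under $\sigma$) and $\sizeaux{\cdot}$ is a more elaborate auxiliary weight involving products $2\size{V}\size{M}$ at application nodes (which drops under $\sigma$). Your approach replaces this two-component lexicographic argument by a single measure $\mu$ that decreases under \emph{both} root rules simultaneously; the asymmetric weighting $\mu(VM)=2\mu(M)+\mu(V)$ does all the work. This is more elegant and avoids the somewhat heavy $\sizeaux{}$ calculation in the paper. The trade-off is minor: the paper's decomposition makes explicit which rule is responsible for which decrease, which can be informative, but for the bare statement your single-measure argument is cleaner and equally rigorous. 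Your closing remark about the lexicographic alternative is in fact essentially the paper's route.
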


Now we have all the elements to prove the following.

\begin{restatable}[Irrelevance of $\ii$ for normalization]{thmm}{StateNormals}
	\label{thm:normals}
	The following are equivalent:
	\begin{enumerate}
		\item\label{p:normals-lc} $M$ is  $\lcc$-normalizing;
		\item\label{p:normals-bc} $M$ is  $\bs$-normalizing. 
	\end{enumerate}
\end{restatable}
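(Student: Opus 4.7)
The plan is to prove the two implications separately. Both rely on postponement of $\ii$ (\Cref{thm:i_postponement}), which reorganizes any $\lc$-reduction into a $\bs$-phase followed by an $\ii$-phase, together with the lemmas on normal-form preservation under $\ii$-steps (\Cref{lem:redi}) and a structural lemma extracting $\sigma$-steps from a $\redi$-tail (\Cref{lem:id_tail}). Crucial use will also be made of the termination of $\redid \cup \reds$ (\Cref{lemma:strong-normalizing-idsigma}).

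For the easy direction (\ref{p:normals-bc})~$\Rightarrow$~(\ref{p:normals-lc}), I would take a $\bs$-normal form $P$ of $M$ and extend the reduction with $\ii$-steps until reaching a $\redi$-normal term $N$, which exists since $\redi \subseteq \redid$ makes $\redi$ strongly normalizing. Applying \Cref{lem:redi} to each step of $P \redi^* N$ propagates $\betac$-normality and $\sigma$-normality from $P$ to $N$. It only remains to observe that $N$ is also $\id$-normal: any $\id$-redex $\II Q$ is either a $\betac$-redex (when $Q = \oc V$ for some value $V$) or a $\redi$-redex (otherwise), so the combination of $\betac$-normality and $\redi$-normality of $N$ rules out any $\id$-redex in $N$. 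Hence $N$ is $\lc$-normal.

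For the harder direction (\ref{p:normals-lc})~$\Rightarrow$~(\ref{p:normals-bc}), let $N$ be a $\lc$-normal form of $M$. By \Cref{thm:i_postponement}.\ref{p:i_postponement-lc}, $M \redbcs^* P \redi^k N$ for some term $P$ and some $k \geq 0$; iterating \Cref{lem:redi}.\ref{p:redi-beta-normal} along $P \redi^k N$ gives that $P$ is $\betac$-normal. It suffices to show that $P$ has a $\bs$-normal form, which I would establish by induction on the lexicographic pair $(k, \s(P))$, with $\s(P)$ the length of the longest $\reds$-sequence from $P$ (well-defined by \Cref{lemma:strong-normalizing-idsigma}). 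The statement proved by induction is: every $\betac$-normal term $P$ such that $P \redi^k N'$ for some $\lc$-normal $N'$ admits a $\bs$-normal form. If $k = 0$ then $P = N'$ is already $\bs$-normal; if $k > 0$ and $P$ is $\sigma$-normal, then $P$ is $\bs$-normal, being also $\betac$-normal. Otherwise \Cref{lem:id_tail} (whose hypothesis is met since $N$ is $\sigma\ii$-normal) yields either $P \reds P' \redi N'' \redi^{k-1} N$ or $P \reds P' \redbc N'' \redi^{k-1} N$. In the first case $P' \redi^k N$, and $P'$ is $\betac$-normal by \Cref{lem:redi}.\ref{p:redi-beta-normal}, while $\s(P') < \s(P)$, so the induction hypothesis at $(k, \s(P'))$ produces a $\bs$-normal form of $P'$, hence of $P$. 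In the second case $N''$ is $\betac$-normal (same lemma), and the induction hypothesis at $(k-1, \s(N''))$ yields a $\bs$-normal form of $N''$, hence of $P$.

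The key subtlety, which I expect to be the main obstacle, is that $\betac$-normality is \emph{not} preserved by $\reds$: a $\sigma$-step can create a $\betac$-redex, for instance when $\sigma$ moves a return $\oc V$ into the argument position of an abstraction. A plain induction on $\reds$-sequences would therefore not go through, and this is exactly why \Cref{lem:id_tail} is tailored to keep the $\redi$-tail aligned with the $\lc$-normal form $N$: the alignment is what allows \Cref{lem:redi}.\ref{p:redi-beta-normal} to propagate $\betac$-normality from $N$ back to the intermediate terms and so feed the induction.
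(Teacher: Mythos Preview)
Your proposal is correct and follows essentially the same approach as the paper: both directions rely on $\ii$-postponement, the easy direction extends a $\bs$-normal form by $\ii$-steps and uses \Cref{lem:redi} to transfer normality, and the hard direction performs a lexicographic induction on $(k,\s(P))$---which coincides with the paper's measure $(w_\ii(P),w_\sigma(P))$ since $\redi$ has random descent---invoking \Cref{lem:id_tail} to decrease the measure while keeping the $\redi$-tail aligned with the $\lc$-normal form $N$ so that $\betac$-normality can be recovered via \Cref{lem:redi}.\ref{p:redi-beta-normal}. Your closing remark about why $\betac$-normality cannot be propagated forward along $\reds$ correctly identifies the reason the argument must route through~$N$.
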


\begin{proof}
    $ $
	\begin{description}
		\item[(1)$\Rightarrow$(2):] If $M$ is $\lcc$-normalizing, then $M \redlc^* N$ for some $\lcc$-normal $N$. 
		By postponement of $\ii$ steps (\Cref{thm:i_postponement}), for some $P$ we have
		\begin{align}\label{eq:postponement-iota}
		M \redbcs^*  P \redi^* N
		\end{align} 
		By   \Cref{lem:redi}.\ref{p:redi-beta-normal}, $P$ is $\betac$-normal in \eqref{eq:postponement-iota}. 
		
		For any sequence of the form \eqref{eq:postponement-iota}, let $w(P)=(w_{\ii}(P),w_{\sigma}(P))$, where $w_{\ii}(P)$ and $w_{\sigma}(P)$ are the lengths of the maximal $\ii$-sequence and of the maximal $\sigma$-sequence from $P$, respectively; 
		they are well-defined because $\redi$ and $\reds$ are strongly normalizing (\Cref{lemma:strong-normalizing-idsigma}).
		
		We proceed by induction on $w(P)$ ordered lexicographically to prove that $ M \redbcs^*  P' \redi^* N$ for some  $P'$ $\betac\sigma$-normal (and so $M$ is $\betac\sigma$-normalizing).
		
		\begin{itemize}
			\item If $w(P) = (0,h)$ then $P = N$, so $P$ is $\sigma$-normal and hence $\betac\sigma$-normal.
			\item If $w(P) = (k,0)$, then $P$ is $\sigma$-normal and hence $\betac\sigma$-normal.  
			\item Otherwise $w(P) = (k,h)$ with $k, h > 0$.
			By \Cref{lem:id_tail}, $M \redbcs^*  P' \redi^{*} N$ for some $P'$ with $w(P')<w(P)$: indeed,  $w(P')=(k,h-1)$ or $w(P')=(k-1,h)$. 
			By \ih, we can conclude.
		\end{itemize}
		
		\item[(2)$\Rightarrow$(1):] As $M$ is $\betac\sigma$-normalizing, $M \redbcs^* N$ for some $\betac\sigma$-normal $N$.
		As $\redi$ is strongly normalizing (\Cref{lemma:strong-normalizing-idsigma}), $N \redi^* P$ for some $P$ $\ii$-normal.
		By \Cref{lem:redi}.\ref{p:redi-beta-normal}-\ref{p:redi-sigma-normal}, $P$ is also $\betac$-normal and $\sigma$-normal.
		Summing up, $M \redlc^* P$ with $P$ $\lcc$-normal, \ie, $M$ is $\lcc$-normalizing.	
		\qedhere 
	\end{description}

\end{proof}

\subsection{The Essential Role of $\sigma$ Steps for Normalization}
\label{subsect:betasigma}

In  $\lc$, for normalization, $\sigma$ steps play a crucial role, unlike $\ii$ steps.
Indeed, $\sigma$ steps can unveil ``hidden'' $\betac$-redexes in a term.
Let us see this with an example, where we consider a term that is $\betac$-normal 
but  diverging in $\lc$ and this divergence is ``unblocked'' by a $\sigma$ step.

\begin{exmpl}[Normalization in $\lc$] 
	\label{ex:blocked_beta}
	Let $\Delta = \lambda x . x \oc x$.
	Consider the $\sigma$ step 	
	\[M_z=\Delta((\lam y. !\Delta)(z!z )) \reds (\lam y.\Delta !\Delta)(z!z) =N_z\]
	$M_z$ is $\betac$-normal, but not $\lcc$-normal. 
	In fact, $M_z$ is diverging in $\lc$ (that is, it is not $\lcc$-normalizing): 
	\[M_z\reds N_z \redbc N_z\redbc \dots \]
	Note that the $\sigma$ step is weak and that $N_z$ is normal for $\wredbc$ but not for $\sredbc$.
	
	The fact that a $\sigma$ step can unblock a hidden $\betac$-redex is not limited to open terms.
	Indeed, $\oc \lambda z. M_z$ is closed and $\betac$-normal, but divergent in $\lc$:
	\[\oc \lambda z. M_z\reds \oc \lambda z.N_z \redbc \oc \lambda z.N_z\redbc \dots \]
%
%
\end{exmpl}

\Cref{ex:blocked_beta} shows that, contrary to $\ii$ steps, $\sigma$ steps are essential to determine whether a term has or not a normal form in $\lc$.
This fact is in accordance with the semantics. 
First, it can be shown that the term $M_z$ above and $\Delta \oc  \Delta$ are observational equivalent. 
Second, the denotational models and type systems studied in \cite{Ehrhard12,deLiguoroTreglia20} (which are compatible with $\lc$) interpret $M_z$ in the same way as $\Delta \oc \Delta$, which is a $\betac$-divergent term.
It is then reasonable to expect that the two terms have the same operational behavior in $\lc$. 
Adding $\sigma$ steps to $\betac$-reduction is a way to obtain this: both $M_z$ and $\Delta \oc \Delta$ are divergent~in~$\lc$. 
Said differently, $\sigma$-reduction restricts the set of $\lcc$-normal forms, so as to exclude some $\betac$-normal (but not $\bs$-normal) forms that are semantically meaningless.

Actually, $\sigma$-reduction \emph{can only restrict} the set of terms having a normal form: it may turn a $\betac$-normal form into a term that diverges in $\lc$, but it cannot turn a $\betac$-diverging term into a  $\lc$-normalizing one.
To prove this  (\Cref{prop:from-betasigma-to-beta}), we rely on 
the following lemma.

\begin{lem}
	\label{lemma:postpone-sigma}
	If $M$ is not $\betac$-normal and $M \reds L$, then $L$ is not $\betac$-normal
	and $L  \redbc N$ implies $M \redbc \cdot \reds^= N$.
\end{lem}

Roughly, \Cref{lemma:postpone-sigma} says that a $\sigma$ step on a term that is not $\betac$-normal cannot erase a $\betac$-redex, and hence it can be postponed.
\Cref{lemma:postpone-sigma} does not contradict \Cref{ex:blocked_beta}: the former talks about a $\sigma$ step on a term that is not $\betac$-normal, whereas the start terms in \Cref{ex:blocked_beta} are $\betac$-normal.

{\begin{prop}
	\label{prop:from-betasigma-to-beta}
	If a term is $\bs$-normalizing (resp. strongly $\bs$-normalizing), then it is $\betac$-normalizing (resp. strongly $\betac$-normalizing).

\end{prop}}

\begin{proof}
	As $\redbc \,\subseteq\, \redbcs$, any infinite $\betac$-sequence is an infinite $\bs$-sequence. 
	So, if $M$ is not strongly $\betac$-normalizing, it is not strongly $\bs$-normalizing.
	
	We prove now the part of the statement about normalization. 
	If $M$ is $\bs$-normalizing, there exists a reduction sequence $\s : M \redbcs^* N$ with $N$ $\bs$-normal. 
	Let $|\s|_\sigma$ be the number of steps in $\s$, and let $|\s|_{\betac}$ be the number of $\betac$ steps after the last $\sigma$ step in $\s$ (when $|\s|_\sigma = 0$, $|\s|_{\betac}$ is just the length of $\s$). We prove by induction on $(|\s|_\sigma, |\s|_{\betac})$ ordered lexicographically that $M$ is $\betac$-normalizing.
	There are three cases. 
	\begin{enumerate}
		\item If $\s$ contains only $\betac$ steps ($|\s|_\sigma = 0$), then $M \redbc^* N$ and we are done.
		\item If $\s : M \redbcs^* L \reds^+ N$ ($\s$ ends with a non-empty sequence of $\sigma$ steps), then $L$ is $\betac$-normal by \Cref{lemma:postpone-sigma}, as $N$ is $\betac$-normal; by \ih applied to the sequence $\s' : M \redbcs^* L$ (as $|\s'|_\sigma < |\s|_\sigma$), $M$ is $\betac$-normalizing.
		
		
		\item Otherwise 
		$\s : M \redbcs^* L \reds P \redbc Q \redbc^* N$ ($L \reds P$ is the last $\sigma$ step in $\s$, followed by a $\betac$ step). 
		By \Cref{lemma:postpone-sigma}, either there is a sequence $\s' : M \redbcs^* L \redbc R \reds Q \redbc^* N$, then $|\s'|_\sigma = |\s|_\sigma$ and $|\s'|_{\betac} < |\s|_{\betac}$;
		or $s' : M \redbcs^* L \redbc Q \redbc^* N$ and then $|\s'|_\sigma < |\s|_\sigma$.
		In both cases $(|\s'|_\sigma, |\s'|_{\betac}) < (|\s|_\sigma, |\s|_{\betac})$, so by \ih~$M$ is $\betac$-normalizing.
		\qedhere
	\end{enumerate}	
\end{proof}

%
%

\subsection{Normalizing Strategies}
\label{subsect:normalizing-strategies}

Irrelevance of $\ii$ steps (\Cref{thm:normals}) implies that 
to  define a normalizing strategy for $\lc$, it suffices to define a normalizing strategy for  $\bs$.
We do so  by iterating either \emph{surface} or \emph{weak} reduction. Our definition of $\bs$-normalizing strategy and the proof  of normalization (\Cref{thm:normalizing_strategy}) is \emph{parametric} on either. 

The difficulty here is that both   weak and   surface reduction are \emph{non-deterministic.}
The key property we need in  the  proof   is that the reduction  we iterate  is \emph{uniformly normalizing} (see \Cref{def:normalizing_terms}). 
We first establish that this holds for weak and surface reduction.  While uniform normalization  is easy to prove for  the former, it is  \emph{non-trivial} for  the latter, its proof is rather sophisticated.
Here we reap the fruits of   the careful analysis of the number of $\betac$ steps
 in \Cref{sec:w_factorization}.
Finally, we formalize the strategies and tackle normalization.

\begin{notation*}Since we are now only concerned with $\bs$ steps,
for  the sake of readability 
in the rest of the section, we often  write $\red$, $\sred$ and $\wred$ for $\redx{\bs}$,  $\xredx{\surf}{\bs}$ and  $\xredx{\weak}{\bs}$, respectively.
\end{notation*}

\paragraph{Understanding Uniform Normalization.} The fact that $\sred$ and $\wred$ are uniformly normalizing  is key in the definition of normalizing strategy and deserves some discussion. 
	
	The heart of the normalization proof is that if $M$ has a $\red$-normal form $N$, we can perform surface steps and reach  a \emph{surface normal form}.  
	Note that {surface factorization} only guarantees that \emph{there exists}  a $\sredx{}$-sequence such that if 
	$M\redx{}^*N$ then $  M\sredx{}^*  L \nsredx{}^*  N $, where $L$ is $\sredx{}$-normal.
	The \emph{existential} quantification is crucial here because
	$\sredx{}$ is \emph{not} a  deterministic reduction.
	\emph{Uniform normalization} of $\sredx{}$ transforms the existential into a \emph{universal} quantification:
	if $M$ has a $\red$-normal form (and so a fortiori a surface normal form), then \emph{every} sequence of $\sredx{}$ steps will terminate.
	The normalizing strategy  then  iterates this process, performing surface reduction on the subterms of a surface normal form, until we obtain a $\red$-normal form.

\subsubsection{Uniform Normalization of Weak and Surface Reduction}\label{sec:uniform}
\newcommand{\bsteps}[1]{\mathtt{b}(#1)}

We prove that both weak and surface reduction are uniformly normalizing, \ie for $\ex\in \{\weak,\surf\}$,  if a term $M$ is $\ered$-normalizing, then it is
strongly  $\ered$-normalizing.
In both cases, 
the proof relies on the fact that all maximal $\ered$-sequences from a given term $M$ have the same number of $\betac$ steps.
%

\begin{fact}[Number of $\betac$ steps]
	\label{fact:b_steps}
	Given a  $\redx{\bs}$-sequence $\s$, the number    of its $\betac$ steps
	 is finite if and only if $\s$ is finite. 
\end{fact}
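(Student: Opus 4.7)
The plan is to prove the equivalence by dispatching one direction trivially and reducing the other to the already established termination of $\reds$.

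First, I would handle the right-to-left direction in one line: if $\s$ is finite, the number of $\betac$ steps in $\s$ is bounded above by the length of $\s$, hence finite. No machinery needed.

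For the left-to-right direction, the plan is to argue by contraposition. Suppose $\s$ is infinite. I would observe that $\s$ is a sequence in the union $\redbc \cup \reds$, and that after any prefix of $\s$, the remaining suffix is still an infinite $\bs$-sequence. Now assume, for contradiction, that $\s$ contains only finitely many $\betac$ steps. Then there is an index $n$ past the last $\betac$ step (or $n = 0$ if no $\betac$ step occurs at all), from which the tail $\s_{\geq n}$ is an infinite sequence of $\reds$-steps only. This directly contradicts the strong normalization of $\reds$, which follows from \Cref{lemma:strong-normalizing-idsigma} (since $\reds \subseteq \redid \cup \reds$). Hence $\s$ must contain infinitely many $\betac$ steps.

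The only nontrivial ingredient is the appeal to strong normalization of $\sigma$-reduction, which is already available as \Cref{lemma:strong-normalizing-idsigma}; the rest is purely combinatorial pigeonhole on the decomposition of an infinite sequence in $\redbc \cup \reds$. I do not anticipate any real obstacle: no factorization, postponement, or shape analysis is required, because we are only counting steps of distinct kinds along a single given sequence, not rearranging~it.
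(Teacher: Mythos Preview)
Your proposal is correct and matches the paper's approach: one direction is trivial (a finite sequence has finitely many steps of any kind), and the other reduces to strong normalization of $\reds$ via \Cref{lemma:strong-normalizing-idsigma}. Note that the paper's proof actually has the direction labels swapped (it calls the $\reds$-SN direction ``right-to-left'' and the trivial one ``left-to-right''), whereas your labeling is the correct one; but the content is identical.
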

\begin{proof}
	The right-to-left implication is obvious. 
	The left-to-right is an immediate consequence of the fact that $\reds $ is strongly normalizing (\Cref{lemma:strong-normalizing-idsigma}).
\end{proof}

A  \emph{maximal} $\ered$-sequence from $M$ is either infinite, or ends in a  $\ex$-normal form. 
\Cref{prop:uniform} states that  for $\ex\in \{\weak,\surf\}$, all   \emph{maximal} $\ered$-sequences from the same term $M$ have the \emph{same behavior}, also quantitatively (with respect to the number of $\betac$ steps).
The proof relies on the following lemma.
Recall that weak reduction is not confluent (\Cref{ex:weak}); however,   $\wredbc$ is deterministic.

%

\begin{lem}[Invariant]\label{lem:weak}
Given $M \in \ComTerm$,  \emph{every sequence}  $ M\wredxStar{\bs} N$ where    $N$  is  $\wredbc$-normal
has   \emph{the same number $k$} of $\betac$ steps.
Moreover,
	\begin{enumerate}
	\item  the unique maximal $\wredbc$-sequence from $M$ has length $k$, and 
	\item  there exists a sequence   $M\wredbc^{\!k} \, L \, \wredxStar{\sigma} N$ for some $L \in \ComTerm$.   
\end{enumerate}
\end{lem}

\begin{proof}The argument is illustrated in \Cref{fig:weak}. 
	Let $k$ be the number of $\betac$ steps in a sequence  $\s \colon M\wredxStar{\bs} N$ where $N$ is  $\wredbc$-normal.
	By weak factorization (\Cref{prop:weak_fact}.\ref{p:weak_fact-from-weak}) there is a sequence   $M\wredbc^{\!k} L \wredxStar{\sigma} N$ with the same number  $k$ of $\betac$ steps. 
	As $N$ is  $\wredbc$-normal, so is $L$ 
	(\Cref{lem:wbeta_nf}). 
	Thus, $M\wredbc^k L$  is a maximal  $ \wredbc $-sequence from $M$, and it is unique because $\wredbc$ is deterministic. 
\end{proof}

\begin{thmm}[Uniform normalization]
	\label{prop:uniform}\hfill
	\begin{enumerate}
		\item\label{p:uniform-weak} Reduction $\wredx{\bs}$ 
		is  uniformly normalizing.
		\item\label{p:uniform-surface} Reduction $\sredx{\bs}$ 
		is  uniformly normalizing.
	\end{enumerate}
Moreover,   all maximal  $\wredx{\bs}$-sequences (resp. all maximal  $\sredx{\bs}$-sequences) from the same term  $M$  have  the same number of $\betac$ steps.
\end{thmm}
\begin{proof}We write $\red$ (resp.  $ \wred,\sred $) for $\redx{\bs}$ (resp.  $ \wredx{\bs},\sredx{\bs} $).

	%
	%
	%
	%

	\paragraph{Claim 1.}

	
	

	Let  $M \wred^* N$ where $N$ is $\wred$-normal, and so, in particular $\wredbc$-normal. 
	By \Cref{lem:weak},
	$M\wredbc^{\!k} L \wredxStar{\sigma} N$ where $M\wredbc^{\!k} L$ is the (unique) maximal $ \wredbc $-sequence from $M$.
	We prove  that no $\wred$-sequence from	$M$ may have more than $k$ $\betac$ steps. 
Indeed, every  sequence 
  	$\s\colon M\wred^*N'$  can be factorized 
	(\Cref{prop:weak_fact}.\ref{p:weak_fact-from-weak})  as   $M\wredxStar{\betac} L' \wredxStar{\sigma} N' $ with the same number of $\betac$ steps  as $\s$, and $M\wredxStar{\betac} L' $ is a prefix of the maximal $\wredbc$-sequence $M\wredbc^k L$ from $M$ (since $\wredbc$ is deterministic). 	
	
	We deduce that	  no infinite   $\wred$-sequence from $M$  is possible (by \Cref{fact:b_steps}).

%
%
	
	\begin{figure}
	\begin{minipage}[c]{.40\textwidth}

		\begin{diagram}[tight,height=1.4em,width=2em]
			&                       &       &         & N\\
			&                       &       & \ruOnto^{\weak}_{\sigma}   & \\
			M & \rOnto^{\weak}_{\betac} & \cdot\\
			&                       &       & \rdOnto^{\weak}_{\sigma}   & \\
			&                       &       &         & N'\\
		\end{diagram}
		
		\caption{Weak reduction}\label{fig:weak}
	\end{minipage}
	\hspace{5mm}%
	\begin{minipage}[c]{.55\textwidth}
	
		\begin{diagram}[tight,width=2.8em,height=2em]
			&                        &       &                            & S_1                           \\
			&                        &       & \ruOnto^{\weak}_{\sigma}   & \dOnto_{\sigma} & \rdOnto      \\
			M   & \rOnto^{\weak}_{\betac}& L &                            & S_3               & \rOnto      & N\\
			&                        &       & \rdOnto^{\weak}_{\sigma}   & \uOnto_{\sigma} & \ruOnto      \\
			&                        &       &                            & S_2                            \\
		\end{diagram}
		
		\caption{Surface reduction }\label{fig:surface}
	\end{minipage}

	\end{figure}


	\paragraph{Claim 2.}
	Assume that 
	$M \sred^* N$ with $N$ $\sred$-normal.
	Recall that $\sred$ is confluent (\Cref{prop:surface-properties}.\ref{p:surface-properties-betasigma}),  so $N$ is the  unique $\sred$-normal form of $M$.

	First, by induction on $N$, we prove  that   given a term $M$, 
	\begin{center}
		\begin{tabular}{l}
			$(\#)$		all sequences $M\sred^* N$  have the same number of $\betac$ steps.
		\end{tabular}
	\end{center} 
	Let $\s_1,\s_2$ be two such sequences.  \Cref{fig:surface} illustrates the argument.
	By weak factorization (\Cref{prop:weak_fact}.\ref{p:weak_fact-from-surface}), there is a   sequence $M\wred^* S_1 \nwredxStar{} N$ (resp. $M\wred^* S_2\nwredxStar{} N$)  with  the same number of $\betac$ steps as 
	$\s_1$ (resp. $\s_2$), and whose steps are all surface. Note that $S_1$ and $S_2$ are $\wred$-normal (by \Cref{lem:surface_nf}, because $N$ is in particular $\wred$-normal), and so in particular $\wredbc$-normal. 
	By \Cref{lem:weak},  $M\wred^* S_1$ ,  $M\wred^* S_2$ have the same number $k$ of $\betac$ steps, and so do  the sequences  
  $\s_1':M \wredx{\betac}^k\ L\wredxStar{\sigma} S_1$ and $\s_2': M\wredx{\betac}^k\ L \wredxStar{\sigma} S_2$.

	To prove $(\#)$, we show that  the sequences  $\s_1''\colon S_1 \nwred^* N$ and $\s_2''\colon S_2 \nwred^* N$ have the same number of $\betac$ steps.
	
	By confluence of $\sreds$ (\Cref{prop:surface-properties}.\ref{p:surface-properties-separate}), {$S_1\  \sredxStar{\sigma}  \  S_3\  \xrevredxStar{\surf}{\sigma}\  S_2$}, for some $S_3$, and  (by confluence of $\sred$, \Cref{prop:surface-properties}.\ref{p:surface-properties-betasigma})
	there is a sequence $\mathfrak{t} \colon S_3\sredxStar{} N$. 
	By \Cref{lem:wbeta_nf}, since $S_1,S_2$ are $\wred$-normal, terms  in these sequences are $\wred$-normal, and so all steps are not only surface, but also  $\nwred$ steps. 
	That is, $S_1\nwredxStar{\sigma} S_3$, $S_2\nwredxStar{\sigma} S_3$ and $\mathfrak{t}\colon S_3\nwredxStar{} N$. 
	Hence $S_1, S_2, S_3, N$ have the same shape by \Cref{fact:shape}.  
	
We examine  the shape of $N$, and  prove  claim $(\#)$ by showing  that  $\s_1''$ and $\s_2''$ have the same number of $\betac$ steps as $\mathfrak{t}$ (note that the sequences $S_1\nwredxStar{\sigma} S_3$ and $S_2\nwredxStar{\sigma} S_3$ have no $\betac$ steps).

	\begin{itemize}
		\item $N= \oc V$. In this case, $N=S_1=S_2$, and the claim (\#) is immediate.
		\item $N=(\lam x.P)Q$, and  $S_i= (\lam x.P_i)Q_i$ (for $i\in \{1,2,3\}$). We have $P_i\sred^* P$ and $Q_i\sred^*Q$.
		Since $P$ and $Q$ are $\sred$-normal,   by   \ih we have:
		\begin{itemize}
			\item  the two sequences    $P_1 \sredxStar{} P$ and    $P_1 \sredxStar{\sigma} P_3 \sredxStar{} P$ have the same number of $\betac$ steps, and similarly $Q_1 \sredxStar{} Q$ and $Q_1 \sredxStar{\sigma} Q_3 \sredxStar{} Q$. 
			Hence $\s_1''$ and 
			$\mathfrak{t}$ have the same number of $\betac$ steps.
			\item Similarly, $\s_2''$ and 
			$\mathfrak{t}$ have the same number of $\betac$ steps.
		\end{itemize}
	\end{itemize}
	
	This completes the proof of $(\#)$.
	We now can conclude that $\sred$ is \emph{uniformly normalizing}.
	If the term $M$ has a sequence $\s \colon M\sred^* N$ where $N$ is $\sred$-normal, then  no $\sred$-sequence  can have more $\betac$ steps than $\s$, because  given any sequence  $M\sred^* T$ then (by confluence of $\sred$) $T\sred^* N$, and (by $ \# $) $M\sred^* T\sred^* N$ has the same number of  $\betac$ steps as $\s$. Hence, all $\sred$-sequences from $M$ are finite.
\end{proof}



\subsubsection{Normalizing strategies }\label{sec:normalizing_strategy}
We are ready to define and deal with normalizing strategies for $\lc$.  Our definition is inspired, and generalizes,  the stratified  strategy proposed in \cite{Guerrieri15,GuerrieriPaoliniRonchi17}, which iterates  weak  reduction (there called head reduction) according to a more strict discipline.

%

\paragraph*{Iterated $\ex$-Reduction.} We define a family of normalizing strategies, parametrically on the reduction to iterate, which can be surface or weak.
Let $\ex\in \{\weak, \surf\}$.
Reduction   $\lered$ is defined 
as follows, by iterating $\ered$ in the left-to-right order (\Cref{thm:normalizing_strategy} then shows that $\lered$ is a normalizing strategy). 
\begin{enumerate}
	\item If $M$  is \emph{not} $\ered$-normal: 
	\begin{equation*} \infer{M \,\lered\, M'}{M \,\ered\, M'} \end{equation*}

	\item If $M$ \emph{is} $\ered$-normal (below, ``$V \, \bs$-normal'' means $V \!\in\! \Var$ or $V \!=\! \lambda x.L$ with $L$ $\bs$-normal):
	\begin{equation*}
	 \infer{M \defeq \oc (\lam x.N) \,\lered\, \oc (\lam x.N')}{N \,\lered\, N'} \qquad
	\infer{M \defeq (\lam x.N) L \,\lered\, (\lam x.N') L }{N \,\lered\, N'} \qquad
	\infer{M \defeq VN \,\lered \, VN'}{V \text{ $\bs$-normal} &  N \,\lered\, N'}  
	\end{equation*}  
\end{enumerate}

\SLV{}{
\begin{remark} Here we choose to iterate $\ex$-reduction in  left-to-right order. This choice is arbitrary, and only for the sake of  presentation. 
Assume $M$ is  $\ered$-normal. $M$ has shape $V_1(...V_k(!U))$. For simplicity, we arbitrarily choose to reduce to normal form first $V_1$, then $V_2$, and so on till $!U$. However,  the choice of the order here is irrelevant: we could proceed right-to-left, or reduce $V_1, V_2,...!U$ in parallel . This because any step $M \redx{\bs} M'$ is    a $\nered$ step, and so it  preserves  the shape of $M$, and the absence of $\ered$ redexes. 
\end{remark}
}

\begin{thmm}[Normalization for $\bs$]\label{thm:normalizing_strategy}
	 Assume $M \RedStar_{\bs} N$ where $N$ is   $\redx{\bs}$-normal. 
	 Let $\ex\in \{\weak, \surf\}$. 
	 Then, every maximal $\leredx{\bs}$-sequence from $M$ ends in $N$.
\end{thmm}
\begin{proof}
	By induction on the term $N$. 	
	Let $\s=M, M_1, M_2, \dots$ be a maximal $\lered$-sequence from $M$. 
		
	We write   $\red$, $\ered$, $\nered$ and $\lered$ for $\red_{\bs}$, $\eredx{\bs}$, $\neredx{\bs}$ and $\leredx{\bs}$, respectively.	
	 We observe that 
	 \begin{center}
	 		(**) every maximal $\ered$-sequence from $M$ is finite.
	 \end{center}
	   Indeed, from  
	 $M\red^*N$, by $\ex$-factorization, we have that 
		$M\ered\, L   \nered^* N$. Since $N$ is $\ered$-normal, so is $L$   (by \Cref{lem:surface_nf}) and  (**) follows by
		 uniform normalization of $\ered$ (\Cref{prop:uniform}).

 	Let $\s' \sqsubseteq \s$ be the maximal prefix of  $\s$ such that $M_i\ered M_{i+1}$.  Since it is  finite,  $\s'$ is $ M, \dots, M_{k}$, where   $M_k$ is $\ex$-normal. Let $\s'' = M_k, M_{k+1} \dots$  be the sequence such that  $\s=\s'\s''$.
	
	Note that  all terms in $\s''$ are $\ex$-normal (by repeatedly using  \Cref{lem:surface_nf} from $M_k$), hence $M_k\nered M_{k+1} \nered \dots $, and (by  shape preservation, \Cref{fact:shape}) all terms in $\s''$ have the same shape as $M_k$.
	
 	By confluence of $\red$ (\Cref{prop:conf}.\ref{p:conf-betasigma}), $M_k \red^* N$.  Again, all terms in this sequence are $\ex$-normal, by  repeatedly using  \Cref{lem:surface_nf} from $M_k$. 
	So, $M_k \nered^* N $, and (by shape preservation, \Cref{fact:shape}) $M_k$ and $N$ have the same shape.\\

 	We have established that $M_k$ and all terms in $\s'':M_k,  M_{k+1}, \dots$ have the same shape as $N$.
Now we examine the possible cases for  $N$.
	\begin{itemize}
		\item $N= \oc x$, and   $M_k= \oc x$.  Trivially $M\lered^* M_k =  N$.
		
		\item $N= \oc(\lam x.N_P)$ and   $M_{k}= \oc (\lam x. P)$ with  $P \red^*  N_P$.
		 Since   $ N_P$ is $\bs$-normal, by \ih  every maximal 
		$\lered$-sequence from $P$ terminates in $ N_P$, and so every maximal 	$\lered$-sequence from $!(\lam x. P)$ terminates in $!(\lam x.N_P)=N$.
		 Since the  sequence $\s''= M_k,  M_{k+1}, \dots $ is a maximal $\lered$-sequence,  we have that  $\s=\s'\s''$ is as follows
		\[		M\lered^* M_k= \oc(\lam x. P)\lered^* \oc (\lam x.N_P)=N.\]
		
		\item $N=(\lam x. N_P)N_Q$ and 
		 $M_k=(\lam x.P)Q$, with   {$P\red^* N_P$} and $Q\red^* N_Q$. 	
		Since  $N_P$ and $N_Q$ are both  $\bs$-normal, by \ih:
		\begin{itemize}
			\item every maximal $\lered$-sequence from $P$ ends in $N_P$. So every  $\lered$-sequence from $(\lam x.P)Q$ eventually reaches $(\lam x.N_P)Q$;  
			\item every maximal $\lered$-sequence from $Q$ ends in $N_Q$. So every $\lered$-sequence from $(\lam x.N_P)Q$  eventually reaches $(\lam x.N_P)N_Q=N$.
		\end{itemize}
		Therefore  $\s$ is as follows
		\[M\lered^* M_k=(\lam x.P)Q \lered^* (\lam x.N_P)Q \lered^* (\lam x.N_P)N_Q=N.\]	
		
		\item $N=x N_Q$ and $M_k=xQ$. Similar to the previous one.
		\qedhere
	\end{itemize}
\end{proof}

From a normalizing strategy for ${\bs}$ (\Cref{thm:normalizing_strategy}), we derive a normalizing strategy in~$\lc$.

\begin{coroll}[Normalization for $\lc$]
	 Let $\ex\in \{\weak, \surf\}$. If $M$ is $\lcc$-normalizing, then any maximal $\leredx{\bs}$-sequence from $M$ followed by any maximal $\redi$-sequence ends in the $\lcc$-normal form of $M$.
\end{coroll}

\begin{proof}
	By \Cref{thm:normalizing_strategy}, every maximal $\leredx{\bs}$-sequence from $M$ ends in a $\redx{\bs}$-normal form $L$. 
	Since $\redi \subseteq \redx{\sigma\id}$ is strongly normalizing (\Cref{lemma:strong-normalizing-idsigma}), every maximal $\redi$-sequence from $L$ ends in a $\redi$-normal form $N$, which is also $\redx{\bs}$-normal by \Cref{lem:redi}.
	Therefore, $N$ is $\lcc$-normal and this is the unique $\lcc$-normal form of $M$ since $\redlc$ is confluent (\Cref{prop:conf}.\ref{p:conf-betasigmaid}).
\end{proof}


%
%
%
\section{Conclusions and Related Work}
\label{sec:related}

\subsection{Discussion: Reduction and Evaluation}

In computational calculi it is standard practice to define evaluation as \emph{weak reduction}, aka \emph{sequencing} 
\cite{Filinski:phd1996,JonesSLT98,LevyPT03, LagoGL17}.
Despite  the prominent role that weak reduction has in the literature, in particular for calculi with effects, what one discovers when analyzing the rewriting properties is somehow unexpected.
As  we  observe in  \Cref{sec:operational},  where we consider both the computational core  $\lc$ \cite{deLiguoroTreglia20},  and  a widely recognized reference   such as the calculus  $\lambda_{ml^*}$ by Sabry and Wadler \cite{SabryWadler97} (in turn inspired by Moggi \cite{Moggi'88,Moggi'89,Moggi'91}),
while full reduction 
is confluent, the closure of the rules under \emph{evaluation contexts}  turns out to be \emph{non-deterministic}, \emph{non-confluent}, and its \emph{normal forms} are \emph{not unique}. 
The issues   come from the monadic rules of  \emph{identity} and \emph{associativity}, hence they  are common to  \emph{all} computational calculi. 


\paragraph{A Bridge between Evaluation  and  Reduction.}

On the one hand, computational $\lambda$-calculi 
 have an unrestricted \emph{non-deterministic reduction} that generates the equational theory of the calculus, studied  for foundational and semantic purposes.
On the other hand, \emph{weak reduction} models evaluation in an  ideal programming language. 
It is then natural to wonder what is the relation between reduction and evaluation.
This is the first contribution of this paper. We establish a bridge between evaluation  and  reduction via a factorization theorem stating that every reduction can be rearranged so as to bring forward  weak reduction steps.

We focused on the rewriting theory of a specific computational calculus, namely the computational core $\lc$ \cite{deLiguoroTreglia20}.
We expect that our results and approach can be adapted also to other computational calculi such as $\lambda_{\mlsym}$. This demands further investigations.   Transferring the results is not immediate because the correspondence between the two calculi is not direct with respect to the \emph{rewriting} (see \Cref{rem:intricacies}).

\subsection{Technical Contributions} We studied the 
rewriting theory of the  computational core $\lc$ introduced in \cite{deLiguoroTreglia20}, a variant of Moggi's $\lambda_c$-calculus \cite{Moggi'88}, focusing on two questions: 
	\begin{itemize}
		\item how to reach   \emph{values}? 
		
		\item how to reach  \emph{normal forms}?
	\end{itemize}
For the first point, we show that weak $\betac$-reduction is enough (\Cref{sec:values}). For the second question, we define a family of normalizing strategies  (\Cref{sec:normalization}).

We have  faced  the issues caused by identity and associativity rules (which internalize the monadic rules in the syntax), and dealt with them  by means of factorization techniques. 


We have  investigated  in depth  the structure of normalizing~reductions, and we assessed 
the role of the  $\sigma$-rule (aka  associativity) as computational and not merely structural.
We found out that it plays at least three distinct,  independent roles in $\lc$:
\begin{itemize}
	\item $\sigma$ unblocks ``premature'' $\betac$-normal forms so as to guarantee that there are not $\lcc$-normalizing 
	terms whose semantics is the same as diverging terms, as we have seen in \Cref{subsect:betasigma};
	\item it internalizes the associativity of Kleisli composition into the calculus, as a syntactic reduction rule, as explained in \Cref{sect:intro} after \Cref{eq:sigma};
	\item it ``simulates'' the contextual closure of the $\betac$-rule for terms that reduce to a value, as we have seen in \Cref{thm:return-value}.
\end{itemize}

\subsection{Related Work}

\paragraph{Relation with Moggi's Calculus.}

Since our focus is on operational properties and reduction theory, we chose the computational core $\lc$ \cite{deLiguoroTreglia20} among the different variants of computational calculi in the literature inspired by Moggi's seminal work \cite{Moggi'88,Moggi'89,Moggi'91}. 
Indeed, the computational core $\lc$ has a ``minimal'' syntax that internalizes Moggi’s original idea of deriving a calculus from the categorical model consisting of the Kleisli category of a (strong) monad. For instance, $\lc$ does not have to consider both a pure and a (potentially) effectful functional application. So, $\lc$ has less syntactic constructors and less reductions rules with respect to other computational calculi, and this simplifies our operational study. 

Let us discuss the difference between $\lc$ and Moggi's $\lambda_c$. As observed in \Cref{sect:intro,sec:compcal}, the first formulation of $\lambda_c$ and of its reduction relation
was introduced in \cite{Moggi'88}, where it is formalized by using {\em let}-constructor. 
Indeed, this operator is not just a syntactical sugar for the application of $\lambda$-abstraction. In fact, it represents the extension to computations of functions from values to computations, therefore interpreting Kleisli composition. 
Combining {\em let} with ordinary abstraction and application is at the origin of the complexity of the reduction rules in \cite{Moggi'88}. On the other hand, this allows extensionality to be internalized.
Adding the $\eta$-rule to $\lc$ breaks confluence, as shown in \cite{deLiguoroTreglia20}. 

Besides using {\em let} or not, a major difference of $\lc$ with respect to $\lambda_c$ is the neat distinction among the two syntactical sorts of terms, restricting the
combination of values and non-values since the very definition of the grammar of the language. 
In spite of these differences, in \cite[\S 9]{dLT-19} it has been proved that there exists an interpretation of
$\lambda_c$ into $\lc$ that preserves the reduction, while there is a reverse translation that preserves convertibility, only.

\paragraph{Other Related Work.} 
Sabry and Wadler \cite{SabryWadler97} is the first work on the computational calculus to 
put on center stage  the reduction. Still the focus of the paper are the properties of the translation between that and the monadic metalanguage---the reduction theory itself is not investigated.

In \cite{HZ09} a different  refinement  of $\lambda_c$ has been proposed. 
Its reduction rules are  divided into a  purely operational, a  structural and an  observational system.
It is proved that the purely operational system suffices to reduce any closed term to a value. This result is similar  to our  \Cref{thm:values}, with 
weak $\beta_c$ steps corresponding  to head reduction in \cite{HZ09}. Interestingly, the analogous of our rule $\sigma$ is part of the structural system, while the rule corresponding
to our $\id$ is generalized and considered as an observational rule. Unlike our work, normalization is not studied in 
 \cite{HZ09}.  

Surface reduction  is a generalization of weak reduction that comes from linear logic. 
We inherit surface factorization from the linear $\lambda$-calculus in \cite{Simpson05}.
Such a reduction has been recently studied in several variants of 
the $\lambda$-calculus, especially for semantic purposes \cite{AccattoliPaolini12,CarraroGuerrieri14,AccattoliGuerrieri16,EhrhardGuerrieri16,GuerrieriManzonetto18,Guerrieri18,BucciarelliKRV20,GuerrieriO21}.

Regarding the $\sigma$-rule, 
in \cite{CarraroGuerrieri14} two commutation rules are added to Plotkin's CbV $\lambda$-calculus in order to remove meaningless normal forms---the resulting calculus is called \emph{shuffling}.  The commutative rule there called $\sigma_3$ is literally the same as $\sigma$ here. 
In the   setting of the \emph{shuffling calculus},   properties such as  the fact that   all maximal  surface $\betav\sigma$-reduction sequences  from the same term $M$   have the same number of $\betav$ steps, and so such a reduction is uniformly normalizing,  were  known via semantic tools  \cite{CarraroGuerrieri14, Guerrieri18}, namely non-idempotent intersection types. 
In this paper we   give the first syntactic proof of such a  result.

A relation between the  computational calculus, \cite{Simpson05} and other linear calculi are well-known in the literature, see for example \cite{EggerMS09, SabryWadler97, MaraistOTW99}.

%

In \cite{deLiguoroTreglia20}, Theorem 8.4 states  that any closed term returns a value if and only if it is convergent according to a big-step operational semantics. 
That proof is incomplete and needs a more complex argument via  factorization, as  we do here to prove \Cref{thm:values} (from which  that  statement in  \cite{deLiguoroTreglia20}  easily follows).

 \paragraph{Acknowledgements.}
We are in debt with Vincent van Oostrom, for the  technical issues he pinpointed, and for his many insightful technical suggestions.
We also thank the referees for their valuable comments.

This work was  partially supported  by the ANR project PPS: ANR-19-CE48-0014

\bibliographystyle{alpha} 
\bibliography{references}

\newpage
\appendix
\section*{APPENDIX}


\section{ General properties of the contextual closure }

\paragraph{Shape Preservation.}
We start by recalling a basic but key  property of contextual closure. If a step $\redc$ is obtained by closure under \emph{non-empty context} of a rule $\rredc$, then it preserves the shape of the term.
{We say that $T$ and $T'$ have \emph{the same shape} if  both terms are an 
	application (resp. an abstraction, an variable, a term of shape $!P$).}

\shape*

Note that a root  step  $\Root{}$ is both a  \emph{weak} and a \emph{surface} step.

The implication in the previous lemma cannot be reversed as the following example shows:

\[M\eq V(\II P)\redx{\ii} VP\eq N\]
$M$ is a $\sigma$-redex, but $N$ is not.

\paragraph{Substitutivity.}
A relation $\hookrightarrow$ 
on terms is \emph{substitutive} if 
\begin{equation}\tag{\textbf{substitutive}}
	R \hookrightarrow   R' 
	\text{ implies } R \subs x Q \hookrightarrow R'\subs x Q.
\end{equation}
An obvious induction  on the shape of terms shows the   following (\cite{Barendregt'84} p. 54).
	\begin{property}[Substitutive]\label{fact:subs} Let $\redc$ be  the contextual closure of $\rredc$.
		\begin{enumerate}
			\item\label{fact:subs-function} If $\rredc $ is substitutive then $\redc$ is 
			substitutive: ~ $T\redc T'$ implies $T \subs{x}{Q} \redc T' \subs{x}{Q}$.
			\item\label{fact:subs-argument} If $Q\redc Q'$ then $T\subs{x}{Q} \redc^* 
			T\subs{x}{Q'} $.
		\end{enumerate}
\end{property}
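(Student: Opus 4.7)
The plan is to prove both points by straightforward structural inductions, exploiting the definition of contextual closure as closure under contexts $\cc$ generated by the grammar of contexts of $\lc$.

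For \Cref{fact:subs}.\ref{fact:subs-function}, I would proceed by induction on the context $\cc$ witnessing the step $T \redc T'$, that is, $T = \cc\hole{R}$ and $T' = \cc\hole{R'}$ with $R \rredc R'$. In the base case $\cc = \hole{\,}$, the step is a root step, so $T = R \rredc R' = T'$; substitutivity of the rule $\rredc$ directly gives $R\subs{x}{Q} \rredc R'\subs{x}{Q}$, and this is a root step in $\redc$. In the inductive cases $\cc = \oc(\lambda y.\cc')$, $\cc = V\cc'$, or $\cc = (\lambda y.\cc')M$, we may assume by $\alpha$-conversion that $y \neq x$ and $y \notin \fv{Q}$, so the substitution distributes through the context: e.g.\ $(V\cc'\hole{R})\subs{x}{Q} = V\subs{x}{Q}\,\cc'\subs{x}{Q}\hole{R\subs{x}{Q}}$. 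By the inductive hypothesis applied to the sub-step $\cc'\hole{R} \redc \cc'\hole{R'}$, we have $\cc'\hole{R}\subs{x}{Q} \redc \cc'\hole{R'}\subs{x}{Q}$; closing this step under the outer context gives the desired step for $T\subs{x}{Q} \redc T'\subs{x}{Q}$.

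For \Cref{fact:subs}.\ref{fact:subs-argument}, I would proceed by induction on $T$, using the grammar of values and computations. If $T = x$, then $T\subs{x}{Q} = Q \redc Q' = T\subs{x}{Q'}$, so one $\redc$-step suffices. If $T = y$ for $y \neq x$, both sides equal $y$ and zero steps suffice. If $T = \lambda y.M$ (with $y \neq x$, $y \notin \fv{Q}\cup\fv{Q'}$ by $\alpha$-conversion), then by the inductive hypothesis $M\subs{x}{Q} \redc^* M\subs{x}{Q'}$, and closing each step under the abstraction context $\lambda y.\hole{\,}$—obtained via contexts such as $\oc(\lambda y. \hole{\,})$ or $(\lambda y.\hole{\,})N$ when $T$ is in the scope of the appropriate outer construct—preserves the reduction steps. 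The cases $T = \oc V$ and $T = VM$ are handled analogously by distributing substitution and using the inductive hypothesis on the subterms; where the substitution acts on several subterms (as in $T = VM$), we combine the two sequences of steps by using point~\ref{fact:subs-function} to push the substitution through one subterm while leaving the other fixed.

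Neither direction presents a real obstacle: the argument is entirely routine, and the only point demanding care is the implicit $\alpha$-renaming of bound variables to avoid capture when pushing the substitution $\subs{x}{Q}$ under binders. Since terms are identified up to $\alpha$-congruence (\Cref{def:compgram}), we may freely assume that all bound variables in $T$, $T'$, and $\cc$ differ from $x$ and do not occur free in $Q$ or $Q'$, which makes the substitution commute with the context constructors as used in both inductions.
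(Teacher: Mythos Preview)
Your proof is correct and matches the paper's approach, which simply states that the result follows by an ``obvious induction on the shape of terms'' and refers to \cite[p.~54]{Barendregt'84}. One small remark: in your sketch of part~\ref{fact:subs-argument} for $T = VM$, you do not actually need to invoke part~\ref{fact:subs-function} to combine the two sequences---contextual closure alone suffices to lift each step from the inductive hypothesis on a subterm to a step on the compound term.
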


\section{Properties of the syntax $\Lambda^!$}\label{app:bang}

In this section, we  consider the set of terms $\Lambda^!$ (the same syntax as the \emph{full} bang calculus, as defined in \Cref{subsect:translations-comp-bang}), endowed with a generic reduction $\redx{\Rule}$ (from a generic rule $\mapsto_\Rule$).
We study some properties that hold in general in $(\Lambda^!, \redx{\Rule})$. 

\emph{Terms}    are generated by the grammar:
\begin{align*}
	T,S,R&::=  x  \mid 	ST  \mid 	\lambda x.T    \mid \oc T & (\textbf{terms } \Bang)
\end{align*}

\emph{Contexts} ($\cc$), \emph{surface contexts} ($\ss$) and  \emph{weak contexts} ($\ww$)  are generated  by the grammars:

\begin{align*}
	\cc & ::= \hole{}  \mid T\cc \mid \cc T   \mid \lambda x.\cc    \mid  \oc\cc  & \qquad(\textbf{contexts})\\
	\ss & ::= \hole{} \mid T\ss \mid \ss T   \mid \lambda x.\ss &\qquad (\textbf{surface contexts})\\
	\ww & ::= \hole{} \mid T\ww \mid \ww T   \mid \oc \ww &\qquad (\textbf{weak contexts})
\end{align*}

If  $\Root{\Rule}$ is a rule, the reduction $\redx{\Rule}$ its the  closure under context
 $\cc$.
\emph{Surface   reduction}  $\sredx{\Rule}$ (resp.~\emph{weak  reduction} $\wredx{\Rule}$)
  is the closure of $\Root{\Rule}$ under surface contexts $\ss$ (resp.~weak contexts $\ww$).
 \emph{Non-surface reduction}  $\nsredx{\Rule}$ (resp.~\emph{non-weak reduction} $\nwredx{\Rule}$)
 is the closure of $\Root{\Rule}$ under contexts $\cc$ that are not surface (resp.~not weak).

\subsection{Shape preservation for internal steps in $\Bang$.}\label{sec:preservation}
\Cref{fact:shape} (p.~\pageref{fact:shape}) implies  that $\nsredx{\Rule}$ and $ \nwredx{\Rule}$  steps always preserve the shape of terms.
We recall that we  write $\Root{\Rule}$ to indicate the step $\redx{\Rule}$ obtained by \emph{empty contextual closure}.
The following property immediately follows from \Cref{fact:shape}.

%
%
%
%
%
%
%

\begin{fact}[Internal Steps]\label{fact:isteps}
	Let $\Root{\Rule}$ be a rule and $\redx{\Rule}$ be its contextual closure.
	The following hold for $\iredx{\,\Rule} \in \{\nsredx{\Rule},\nwredx{\Rule}\}$.
	\begin{enumerate}
		\item\label{p:preserve-shape} Reduction $\iredx{\Rule}$ preserves the shapes of terms. 
		\item\label{p:variable} There is no $T$ such that $T \iredx{\,\Rule} x$, for any variable $x$.

		\item\label{p:bang}   $T\iredx{\,\Rule} \, \oc  U_1$ implies $T = \oc T_1$ and $T_1\redx{\Rule} U_1$. 
		
		\item\label{p:abstraction} $T\iredx{\,\Rule} \lam x. U_1$ implies $T = \lam x. T_1$ and $T_1\redx{\,\Rule} U_1$.
		\item\label{p:application} $T\iredx{\,\Rule} U_1U_2$ implies  $T = T_1T_2$, with either (i) $T_1\iredx{\,\Rule} U_1$ (and $T_2=U_2$), 
		or (ii) $T_2\iredx{\,\Rule} U_2$ (and $T_1=U_1$).  Moreover, $T_1$ and $U_1$ have the same shape, and so   $T_2$ and $U_2$.
		
	\end{enumerate}
\end{fact}

\begin{coroll}\label{cor:redex}
	Let $\Root{\Rule}$ be a rule and $\redx{\Rule}$ be its contextual closure.
	Assume $T\nsredx{\Rule}\, S$ or $T\nwredx{\Rule}\, S$.
	\begin{itemize}
		\item $T$ is a $\bbeta$-redex if and only if $S$ is.
		\item $T$ is a $\sigma$-redex if and only if $S$ is. 
	\end{itemize}
\end{coroll}

\begin{proof}
	The left-to-right direction follows from \Cref{fact:isteps}.\ref{p:preserve-shape}.
	The right-to-left direction is obtained by repetitively applying \Cref{fact:isteps}.\ref{p:bang}--\ref{p:application}.
\end{proof}

%

\subsection{Surface Factorization, Modularly.}\label{app:factorization}

%
%
%

\label{sec:modular}

In an abstract setting, let us consider a   rewrite system $(A,\red)$ where $\red \eq\reda \cup \redc$. 
Under which condition $\red$ admits factorization, assuming that both $\reda$ and $\redc$ do?
That is, if $\reda \eq \ereda \cup \ireda$ and $\redc \eq \eredc \cup \iredc$ $\ex$-factorize (\ie $\reda^* \subseteq \ereda^* \cup \ireda^*$ and $\reda^* \subseteq \ereda^* \cup \ireda^*$), is it the case that $\red^* \subseteq \ered^* \cup \ired^*$ (where $\ered \defeq \ereda \cup \eredc$ and $\ired \defeq \ireda \cup \iredc$)?
To deal with this question,  a technique   for proving factorization  for \emph{compound systems}  in a 
\emph{modular} way  has been introduced in \cite{AccattoliFaggianGuerrieri21}.
The approach  can be seen as an  analog---for factorization---of the classical 
technique for confluence based on Hindley--Rosen lemma: 
if   $\reda,\redc$ are $\ex$-factorizing reductions, their union  $\reda \cup \redc$ also is, 
provided that two \textit{local} conditions of commutation hold.

\begin{thmm}[Modular factorization, abstractly \cite{AccattoliFaggianGuerrieri21}]\label{thm:modular}
	Let  $\reda \eq (\ereda \cup \ireda)$ and $\redc \eq (\eredc \cup \iredc)$ be  $\ex$-factorizing reductions.
	Let $\ered  \deff \ereda \cup\eredc$, and $\ired  \deff  \ireda \cup\iredc$. 
	The  union 	$\reda\cup \redc$  satisfies factorization 
	$\F{\ered}{\ired}$  if the following  swaps hold
	\begin{equation}\label{eq:LS}\tag{\textbf{Linear Swaps}}
		\ireda \cdot   \eredc  \ \subseteq\  \eredc \cdot \reda^*     \quad\text{ and }\quad \iredc \cdot   \ereda  \ \subseteq\  \ereda \cdot \redc^* 
	\end{equation}
	
\end{thmm}


\paragraph{Extensions of the bang calculus.}
Following \cite{FaggianGuerrieri21}, we now consider a calculus $(\Bang, \red)$, where $\red \eq \redbb\cup \redc$ and $\redc$ is the contextual closure of a new rule $\Root{\gamma}$. 
\Cref{thm:modular}
{states} that the  compound system $\redbb\cup \redc$ satisfies surface factorization 
if $\F\sredbb\nsredbb$, $\F\sredc\nsredc$, 
and  the two linear swaps hold. We  know that  $\F\sredbb\nsredbb$ always hold. We now show that verifying the linear swaps  
reduces to a single simple test, leading to \Cref{prop:test}.

First, we observe that each  linear swap condition can be tested by considering for the surface step  only $\mapsto$, that is, only the closure of $\mapsto$ under \emph{empty} context.
This is expressed in the following lemma, where we  include also a useful variant.


\begin{lem}[Root linear swaps]\label{l:surf_swaps} In $\Bang$, let  $\reda, \redc$ be the contextual closure of rules $\rreda,\rredc$.
	
	\begin{enumerate}
		\item 	$\nsreda \cdot \rredc \subseteq  {\sredc} \cdot \reda^* $ implies 
		$\nsreda \cdot \sredc \subseteq  {\sredc} \cdot \reda^* $. 
		
		\item Similarly,  $\nsreda \cdot \rredc \subseteq  {\sredc} \cdot \reda^= $ implies 
		$\nsreda \cdot \sredc \subseteq  {\sredc} \cdot \reda^= $. 
	\end{enumerate}
\end{lem}

\begin{proof}Assume $M \nsreda U \sredc N$. 
	If $U$ is the redex, the claim holds by assumption. Otherwise,
	we prove  $M{\sredc} \cdot \reda^* N $, by induction on the structure of $ U $.  Observe that  both $M$ and $N$ have the same shape as $U$ (by Property \ref{fact:shape} ).
	\begin{itemize}
		\item $U=U_1U_2$ (hence $M=M_1M_2$ and $N=N_1N_2$). We   have  two cases.   
		\begin{enumerate}
			\item Case  $U_1 \sredc N_1$.  By \Cref{fact:isteps}, either $M_1\reda U_1$ or $M_2\reda U_2$.
			
			\begin{enumerate}
				\item 	 Assume $M \defeq M_1M_2 \nsreda U_1 M_2\sredc N_1 M_2 \eqdef N$. 
				
				We have $M_1 \nsreda U_1 \sredc N_1 $,
				and we conclude by \ih.
				
				\item  Assume $M \defeq U_1M_2 \nsreda U_1 U_2\sredc N_1 U_2 \eqdef N$. 
				
				 Then $U_1M_2 \sredc N_1M_2\reda N_1U_2$.
			\end{enumerate}
			\item Case $U_2 \sredc N_2$. Similar to the above.	 
		\end{enumerate}
		\item $U=\lam x.U_0$ (hence $M=\lam x. M_0$ and $N=\lam x. N_0$). We conclude by  \ih.
	\end{itemize}
	Cases $U= \oc U_0$ or $U=x$ do not apply.
\end{proof}

As we    study $ \redbb\cup \redc$, one of the linear swap is 	$\nsredx{\gamma} \cdot \sredbb \subseteq  {\sredbb} \cdot \redc^*$. We show that \emph{any} $\redc$ linearly swaps after $\sredbb$ as soon as $\rredc$  is \emph{substitutive}. 
	
	\begin{lem}[Swap with $\sredbb$] \label{l:swap_after_b} 
	%
		If  	  $\rredc$ is   substitutive, then 
		$\nsredx{\gamma} \cdot \sredbb \subseteq  {\sredbb} \cdot \redc^* $  always holds. 
	\end{lem}
	\begin{proof} We prove $ \nsredx{\gamma} \cdot  \mapsto_{\bbeta} ~\subseteq  ~ {\sredbb} \cdot \redc^* $, and conclude by 
		\Cref{l:surf_swaps}.
		
		Assume  $M \nsredx{\gamma} (\lam x.P) !Q \rredbb P \subs x Q$. We want to prove   $  M{\sredbb} \cdot \redc^* P \subs x Q$.  By \Cref{fact:isteps},  $M=M_1M_2$ and {either $M_1= \lam x.P _0\redc  \lam x.P$ or $M_2= \oc Q_0 \redc \oc Q$.}
		\begin{itemize}
			\item 	In the first case,  $M=(\lam x.P_0)!Q$, with $P_0 \redc P$. 
			So, $M=(\lam x.P_0)!Q \rredbb P_0 \subs x Q $ and we conclude by substitutivity of $\redc$ (\Cref{fact:subs}.\ref{fact:subs-function}).
			\item In the second case, $M=(\lam x.P)!Q_0$ with $Q_0 \redc Q$. 
			Therefore $M=(\lam x.P)!Q_0 \rredbb P \subs x {Q_0} $, and we conclude by \Cref{fact:subs}.\ref{fact:subs-argument}.
			\qedhere
		\end{itemize}	
\end{proof}

Summing up, since surface factorization for $ \bbeta$ is known, we obtain the following compact  test for surface  factorization in extensions of $\redbb$.

\begin{prop}[A modular test for surface factorization]\label{prop:test-bang} 	
	Let  	$\redbb$  be $\betab$-reduction and $\redc$ be the contextual closure of a rule $\rredc $.  	The reduction	$\redbb \cup \redc$  satisfies surface factorization if:
	\begin{enumerate}
		\item \emph{Surface factorization of $\redc$}: ~~
		$\redc^* \, \, \subseteq ~\sredxStar{\gamma} \cdot \nsredxStar{\gamma}$
		\item  $\rredc$ is  \emph{substitutive}:  ~~ 	$R \rredc R' 
		\text{ implies } R \subs x Q \rredc R'\subs x Q.$
		\item \emph{Root linear swap}: ~~ $\nsredbb \cdot \rredc  \subseteq \ \rredc \cdot\redbb^* $.
	\end{enumerate}
\end{prop}

\subsection{Restriction to computations.}
\label{subsect:restriction}

In $\Com$, let $\Root{\Rule}$ be a rule and $\redx{\Rule}$ be its contextual closure.
The restriction of reduction to computations preserves $\redx{\Rule},  \sredx{\Rule}, \nsredx{\Rule}, \wredx{\Rule}, \nwredx{\Rule}$ steps.
Thus, all properties that hold for $(\Bang, \redx{\Rule})$ (e.g.~\Cref{fact:isteps,cor:redex}) also hold for $(\Com, \redx{\Rule})$.  

In particular, \Cref{prop:test} is immediate consequence of \Cref{prop:test-bang}.


%

%

%

\section{Properties of reduction in  $\lc$ }

%
%
%
%

We now consider $\lc$, that  is $ (\Com, \redlc)$. 
As we have just seen above, the  properties  we have studied in \Cref{app:bang}  also hold when  restricting reduction to computations.
Moreover, $\lc$ satisfies also specific  properties that do not hold in general, as the following.

\begin{lem}
	\label{lemma:id-redex}
	Let $M\in \ComTerm$ and  $M\nsredx{\lcc} \, L$: 	$M$ is a $\id$-redex (resp. a $\ii$-redex) if and only if $L$ is.
\end{lem}

\begin{proof} 
	If $M$ is a $\id$-redex, this means that $M=(\lambda z.\oc z)P\nsredx{\gamma}  (\lambda z.\oc z)N=L$ where $P\nsredx{\gamma}N$, hence $L$ is a $\id$-redex. Moreover, if $M$ is a $\ii$-redex, then $P\neq \oc V$, hence by \cref{fact:isteps} $L\neq \oc V'$ for any $V'$. Thus $L$ is a $\ii$-redex.
	
	 Let us prove that if $L$ is a $\id$-redex, so is $M$.
	Since  $L\eq  (\lam z.!z) N$, by  \cref{fact:isteps},  $M$ is  an application; we have the following cases:
	\begin{enumerate}[(i)]
		\item either $M\eq (\lam z. P)N \nsredx{\gamma}  (\lam z.\oc z) N  $ where $P\nsredx{\gamma} \oc z$\,;
		\item or $M\eq  (\lam z.\oc z) P \nsredx{\gamma}  (\lam z.\oc z) N  $ where $P\nsredx{\gamma} N$.
	\end{enumerate}
	Case \textit{(i.)} is impossible because otherwise $P\eq \oc V$ for some value $V$, by \Cref{fact:isteps}, such that $V\red_{\gamma} z$, but such a $V$ does not exist. Therefore we are necessarily in  
	case \textit{(ii.)}, \ie $M$ is a $\id$-redex. 
	Moreover, if $L$ is a $\ii$-redex, then $N\neq \oc V$, hence $N$ is an application, and so is $P$ by \Cref{fact:isteps}.
\end{proof}

Note that \Cref{lemma:id-redex} is false if we replace the hypothesis $M\nsredx{\lcc} \, L$ with $M\nwredx{\lcc} \, L$.
Indeed, consider $M = (\lam x. (\lam y. \oc y)\oc x)N \nwredx{\lcc} (\lam x. \oc x) N = L$: $L$ is a $\id$-redex but $M$ is not.

\begin{lem}\label{lem:iotaToVar}
	There is no $M \in \ComTerm$ such that $M\red_{\ii} \oc x$.
\end{lem}

\begin{proof}
By induction on $M \in \ComTerm$, proving that for every $M$ such that $M\red_{\ii}N$, $N \neq \oc x$.
\end{proof}

\subsection{Postponement of $\iota$, Technical Lemmas.}


\newcommand{\reduno}{\red_{\beta_1}}
\newcommand{\reddue}{\red_{\beta_2}}

\newcommand{\circled}[1]{\raisebox{.5pt}{\textcircled{\raisebox{-.9pt} {#1}}}}
\newcommand{\parreduno}{\Rightarrow_{\normalfont\scalebox{.5}{\circled{1}}}}
\newcommand{\parreddue}{\Rightarrow_{\normalfont\scalebox{.5}{\circled{2}}}}
\newcommand{\parredtre}{\Rightarrow_{\normalfont\scalebox{.5}{\circled{3}}}}


\begin{lem}[$\iota$ vs. $\beta_1$]\label{lem:iotaVSbeta1}
	$ M \redi L \reduno N \mbox{ implies } M \reduno^* 	\cdot  \redi^= N  $
\end{lem}

\begin{proof}We set the notation  $\parreduno \ := \ \reduno^* \cdot \redi^=$. 
	
The proof is by induction on $L$. 
	Cases:
	\begin{itemize}
		\item  $L = (\lambda x. \oc x)\oc V \mapsto_{\beta_1} \oc V = N$. 
		Then, there are two possibilities. 
		
		Either  $M = (\lambda z. \oc z)L \mapsto_\ii   L$  then
		$$
		M = (\lambda z. \oc z)((\lambda x. \oc x)\oc V) \reduno (\lambda z. \oc z)\oc V \reduno \oc V = N.
		$$
		
		Or $M = (\lambda x. \oc x)\oc W $ with $\oc W \redi \oc V$, and then
		$$M =(\lambda x. \oc x)\oc W  \reduno\oc W \redi \oc V = N$$
		
		The case $M = (\lambda x.P)\oc V $ with $P \red_{\ii} \oc x$ is impossible by \Cref{lem:iotaToVar}.
		
		\item $L= \oc \lam x. P \reduno \oc \lam x. P'= N$ where $P\reduno P'$. 
		In this case note that necessarily $M = \oc \lam x. Q$ where $Q\redi P$. 
		Otherwise it should have been $M = (\lam z.\oc z) L\mapsto_{\ii} L$ but the $\ii$ step is impossible because $L= \oc \lam x. P $. 
		By \ih, since $Q\redi P\reduno P'$, we have $Q\parreduno P$; hence $M \parreduno N$.
		
		\item $L=VP\reduno V'P'=N$ where $\reduno$ is not root steps, that is:
		\begin{enumerate}[(a)]
			\item\label{a:beta1}  either $V\reduno V'$ and $P = P'$;
			\item\label{b:beta1}  or $V = V' $ and $P \reduno P' $. 
		\end{enumerate}
		By \Cref{fact:shape}, $M$, $L$, $N$ are applications. So, $M$ has the following shape:
		\begin{enumerate}
			\item\label{1:beta1} $M = VQ$ with $Q\redi P$
			\item\label{2:beta1}$M = WP$ with $W\redi V$
			\item\label{3:beta1} $M=(\lam x.\oc x)(VP)\mapsto_{\ii} VP=L $
		\end{enumerate}
		
		We distinguish six sub-cases:
		
		\begin{description}
			\item[Case \ref{a:beta1}\ref{1:beta1}] We have $M= VQ \reduno V'Q \redi V'P=N$, switching  the steps $\redi$ and $\reduno$, directly.
			\item[Case \ref{b:beta1}\ref{1:beta1}] $Q \redi P \reduno P'$, then the thesis follows by i.h.: $Q\parreduno P'$ and then $M=VQ \parreduno VP' = N$.
			\item[Case \ref{a:beta1}\ref{2:beta1}] $W \redi V \reduno V'$, then the thesis follows by i.h.: $W\parreduno V'$ and then $M=WP \parreduno V'P = N$.
			\item[Case \ref{b:beta1}\ref{2:beta1}] We have $M= WP \reduno WP' \redi VP'=N$, switching  the steps $\redi$ and $\reduno$, directly.
			\item[Case \ref{a:beta1}\ref{3:beta1}] $M=(\lam x. \oc x)(VP)\reduno (\lam x. \oc x)(V'P)\mapsto_{\ii} V'P=N$.
			\item[Case \ref{b:beta1}\ref{3:beta1}]$M=(\lam x. \oc x)(VP)\reduno (\lam x. \oc x)(VP')\mapsto_{\ii} VP'=N$.
			\qedhere
		\end{description}
	\end{itemize}
\end{proof}	


\begin{lem}[$\iota$ vs. $\beta_2$]\label{lem:iotaVSbeta2}
	$ M\redi L \reddue N \mbox{ implies } \reduno^* \cdot \reddue^= \cdot \reduno^* \cdot \redi^* N$
\end{lem}

\begin{proof} We set the notation $\parreddue ~\Coloneqq ~\reduno^* \cdot \reddue^= \cdot \reduno^* \cdot \redi^*$.
	The proof is by  induction on $L$. Note that if $L=\oc x$ there is no $\beta_2$ reduction from it, so this case is not in the scope of the induction.
	Cases:
	\begin{itemize}
		\item  $L = (\lambda x. P')\oc V' \mapsto_{\beta_2} P'\Subst{V'}{x} = N$. 
		Then, there are three possibilities.
		\begin{enumerate}[(i)]
			\item\label{i:beta2} $M=(\lam x.P) \oc V'$ with $P\redi P'$
			\item\label{ii:beta2} $M=(\lam x.P') \oc V$ with $V\redi V'$
			\item\label{iii:beta2} $M=(\lam x.\oc x) L \mapsto_{\ii} L$
		\end{enumerate} 
		So by analyzing each of the three cases above, we can postpone the $\redi$ step as follows:
		\begin{description}
			\item[Case \ref{i:beta2}] $M=(\lam x.P)\oc V\mapsto_{\beta_2} P\Subst{V}{x}\redi P'\Subst{V}{x}$ where the last reduction step is possible by \Cref{fact:subs}.\ref{fact:subs-function}. Note that $P\neq \oc x$ otherwise would not possible $P\redi P'$, as assumed.
			\item[Case \ref{ii:beta2}] $M=(\lam x.P)\oc V\mapsto_{\beta_2} P\Subst{V}{x}\redi^* P\Subst{V'}{x}$ where the last reduction step is possible by \Cref{fact:subs}.\ref{fact:subs-argument}.
			\item[Case \ref{iii:beta2}]  $M=(\lam x.\oc x) L\mapsto_{\beta_2}(\lam x.\oc x) N \mapsto_{\ii} N$
		\end{description}
		
		\item $L= \oc \lam x. P \reddue \oc \lam x. P'= N$ where $P\reddue P'$. In this case note that $M$ has necessary the shape $\oc \lam x. Q$ where $Q\redi P$. Otherwise $M$ should have been $(\lam z.\oc z) L\mapsto_{\ii} L$ but it is impossible by definition of $ \mapsto_{\ii}$ since $L= \oc \lam x. P $. 
		The thesis follows by induction, since we have $Q\redi P\reddue P'$, $Q\parreddue P$.
		
		\item $L=VP\reddue V'P'=N$ where $\reddue$ is not root steps, that is:
		\begin{enumerate}[(a)]
			\item\label{a:beta2}  either $V\reddue V'$ and $P = P'$;
			\item\label{b:beta2}  or $V = V' $ and $P \reddue P' $. 
		\end{enumerate}
		By \Cref{fact:shape}, $M$, $L$, $N$ are applications. So, $M$ has the following shape:
		\begin{enumerate}
			\item\label{1:beta2} $M = VQ$ with $Q\redi P$
			\item\label{2:beta2}$M = WP$ with $W\redi V$
			\item\label{3:beta2} $M=(\lam x.\oc x)(VP)\mapsto_{\ii} VP=L $
		\end{enumerate}
		
		We distinguish six sub-cases:
		
		\begin{description}
			\item[Case \ref{a:beta2}\ref{1:beta2}] We have $M= VQ \reddue V'Q \redi V'P=N$, switching  the steps $\redi$ and $\reddue$, directly.
			\item[Case \ref{b:beta2}\ref{1:beta2}] $Q \redi P \reddue P'$, then the thesis follows by i.h., that is: $Q\parreddue P'$ and then $M=VQ \parreddue VP' = N$.
			\item[Case \ref{a:beta2}\ref{2:beta2}] $W \redi V \reddue V'$, then the thesis follows by i.h., that is: $W\parreddue V'$ and then $M=WP \parreddue V'P = N$.
			\item[Case \ref{b:beta2}\ref{2:beta2}] We have $M= WP \reddue WP' \redi VP'=N$, switching  the steps $\redi$ and $\reddue$, directly.
			\item[Case \ref{a:beta2}\ref{3:beta2}] $M=(\lam x. \oc x)(VP)\reddue (\lam x. \oc x)(V'P)\mapsto_{\ii} V'P=N$
			\item[Case \ref{b:beta2}\ref{3:beta2}]$M=(\lam x. \oc x)(VP)\reddue (\lam x. \oc x)(VP')\mapsto_{\ii} VP'=N$
		\end{description}
	\end{itemize}
\end{proof}

\begin{lem}[$\iota$ vs. $\sigma$]\label{lem:iotaVSsigma}
	$ M \redi L \reds N \mbox{ implies } M \red_{\sigma}^* 	\cdot  \redi^= N  $
\end{lem}

\begin{proof} We set the notation  $\parredtre ~ \Coloneqq ~(\reds \cup\reduno)^* \cdot \redi^=$.
	
	The proof is by induction on $L$. We distinguishing if the last $L \reds N$ is a root step or not.
	
	If $L\mapsto_{\sigma} N$ then $L=V((\lam x. P)Q)$ and $N= (\lam x. VP)Q $. Thus, there are seven cases for $M$:
	
	\begin{enumerate}[(i)]
		\item\label{iv:sigma} $M=(\lam z.\oc z)(V((\lam x. P)Q)) $;
		\item\label{v:sigma} $M=V((\lam z.\oc z)((\lam x. P)Q)) $;
		\item\label{vi:sigma} $M=V((\lam x. (\lam z.\oc z)P)Q) $;
		\item\label{vii:sigma} $M=(V((\lam x. P)((\lam z.\oc z)Q))) $;
		\item\label{i:sigma} $M=W((\lam x. P)Q)$ with $W\redi V$ and $\redi$ is not a root step;
		\item\label{ii:sigma} $M=V((\lam x. R)Q)$ with $R\redi P$ and $\redi$ is not a root step;
		\item\label{iii:sigma} $M=V((\lam x. P)R)$ with $R\redi Q$ and $\redi$ is not a root step.
	\end{enumerate} 
	So by analyzing each of the seven cases above, we can postpone the $\redi$ step as follows:
	\begin{description}
		\item[Case \ref{iv:sigma}] $M=(\lam z.\oc z)(V((\lam x. P)Q)) \reds (\lam z.\oc z)((\lam x. VP)Q)  \mapsto_{\ii} (\lam x. VP)Q =N$.
		\item[Case \ref{v:sigma}] $
		M=V((\lam z.\oc z)((\lam x. P)Q))  \reds V((\lambda x. (\lam z.\oc z) P)Q) \reds (\lambda x. V((\lam z.\oc z)P))Q \red_\gamma (\lambda x. VP)Q = N$
		where in the last step $\gamma$ is $\iota$ or $\beta_1$ depending on whether $P$ is of the form $\oc W$ or not.
		\item[Case \ref{vi:sigma}] $M = V((\lambda x. (\lam z.\oc z) P)Q) \reds (\lambda x. V((\lam z.\oc z) P))Q \red_\gamma (\lambda x. VP)Q = N$
		where in the last step $\gamma$ is $\iota$ or $\beta_1$ depending on whether $P$ is of the form $\oc W$ or not.
		\item[Case \ref{vii:sigma}] $	M = V((\lambda x. P)((\lam z.\oc z) Q)) \sreds (\lambda x. VP)((\lam z.\oc z) Q) \red_\gamma (\lambda x. VP)Q = N	$
		where in the last step $\gamma$ is $\iota$ or $\beta_1$ depending on whether $Q$ is of the form $\oc W$ or not.
		\item[Case \ref{i:sigma}] $M=W((\lam x. P)Q)\mapsto_{\sigma} (\lam x. WP)Q \redi (\lam x. VP)Q =N$.
		\item[Case \ref{ii:sigma}] $M=V((\lam x. R)Q)\mapsto_{\sigma} (\lam x. VR)Q \redi (\lam x. VP)Q =N$.
		\item[Case \ref{iii:sigma}]  $M=V((\lam x. P)R)\mapsto_{\sigma} (\lam x. VP)R \redi (\lam x. VP)Q =N$.
	\end{description}
	
	Consider the case $L= \oc \lam x. P \reds\oc \lam x. P'= N$ with $P\reds P'$. 
	So, note that $M$ has necessary the shape $\oc \lam x. Q$ where $Q\redi P$. Otherwise $M$ should have been $(\lam z.\oc z) L\mapsto_{\ii} L$ but it is impossible by definition of $ \mapsto_{\ii}$ since $L= \oc \lam x. P $. 
	The thesis follows by \ih, since $Q\redi P\reds P'$, $Q\parredtre P$.
	
	The last case to consider is $L=VP\reds V'P'=N$ where $\reds$ is not root steps, that is:
	\begin{enumerate}[(a)]
		\item\label{a:sigma}  either $V\reds V'$ and $P = P'$;
		\item\label{b:sigma}  or $V = V' $ and $P \reds P' $. 
	\end{enumerate}
	By \Cref{fact:shape}, $M$, $L$, $N$ are applications. So, $M$ has one of the following shapes:
	\begin{enumerate}
		\item\label{1:sigma} $M=(\lam z.\oc z)L \mapsto_{\ii} VP=L $;
		\item\label{2:sigma}$M = WP$ with $W\redi V$;
		\item\label{3:sigma}$M = VQ$ with $Q\redi P$.
	\end{enumerate}
	
	Hence, combining \Cref{a:sigma,b:sigma} with \Cref{1:sigma,2:sigma,3:sigma}, we distinguish six sub-cases:
	
	\begin{description}
		\item[Case \ref{a:sigma}\ref{1:sigma}] $M=(\lam x. \oc x)(VP)\reds (\lam x. \oc x)(V'P)\mapsto_{\ii} V'P=N$.
		\item[Case \ref{b:sigma}\ref{1:sigma}] $M=(\lam x. \oc x)(VP)\reds (\lam x. \oc x)(VP')\mapsto_{\ii} VP'=N$.
		\item[Case \ref{a:sigma}\ref{2:sigma}] $W \redi V \reds V'$, then the thesis follows by i.h: $W\parredtre V'$ and then $M=WP \parredtre V'P = N$.
		\item[Case \ref{b:beta2}\ref{2:beta2}] We have $M= WP \reds WP' \redi VP'=N$, switching  the steps $\redi$ and $\reds$, directly.
		\item[Case \ref{a:beta2}\ref{3:beta2}] We have $M= VQ \reds V'Q \redi V'P=N$, switching  the steps $\redi$ and $\reds$, directly.
		\item[Case \ref{b:beta2}\ref{3:beta2}] $Q \redi P \reds P'$, then the thesis follows by i.h.: $Q\parredtre P'$ and then $M=VQ \parredtre VP' = N$.
		\qedhere
	\end{description}
	
\end{proof}


\section{Normalization of $\lc$}\label{app:normalization}

\lemsurfacenf*
\begin{proof}By easy induction on the shape of $M$. Observe that $M$ and $N$ have the same shape, because the step $M\nered N$ is not a root step.
	\begin{itemize}
		\item   $M= \oc V$, and $N= \oc V'$: the claim is trivial.
		
		\item $M=VP$ and $N=V'P'$.  Either $V\nered V'$ (and $P'=P$) or $P\nered P'$ (and $V=V'$). Assume $M=VP$ is $\ex$-normal. Since $V$ and $P$ are $\ex$-normal, by  \ih so are  $V'$ and $P'$. Moreover, $N$ is not a redex, by \Cref{cor:redex}, so $N$ is normal. Assuming $N=V'P'$ normal is similar.
		\qedhere
	\end{itemize}
\end{proof}

\begin{fact}\label{fact:id_diamond}
	The reduction $\ii$ is quasi-diamond. Therefore, if $S\redi^k N$ where $N$ is $\ii$-normal, then any maximal $\ii$-sequence from $S$ ends in $N$, in $k$ steps.
\end{fact}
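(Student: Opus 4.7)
}

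The plan is to establish quasi-diamond for $\redi$ directly, and then invoke Newman's result (\Cref{fact:diamond}) to obtain Random Descent, which immediately yields the ``therefore'' clause (all maximal $\redi$-sequences from $S$ have the same length and reach the unique $\ii$-normal form, if any).

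For the quasi-diamond property, I will proceed by induction on the term $S$, analyzing two $\ii$-steps $t_1 \revred_{\ii} S \redi t_2$. If the two steps contract the same redex, then $t_1 = t_2$ and we are done. Otherwise the two contracted redexes are distinct occurrences, and the standard case analysis yields: (a) the two redexes are \emph{disjoint}, in which case firing one leaves the other intact and the diagram closes in one symmetric $\ii$-step on each side; (b) one redex is strictly contained in the other (the \emph{nested} case). In case (b), say the outer redex is $(\lambda x.\oc x)M$ (with $M \neq \oc V$ for any value $V$, so that it is a genuine $\ii$-redex and not a $\betac$-redex) and the inner redex lies inside $M$ (the only other possible location would be inside $\oc x$, which contains no redex). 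Starting from $(\lambda x.\oc x)M$, firing the outer step gives $M$, while firing the inner step gives $(\lambda x.\oc x)M'$ with $M \redi M'$. To close the diagram I need $(\lambda x.\oc x)M'$ to still be an $\ii$-redex, i.e.\ $M' \neq \oc V'$ for any value $V'$.

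This is the key technical point and the main obstacle. It is handled by shape preservation (\Cref{fact:shape}) together with the specific form of $\mapsto_\ii$. The step $M \redi M'$ is either a root step, in which case $M = (\lambda z.\oc z)P \mapsto_\ii P = M'$, and by definition of $\mapsto_\ii$ we must have $P \neq \oc V'$ (otherwise the step would also be a $\betac$-step, contradicting $\mapsto_\ii = \mapsto_\id \setminus \mapsto_{\betac}$); or it is a non-empty context step, in which case $M'$ inherits the application shape of $M$ by \Cref{fact:shape} and so is again not of the form $\oc V'$. In either case $(\lambda x.\oc x)M' \redi M'$ is a valid $\ii$-step, closing the diagram with $u = M'$ via $t_1 = M \redi M'$ (by the inductive closure on $M \redi M'$) and $t_2 = (\lambda x.\oc x)M' \redi M' = u$.

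Once quasi-diamond is established, \Cref{fact:diamond} gives Random Descent for $\redi$: all maximal $\redi$-sequences from $S$ have the same number of steps and end in the same normal form whenever one exists. Applying this to the hypothesis $S \redi^k N$ with $N$ being $\ii$-normal yields that every maximal $\redi$-sequence from $S$ has length exactly $k$ and terminates at $N$, as required.
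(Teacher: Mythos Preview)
The paper states \Cref{fact:id_diamond} without proof, so there is no paper argument to compare against; your proposal is supplying the missing proof, and it is correct.

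The crucial technical point---which is precisely what distinguishes quasi-diamond for $\redi$ from quasi-diamond for the full $\redid$---is that an $\ii$-step never turns an application into a term of shape $\oc V$. Your two-case argument handles this cleanly: a root step $\II P \mapsto_\ii P$ has $P \neq \oc V'$ by the definition of $\ii$, and a non-root step preserves the application shape by \Cref{fact:shape}. This is exactly what is needed to guarantee that the outer $\ii$-redex survives the inner step in the nested case, so the diagram closes in one step on each side.

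Two small remarks. The phrase ``by the inductive closure on $M \redi M'$'' in your nested case is misleading: no induction is used there; $t_1 = M \redi M'$ is simply the inner step itself, now visible at top level after the outer $\II$ has been stripped. Second, your induction on $S$ tacitly relies on the standard decomposition when neither step is at the root (both in $V$, both in $M$, or one in each), which is routine but worth mentioning for completeness. The appeal to \Cref{fact:diamond} for the Random Descent consequence is exactly right.
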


\lemredi*

\begin{proof}
	Easy to prove by induction on the structure of terms.	
\end{proof}

\begin{fact}[Shape preservation of $\ii$-sequences]\label{fact:redi_shape}
	If $S$ is not an $\ii$-redex, and $S\redi^k N$ then no term in the sequence is an $\ii$-redex, and so $N$ has the same shape as $S$:  
	\begin{enumerate}
		\item 	$S=!(\lam x. Q)$ implies  $N=!(\lam x. N_Q)$. Moreover, $Q\redi^k N_Q$. 
		\item  $S=xP$ implies $N=xN_P$. Moreover, $P\redi^k N_P$.
		\item  $S=(\lam x. Q)P$ implies $N=(\lam x. N_Q)N_P$. Moreover, $Q\redi^{k_1} N_Q$,  $P\redi^{k_2} N_P$ and $k=k_1+k_2$. 
	\end{enumerate}
\end{fact}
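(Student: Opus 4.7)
The plan is to prove the fact by induction on $k$, with the key auxiliary invariant that the property of \emph{not being an $\ii$-redex} is preserved by $\redi$-steps. Once this invariant is in place, every step in the sequence must be a non-root step, so Fact \ref{fact:isteps} pushes the reduction into the immediate subterms and the shape claims follow immediately.

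First I would establish the invariant: if $S$ is not an $\ii$-redex and $S\redi T$, then $T$ is not an $\ii$-redex either. Since $S$ is not an $\ii$-redex, the step $S\redi T$ is not a root step, so by Fact \ref{fact:isteps} $S$ and $T$ share the same outer shape; the only delicate case is when $S = VP$ is an application but $V\neq \II$, and I must rule out $T = \II P'$. If the step is $V\redi V'$, then $V$ is not a variable (variables do not reduce), so $V = \lam x.V_0$ and $V' = \lam x.V_0'$ with $V_0\redi V_0'$; for $V' = \II = \lam x.\oc x$ one would need $V_0\redi \oc x$, which is forbidden by \Cref{lem:iotaToVar}. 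If instead the step is $P\redi P'$, then $V$ is untouched and $T = VP'$ still has $V\neq \II$. Iterating along the $k$-step sequence then yields the ``no term in the sequence is an $\ii$-redex'' clause.

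Next I would prove the three shape-preservation statements simultaneously by induction on $k$. The base case $k=0$ is trivial. For the inductive step, the first step $S\redi S_1$ is non-root by the invariant, so Fact \ref{fact:isteps} applies:
\begin{itemize}
\item In case (1), $S = \oc(\lam x.Q)$ forces $S_1 = \oc(\lam x.Q_1)$ with $Q\redi Q_1$ (items 2 and 3 of Fact \ref{fact:isteps}); $S_1$ is not an $\ii$-redex by the invariant, so the induction hypothesis applied to $S_1\redi^{k-1}N$ yields $N = \oc(\lam x.N_Q)$ and $Q_1\redi^{k-1}N_Q$, whence $Q\redi^k N_Q$.
\item In case (2), $S = xP$ and since $x$ does not reduce, item 4 of Fact \ref{fact:isteps} forces $S_1 = xP_1$ with $P\redi P_1$; the induction hypothesis then gives $N = xN_P$ with $P\redi^k N_P$.
\item In case (3), $S = (\lam x.Q)P$ and item 4 of Fact \ref{fact:isteps} gives either a step inside $\lam x.Q$ (whence $Q\redi Q_1$) or a step in $P$ (whence $P\redi P_1$). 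In either case $S_1$ is still an application with abstraction on the left and is not an $\ii$-redex, so the induction hypothesis produces $N = (\lam x.N_Q)N_P$ with the step counts summing to $k_1 + k_2 = k$.
\end{itemize}

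The main (and only) subtle point is the use of \Cref{lem:iotaToVar} to prevent an application $VP$ from reducing its head value $V$ into $\II$, which would otherwise break the invariant in case (3) and in the underlying argument. All remaining work is a straightforward case analysis driven by Fact \ref{fact:isteps}.
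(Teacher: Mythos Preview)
The paper states this result as a bare \textbf{Fact} without giving a proof, so there is nothing to compare against; your argument is the right one to supply. The two-phase structure---first prove that ``not an $\ii$-redex'' is an invariant of $\redi$, then induct on $k$ using the context decomposition---is exactly how the fact is meant to be discharged.

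Two small points worth tightening. First, your appeals to \Cref{fact:isteps} are slightly off: that fact is stated only for $\nsredx{\Rule}$ and $\nwredx{\Rule}$, not for arbitrary non-root steps. What you actually need is the direct analysis of the $\lc$ context grammar: a non-root step from $S=VP$ uses a context of shape $V\cc$ or $(\lambda x.\cc)P$, which immediately gives the decomposition you want (and \Cref{fact:shape} already covers the ``same outer shape'' claim). Second, your invariant argument says ``the only delicate case is $S=VP$ with $V\neq\II$'', but you should also dispatch the case $V=\II$ with $P=\oc W$ (so that $S$ is a $\betac$-redex, hence not an $\ii$-redex): here $\II$ has no subredex, so the step is in $P$, and since $P=\oc W$ is not an application the non-root step keeps the $\oc$-shape, giving $T=\II\,\oc W'$, again a $\betac$-redex. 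With these two adjustments the proof is complete.
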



\lemNorm*
\begin{proof}The proof is by induction on $M$.

	Assume $M$ is a is a $\sigma$-redex, \ie $M=V((\lam x.P)L)$ where $V$ is an abstraction.
	\begin{itemize}
		\item If  $M$ is also an $\ii$-redex, then $V= \II$, and:
		\begin{enumerate}
			\item If $P = \oc U$, then $M = \II ( (\lam x. \oc U) L) \redi (\lam x. \oc U) L \redi^{k-1} N$ and  $M\reds (\lam x. \II \oc U)L \redb (\lam x. !U) L$.
			\item  If $P \neq \oc U$, then  $M = \II((\lam x.P)L) \redi (\lam x.P)L \redi^{k-1} N$ and  $M\reds (\lam x. \II P)L \redi (\lam x.P)L $.
		\end{enumerate}
		\item Otherwise, if  $M= V(\II L)$, then 
		$M\redi VL\redi^{k-1} N$ and $M \reds (\lam z. V!z)L \redb VL$.   
		
		\item No other case is possible, because  $M = V((\lam x.P)L)$ not an $\ii$-redex implies (by \Cref{fact:redi_shape}) that $N=N_1N_2$, with 
		$N_1\in \abs$. If $(\lam x.P)\not =\II$, then  $N$ would be a $\sigma$-redex, because  again $N_2=N_2'N_2''$ with $N_2'\in \abs$  (by \Cref{fact:redi_shape}).
	\end{itemize}
	Assume  $M$ is not a $\sigma$-redex. 
	We examine the shape of $S$ and use \Cref{fact:redi_shape}.
	\begin{itemize}
		\item $M = \oc\lam x.Q$. We have  $N= \oc\lam x. N_Q$, $Q\redi^k N_Q$, where $Q$ is not $\sigma$-normal, and $N_Q$ is $\sigma\ii$-normal. We conclude by \ih.
		
		\item $M=\II P$.  We have  $\II P\redi P \redi^{k-1} N$. Since  $P$ is not $\sigma$-normal, we use the   \ih on $P\redi^{k-1} N$,
		obtaining that $P\redi N' \redi^{k-2} N$ and $P\reds P'\redx{\betac\ii} N'$. Therefore also $\II P\reds \II P'\redx{\betac\ii} \II N'\ired N' \ired^{k-1} N$.
		
		\item   $M=(\lam x. Q)P$ ($M$ is not an $\ii$-redex). We have  $N=(\lam x. N_Q)N_P$, where $N_Q$  and $N_P$ are $\sigma\ii$-normal. We distinguish two cases.
		\begin{itemize}
			\item If $Q$ is not $\sigma$-normal, we note that $Q\redi^{k_1} N_Q$, and conclude by \ih Indeed, by \ih, 
			we obtain  that  $Q\redi N_Q'\redi^{k_1-1} N_Q $,  and $Q\reds Q' \redx{\betac\ii} N_Q'$.   So,
			$(\lam x.Q)P\redi ( \lam x. N_Q')P\redi^{k_1-1} (\lam x. N_Q) P \redi^{k_2}(\lam x. N_Q) N_P$, and 
			$(\lam x. Q)P\reds (\lam x. Q')P \redx{\betac\ii}   (\lam x. N_Q')P$.
			\item If $P$ is not  $\sigma$-normal, we note that $P\redi^{k_2} N_P$, and conclude by \ih
			\qedhere
		\end{itemize}
		
	\end{itemize}
\end{proof}

\lemmastrongnormalizingidsigma*
\begin{proof}
	We define two sizes $\size{M}$ and $\sizeaux{M}$ for any term $M$.
	\begin{align*}
		\size{x} &= 1
		&
		\sizeaux{x} &= 1
		\\
		\size{\la{x}M} &= \size{M} + 1
		&
		\sizeaux{\la{x}M} &= \sizeaux{M} + \size{M}
		\\
		\size{VM} &= \size{V} + \size{M} 
		&
		\sizeaux{V{M}} &= \sizeaux{V} + \sizeaux{M} + 2\size{V}\size{M} 
		\\
		\size{\oc M} &= \size{M}
		&
		\sizeaux{\oc {M}} &= \sizeaux{M}
	\end{align*}
	
	Note that $\size{M} > 0$ and $\sizeaux{M} > 0$ for any term $M$.
	It easy to check that if $M \redid \cup \reds N$, then $(\size{N}, \sizeaux{N}) <_\textup{lex} (\size{M}, \sizeaux{M})$, where $<_\textup{lex}$ is the strict lexicographical order on $\Nat^2$.
	Indeed, if $M \redid N$, then $\size{M} > \size{N}$; and if $M \reds N$ then $\size{M} = \size{N}$ and $\sizeaux{M} > \sizeaux{N}$.
	The proof is by straightforward induction on $M$. 
	We show only the root-cases, the other cases follow from the \ih immediately.
	\begin{itemize}
		\item If $(\la{x}\oc x)M \Root{\id} M$ then $\size{(\la{x}\oc x)M} = \size{\la{x}\oc x} + \size{M} > \size{M}$.
		\item If $(\la{x}M)((\la{y}N)L) \Root{\sigma} (\la{y}(\la{x}M)N)L$ then clearly \\
		$\size{(\la{x}M)((\la{y}N)L)} = \size{(\la{y}(\la{x}M)N)L}$ and
	\end{itemize}
	\begin{small}
		\begin{align*}
			&\sizeaux{(\la{x}M)((\la{y}N)L)} 
			\\
			&= \sizeaux{\la{x}M} + \sizeaux{(\la{y}N)L} + 2 \size{\la{x}M} \size{(\la{y}N)L}  
			\\
			&= \sizeaux{\la{x}M} + \sizeaux{\la{y}N} + \sizeaux{L} + 2\size{\la{y}N}\size{L} + 2 \size{\la{x}M} \size{(\la{y}N)L} 
			\\
			&= \sizeaux{\la{x}M} + \sizeaux{N} + \size{N} + \sizeaux{L} + 2\size{N}\size{L} + 2\size{L}  + 2 \size{\la{x}M} \size{\la{y}N} + 2 \size{\la{x}M} \size{L} 
			\\
			&= \sizeaux{\la{x}M} + \sizeaux{N} + \size{N} + \sizeaux{L} + 2\size{N}\size{L} + 2\size{L} ´ + \underline{2 \size{\la{x}M}} + 2 \size{\la{x}M} \size{N} + 2 \size{\la{x}M} \size{L}  
			\\
			&> \sizeaux{\la{x}M} + \sizeaux{N} + 2\size{\la{x}M}\size{N}  + \underline{\size{\la{x}M}} + \size{N} + \sizeaux{L} + 2\size{\la{x}M}\size{L} + 2\size{N}\size{L} + 2\size{L} 
			\\
			&= \sizeaux{\la{x}M} + \sizeaux{N} + 2\size{\la{x}M}\size{N}  + \size{\la{x}M} + \size{N} + \sizeaux{L} + 2\size{(\la{x}M)N}\size{L} + 2\size{L} 
			\\
			&= \sizeaux{(\la{x}M)N} + \size{(\la{x}M)N} + \sizeaux{L} + 2\size{\la{y}(\la{x}M)N}\size{L} 
			\\
			&= \sizeaux{\la{y}(\la{x}M)N} + \sizeaux{L} + 2\size{\la{y}(\la{x}M)N}\size{L} 
			\\
			&= \sizeaux{(\la{y}(\la{x}M)N)L} 
			\qquad
			\qedhere
		\end{align*}
	\end{small}
\end{proof}

\section{Notational Equivalence between $\lc$ and  \cite{deLiguoroTreglia20}}\label{app:crbiso}

\newcommand{\lstar}{\lambda_\Bind}

Here we recall the calculus introduced in \cite{deLiguoroTreglia20} (see also our \Cref{sect:intro}), henceforth denoted by $\lstar$, and formalize that $\lstar$ is \emph{isomorphic} to the computational core $\lc$. 
In other words, $\lc$  (as defined in \Cref{sec:compcal}) is nothing but another presentation of $\lstar$, with just a different notation.

First, we recall the syntax of $\lstar$, with $\Unit$ and $\Bind$ operators:
\[\begin{array}{r@{\hspace{0.7cm}}rll@{\hspace{1cm}}l}
	\ValTerm^\Bind: & V, W & \Coloneqq & x \mid \lambda x.M & \mbox{(values)} \\ [1mm]
	\ComTerm^\Bind: & M,N & \Coloneqq & \Unit V \mid M \Bind V & \mbox{(computations)}
\end{array}\]

We set $\Term^\Bind \defeq \ValTerm^\Bind \cup \Com^\Bind$.
Contexts are defined in \Cref{sect:intro}.
Reductions in $\lstar$ are the contextual closures of the rules \eqref{eq:beta-c}, \eqref{eq:id}, and \eqref{eq:sigma} on p.~\pageref{eq:beta-c}, oriented left-to-right, giving rise to reductions $\redbc$, $\redid$, and $\reds$, respectively.
We set $\redlc \, = \, \redbc \cup \redid \cup \reds$.

Consider the translation $\trb{\cdot}$ from $\Term^\Bind$ to $\Term$, and conversely, the translation $\trc{\cdot}$ from $\Term$  to $\Term^\Bind$:

\begin{table}[h!]
	\begin{tabular}{l|ll}
		~                & $\trb{\cdot}:\Term^\Bind\to \Term$                                         & $\trc{\cdot}:\Term\to\Term^\Bind$                                                 \\\hline\hline   
		variables        & $ \trb x =x $                                                      & $ \trc x =x  $                                                                \\
		abstraction      & $ \trb {\lam x.P} = \lam x. \trb P $ & $ \trc {\lam x.M} = \lam x. \trc M    $                      \\
		returned values  & $ \trb {\Unit V} = \oc (\trbp V)  $        & $ \trc{\oc V} =  {\Unit \! (\trcp V)} $  \\
		bind/application & $ \trb {P\Bind V} = \trbp V  \trbp P  $             & $ \trc {V M}  = \trcp{M} \Bind \trcp{V}   $               \\
	\end{tabular}
\caption{Translations between $\lc$ and $\lstar$}
\end{table}

Essentially, translating terms of $\lstar$ in terms of the computational core $\lc$ rewrites $\Unit$ as $\oc$, and reverts the order in $\Bind$.

\begin{prop}\label{prop:ctbiso}
	The following holds:
	\begin{enumerate}
		\item $\trb{\trcp{M}} = M$ for every term $M$ in $\Term$;
		\item $\trc{\trbp{P}} = P$ for every term $P$ in $\Term^\Bind$;
		\item for any $\gamma \in \{\betac, \id, \sigma, \CCsym\}$, if $M \redc N$ in $\lc$, then $\trcp{M}\redc \trcp{N}$ in $\lstar$;
		\item for any $\gamma \in \{\betac, \id, \sigma, \CCsym\}$, if $P\redc Q$ in $\lstar$, then $\trbp{P}\redc \trbp{Q}$ in $\lc$.
	\end{enumerate}
\end{prop}
\begin{proof}
	Immediate by definition unfolding.
\end{proof}

\section{Computational versus Call-by-Value}
\label{subsect:translations-comp-cbv}
\SLV{}{\CF{Qui sembra  inutile cambiare la definizione di $\redbv$ cambiando la definizione di contesto. 
		Ma in ogni caso:\\ perche' questa def?	
		\begin{align*}
			\cc&\Coloneqq \hole{} \mid V \cc \mid \cc M \mid \lam x.\cc  &\qquad \text{Contexts}
		\end{align*}
		{non e' vero che se nel buco si mette una computazione si ottiene una computazione}
		{ Per esempio $\hole{}(xx)$ produce \RED{ $(xx)(xx)$ } \\}
		Mi aspetterei:
		\begin{align*}
			\cc&\Coloneqq \hole{} \mid V \cc \mid (\lam x.\cc) M \mid \lam x.\cc  &\qquad \text{Contexts}
		\end{align*}
	}
	\medskip
}
Here we formalize the relation between a fragment of the computational core $\lc$ and Plotkin's call-by-value (CbV, for short) $\lambda$-calculus \cite{Plotkin'75}.

The fragment of $\lc$ that includes just $\betac$ as only reduction rule, i.e. $(\ComTerm, \red_{\betac})$, is also isomorphic to the \emph{kernel} of Plotkin's CbV $\lambda$-calculus, 
which is the  restriction of $\redbv$ (see \Cref{sec:CbV}) to 
the set of terms  $ \ComTerm^{\textit{v}}\subseteq \Lambda $ defined as follows (note that $\ValTerm^{\,\textit{v}} \subseteq \ComTerm^{\textit{v}}$).
\[\begin{array}{r@{\hspace{0.7cm}}rll@{\hspace{1cm}}l}
	\ValTerm^{\,v}: & V,W &\Coloneqq & x\mid \lam x.M 
	\\
	\ComTerm^{\textit{v}}: & M,N,L &\Coloneqq & V \mid VM
\end{array}\]

\SLV{}{and the reduction $\redbv$ defined as the closure of the rule 
	\begin{align*}
		(\lambda x.M) V \Root{\betav} M\Subst{V}{x}
	\end{align*}
	under the contexts defined as 
	\begin{align*}
		\cc& \Coloneqq \hole{} \mid V \cc \mid \cc M \mid \lam x.\cc  &\qquad \text{Contexts}
	\end{align*}
}


To establish such an isomorphism,
consider the translation $\trb{\cdot}$ from $\Term$ to $\ComTerm^{\textit{v}}$, and conversely, the translation $\trc{\cdot}$ from $\ComTerm^{\textit{v}}$ to $\Term$. 

\begin{table}[h!]
	\begin{tabular}{l|ll}
		~                & $\trb{\cdot} \colon\Term\to \ComTerm^{\textit{v}}$                                         & $\trc{\cdot} \colon \ComTerm^{\textit{v}}\to\Term$                                                 \\\hline\hline  
		variables        & $ \trb x =x $                                                      & $ \trc x = \oc x  $                                                                \\
		abstraction      & $ \trb {\lam x.P} = \lam x. \trb P $ & $ \trc {\lam x.M} = \oc\lam x. \trc M    $                      \\
		returned values  & $ \trb {\oc V} =  \trbp V  $  &   \\
		bind/application & $ \trb {VP} = \trbp V  \trbp P  $             & $ \trc {V P}  =\begin{cases}
			x\trcp{P} \quad\quad\quad\quad \mbox{ if }V\eq x\\
			(\lam x.\trcp{Q})\trcp{P} \quad\mbox{ if }V\eq \lam x.Q
		\end{cases}   $               \\
	\end{tabular}
\caption{Translations between the computational core $\lc$ and the kernel of the Call-by-Value $\lam$-calculus}
\end{table}

Essentially, the translation $\trb{\cdot}$ simply forgets the operator $\oc$, and dually the translation $\trc{\cdot}$ adds a $\oc$ in front of each value that is not in the functional position of an application.
These two translations form an isomorphism.

\begin{prop}\label{prop:ctviso}\hfill
	\begin{enumerate}
		
		\item $\trb{\trcp{M}} = M$, for every term $M \in\ComTerm^{\textit{v}}$;
		\item $\trc{\trbp{P}}  = P$,  for every term $P \in \Term$;
		\item $M \redbv N$ implies $\trcp{M}\redbc \trcp{N}$, for every terms $M, N \in\ComTerm^{\textit{v}}$;
		\item $P \redbc Q$ implies $\trbp{P}\redbv \trbp{Q}$,  for every terms $P, Q \in \Term$.
	\end{enumerate}
\end{prop}

\begin{proof}
	Immediate by definition unfolding.
\end{proof}

Observe also that  the restriction of \emph{weak context}  to $ \ComTerm^{\textit{v}} $ give exactly the grammar defined in \Cref{sec:compcal}, and this  for all three weak contexts ($\leftc, \rightc, \ww$), which all collapse in the same shape.
Thus, \Cref{prop:ctviso} also holds when $\redbv$ and $\redbc$ are replaced by $\wredbv$ and $\wredbc$, respectively.
As a consequence, since $\wredbv$ is deterministic in $ \ComTerm^{\textit{v}}$ and $\redbv$ weak factorizes (\Cref{thm:factorization_CbV}.\ref{p:factorization_CbV-left}),
\begin{property}[Properties of $\betac$ and its  \emph{weak}  restriction]
	\label{fact:factorization-betac}
	 In $\lc$:
	\begin{itemize}
	 \item reduction $\wredbc$ is deterministic;
	
	 \item reduction $\redbc$ satisfies weak factorization: ~~ $\redbc^*  \  \subseteq  \ \wredbc^* \cdot \nwredx{\betac}^*$.
	\end{itemize}
\end{property}

\paragraph{Call-by-Value versus  its Kernel.}
\label{subsect:translations-cbv-kernel}

The CbV kernel---and so $(\ComTerm,\redbc)$---is as expressive as the CbV $\lambda$-calculus, as we discuss below. 
This result  was 
already shown by  Accattoli  \cite{Acc15}.\footnote{Precisely, Accattoli studies the relation between   the kernel calculus $\lambda_{vker}$ and the \emph{value substitution calculus} $\lambda_{vsub}$, \ie CbV and the kernel extended with explicit substitutions. 
	The syntax is slightly different, but not in an essential way.}

With respect  to its  \emph{kernel}, Plotkin's
CbV $\lambda$-calculus is more liberal in that application is unrestricted (left-hand side need not be a value).
The kernel has the same expressive power as CbV calculus, because 
the full syntax of Plotkin's CbV can be encoded into the restricted one the CbV kernel, and because the CbV kernel can simulate every reduction sequence of Plotkin's full CbV.

Formally, consider the translation $\trap{(\cdot)}$ from Plotkin's CbV $\lambda$-calculus to its kernel.
\begin{align*}
	\trap {(x)} &=x &  
	\trap {(\lam x.P)} &= \lam x. \trap P
	&
	\trap {(PQ)} &= 
	\begin{cases}
		\trap {P}  \trap {Q}  &\text{if $P$ is a value;} \\
		(\lambda x. x \trap{Q})\trap{P} &\text{otherwise}.
	\end{cases}
\end{align*}

\begin{prop}[Simulation of the CbV $\lambda$-calculus into its kernel]
	For every term $P$ in Plotkin's CbV  $\lambda$-calculus, if $P \redbv Q$ then $\trap{P} \redbv^+ \trap{Q}$ and $\trcp{{\trap{P}}} \redbc^+ \trcp{{\trap{Q}}}$.
\end{prop}

%
%
%


\newpage
\end{document}